\newtheorem{theorem}{Theorem}[section]
\newtheorem{corollary}[theorem]{Corollary}
\newtheorem{lemma}[theorem]{Lemma}
\newtheorem{proposition}[theorem]{Proposition}
\newtheorem{definition}[theorem]{Definition}
\newtheorem{claim}[theorem]{Claim}
\newtheorem{observation}[theorem]{Observation}
\newcommand{\ignore}[1]{}
\def \qed {\hspace*{0pt} \hfill {\quad \vrule height 1ex width 1ex depth 0pt}
 \medskip}
\def \Lyap {\rm L}
\newcommand{\R}{\ensuremath{\mathbb R}}
\newtheorem{fact}[theorem]{Fact}
\begin{document}

\title{Ascending Auctions and Walrasian Equilibrium}
\author{Oren Ben-Zwi\thanks{University of Vienna Faculty of Computer Science Vienna, Austria {\tt oren.ben-zwi@univie.ac.at}}
\and Ron Lavi\thanks{Technion -- Israel Institute of Technology  Faculty of Industrial Engineering and Management {\tt ronlavi@ie.technion.ac.il}}
\and Ilan Newman\thanks{University of Haifa Department of Computer Science {\tt ilan@cs.haifa.ac.il}}}

\maketitle

\begin{abstract}
We present a family of submodular valuation classes that generalizes gross substitute. We show that Walrasian equilibrium always exist for one class in this family, and there is a natural  ascending auction which finds it.  We prove some new structural properties on gross-substitute auctions which, in turn, show that the known ascending auctions for this class (Gul-Stacchetti and Ausbel) are, in fact, identical. We generalize these two auctions, and provide a simple proof that they terminate in a Walrasian equilibrium.
\end{abstract}


\section{Introduction}

Ascending auctions are an important auction format with many advantages such as visibility of price formation and gradual disclosure of valuations. It is harder for a dishonest auctioneer to manipulate the bids in such auctions, and it is easier to adjust such auctions to a setting with budget constraints. It is a popular auction format in practice, and its theoretical properties have been widely studied.

A significant part of the literature tries to characterize settings where the end outcome of the
ascending auction is a Walrasian equilibrium: this is an outcome where each bidder is being allocated her most preferred subset of items, under final prices, and, furthermore, all items are being allocated. Such an outcome is desirable as it provides some stability, and additionally it guarantees optimal social welfare. Kelso and Crawford~\cite{KelsoCr1982} define the class of ``gross substitutes'' (GS) valuations, a small subclass of submodular valuations that ensures the existence of a Walrasian equilibrium. They argue that a natural ascending auction terminates in a Walrasian equilibrium, when valuations are GS and bidders report true demands throughout the auction.

A complete treatment of ascending auctions for gross substitutes valuations is offered by Gul and Stacchetti~\cite{GulSt2000}. They give a formal, specific ascending auction and prove that it terminates in a Walrasian equilibrium when valuations are GS. They additionally show in a companion paper~\cite{GulSt1999} that GS is the unique maximal class of valuations that contains unit-demand valuations and guarantees the existence of a Walrasian equilibrium: for any non-GS valuation there exist unit-demand valuations such that a Walrasian equilibrium does not exist for these unit-demand valuations coupled with the given non-GS valuation.

In this paper we study whether it may be possible to construct an ascending auction that reaches a Walrasian equilibrium for submodular valuations that are not GS. This question is twofold: First, what are natural subclasses of submodular valuations for which a Walrasian equilibrium is guaranteed to exist? Second, for these valuation classes, can we reach a Walrasian equilibrium via an ascending auction? We should remark that given the existing literature on ascending auctions described above, this question may initially seem hopeless. After all, GS is the maximal class that ensures existence of Walrasian equilibrium. However, in this paper we demonstrate that this intuition is false. Our main result shows another natural subclass of submodular valuations that ensures the existence of a Walrasian equilibrium, and gives an ascending auction that finds this equilibrium outcome.

\vspace{2mm}

\noindent
{\bf Main Result: } There exists a subclass of submodular valuations
that is not GS and (still) guarantees the existence of a Walrasian equilibrium. Furthermore, there exists an ascending auction that always terminates in a Walrasian equilibrium, if all valuations belong to this class.

\vspace{3.5mm}

Recall that a unit-demand valuation assigns arbitrary values to singletons, and the value of every subset of items of cardinality at
least two is equal to the largest value of a singleton in this subset. The new class that we define is a ``unit demand with a
twist'', as follows. Fix some arbitrary number $M$. Then in our valuation class, the value of every subset of items of cardinality at least two is equal to $M$ (and we include all unit demand valuations to which this change does not violate monotonicity and
sub-modularity). This is a very simple class and it is not GS. Nevertheless, a Walrasian equilibrium always exists for any tuple of valuations taken from this class, and it can be obtained via an ascending auction.

Our ascending auction is a careful generalization of the Gul-Stacchetti auction. Although their auction is widely cited, the proof that they give is long and hard to follow. This has been acknowledged several times in the past, for example by Ausubel~\cite{Ausubel2005} who proposes a different auction formulation, with a different analysis. Developing tools and intuition for the main analysis of this paper yields basic structural results for gross substitute valuations, that shed new light on both these auctions and their proofs. First, we show that both these auctions are completely equivalent (while their description is significantly different). More specifically, we show that in every step, both auctions raise the prices of exactly the same items. Second, we give a class of ascending auctions for gross substitutes valuations that contains the Gul-Stacchetti auction as well as other (truly) alternative formats, and that terminate in a Walrasian equilibrium. Our analysis is 
 much simpler (and shorter) than that of Gul and Stacchetti. Since the Gul-Stacchetti auction is an important widely cited result, we feel that a new concise full proof is of independent theoretical interest.

This new family of ascending auctions can be viewed as a family of primal dual algorithms. Bikhchandani and Mamer~\cite{BikhchandaniMa1997} show that a sufficient and necessary condition for the existence of a Walrasian equilibrium is that there will be no integrality gap in a certain linear program, usually termed the ``winner determination linear program''. Ausubel explicitly uses the dual of this program to construct his auction, and by iteratively improving the dual solution he finds Walrasian prices. By incorporating part of his method into the proof of Gul and Stacchetti, we are able to both simplify the proof as well as generalize the result, showing a family of ascending auctions for gross substitutes.

We view our results as the beginning of a search for a full characterization of conditions that ensure both the existence of Walrasian equilibria in subclasses of submodular valuations, and the existence of ascending auctions that find these equilibria outcomes. As our results in this paper demonstrate, this is a complicated technical task. Still, relying on all that we have described above, we believe it is an important task.

\subsection{Additional Related Literature}

\cite{DGS86} were the first to formally define an ascending auction that terminates in a Walrasian equilibrium. They analyzed
the case of unit-demand valuations. \cite{AAT10} fully characterize the class of all ascending auctions that terminate in a Walrasian equilibrium, for unit-demand bidders. It has been later recognized by \cite{GulSt2000} that the DGS auction can be
generalized  to gross substitutes valuations, as detailed above.

The existence of a Walrasian equilibrium for non-GS valuation classes was barely studied. \cite{BikhchandaniMa1997} describe few cases of super-additive valuation classes, and conclude that the exploration of this issue is an important subject for future research. \cite{SY06} identify a class of valuations with complements that ensure the existence of a Walrasian equilibrium, and later provide an iterative auction (that increases and decreases prices) that finds a Walrasian equilibrium for their class of valuations~\cite{SY09}.

Several papers study strategic properties of ascending auctions, showing that truthful demand reporting is a Nash equilibrium in various types of ascending auctions. \cite{A04} shows this for the case of identical items, and more recently \cite{BVSV08} show this when items form a basis of a matroid or a polymatroid.

As mentioned above, it is usually easy to adjust ascending auctions to cases where bidders have budget constraints, as studied by~\cite{DLN08} and more recently by~\cite{Colini-BaldeschiHeLeSt2012,GMP12}.

Finally, we should mention that a strand of the literature that studies ascending auction with nonlinear bundle prices (where there is price for every subset of items), as initiated by Parkes~\cite{parkes99} and by Ausubel and Milgrom~\cite{AM00}.

\section{Preliminaries and Structural Results}\label{sec:prelim}
In the following we denote by $\R_+$ the set of non-negative reals.

A \emph{combinatorial auction} is a setting where a finite set of
items, $\Omega$, is to be
partitioned between a set of players. In what follows it is always
assumed that $|\Omega|=m$, and that there are $n$ players 
associated with valuations $v_i, ~ i=1, \ldots , n$, where $v_i :
2^{\Omega}\mapsto \R_+$. Namely,  for every set $S \subseteq \Omega$,  $v_i(S)$
measures how much the $i$th player favors the set of items $S$.
Valuation are assumed here to be monotone with respect to inclusion
(aka \emph{free disposal}), and that $v_i(\emptyset)=0$ for every
$i$. The auctioneer sets a price for each item. This is given by a
function $p: \Omega \mapsto \R_+$. We refer to $p$ also as a
price-vector. Having a price on single items, the price naturally
extends additively to subsets, namely, for $S \subseteq \Omega,~ p(S)
= \sum_{j \in S} p(j)$.

Having a set of valuations $\{v_i\}_1^n$ and a price vector $p$, the
\emph{utility} of player $i$ from a set $S \subseteq \Omega$,  denoted
by $u_{i,p}(S)$ is defined by $u_{i,p}(S) = v_i(S) - p(S)$. The
\emph{demand} of a player $i$ and price vector $p$ is a collection of
subsets of items of maximum utility. Namely, $D_i(p) = \{S|\ \forall S'\subseteq \Omega, u_{i,p}(S)\geq u_{i,p}(S')\}$. 

Occasionally we omit subscripts when there is no risk of confusion,
e.g., we use $u_i(S)$ instead of $u_{i,p}(S)$ when $p$ is fixed or
well defined.  We use $u_i$ or $u_{i,p}$ to denote the utility of
player $i$, which is its utility from a demand set. So $u_i = u_i(S)$ for some $S\in D_i(p)$ when the price vector is known to be $p$.

An \emph{allocation} of the items is a map $A: \{1, \ldots n\}
\mapsto 2^{\Omega}$, which we also denote by $(S_1,\ldots,
S_n)$. Namely, where $A(i)=S_i \subseteq \Omega$. The requirement is
that the sets $\{S_i\}_{1}^{n}$ are pairwise disjoint. 
We think of an allocation
as allocating each player $i$, the  items of the set $S_i$.
 $S_i$ may be empty, for some $i \in [n] = \{1,2,\ldots,n\}$.

\begin{definition} [Envy Free]
Given a price vector $p$, an allocation is \emph{envy free} w.r.t. $p$
if every player is allocated a demand set. A price vector is envy free if there exists an envy free allocation for it.
\end{definition}

As the name suggests, an envy-free allocation w.r.t. a price vector $p$
signifies an economical situation of 'equilibrium' in the sense that
under the corresponding  price $p$, each player is satisfied with its
allocated set.   Clearly, for every set of
valuations, there is an envy-free price vector and an envy-free
allocation for it - simply take $p$ large enough so that the demand
set of every player contains just the empty set. This fact, and the
discussion below motivates the following definition of a Walrasian
allocation.

A \emph{Walrasian allocation}~\cite{Walras1874} is an envy free allocation for
which every \emph{unallocated item}, has price zero. In other words,
the union of the allocated sets cover all items of positive price. 
 A price vector for which there exists a
Walrasian allocation is called a \emph{Walrasian price vector}. 

The existence of a Walrasian price vector is not guarantied for every
set of valuations. If it exists it is said to be Walrasian equilibrium
and the set of valuation is said to poses a
Walrasian equilibrium. There is another importance of the
Walrasian equilibrium - if it exists,  it is known to maximize the
so called \emph{social welfare} among all
allocations.  The social welfare is $\sum_{i\in
  [n]}{v_i(S_i)}$, the sum of the individual values of the sets that
are allocated. The maximum social welfare can be written as a solution to the
following \emph{winner determination} integer linear program (ILP), $P1$: 

\begin{align}
\max \{\sum_{i,S} v_i(S) \cdot x_{i,S}\}\\
s.t.\ \ 
 \forall{j\in \Omega},\ \ \  & \sum_{i,S|j\in S} x_{i,S}\leq 1\\
 \forall{i\in[n]},\ \ \ & \sum_{S} x_{i,S}\leq 1\\
 \forall{i\in [n],\ S\subseteq \Omega},\ \ \ &\ x_{i,S} \in \{0,1\}
\end{align}
If we relax integrality constraints of the program we get a linear program whose
dual, $P2$, is:
\begin{align}
\min \{\sum_{i}^n \pi_i + \sum_{j\in \Omega} p_j\}\\
s.t.\ \ 
 \forall{i\in [n],S\subseteq \Omega}, \ \ \  & p(S)+\pi_i \geq v_i(S)\\
 \forall{j\in \Omega,\ i\in [n]},\ \ \ & p_j,u_i \geq 0
\end{align}

Bikhchandani and Mamer~\cite{BikhchandaniMa1997} observed that a
Walrasian equilibrium exists iff the value of the maximum social
welfare equals the optimum in the problem $P2$. Namely the integrality
gap of the LP relaxation of  $P1$, is $1$. More over, in this case,
 the set of the optimal dual variables $\{p_j\}_{j \in \Omega}$ is a
 Walrasian price vector. 

By the dual constraints we can bound the dual variables $\pi_i$ from bellow by the players' utility $u_{i,p}$. Since we wish to minimize these values and the utility is defined by the prices, we switch the objective to finding a price vector that minimizes the following.
\begin{definition}[Lyapunov:]
$~\Lyap (p) = \sum_{i} u_{i,p} + \sum_j p_j$
\end{definition}

We usually consider valuations that are rationals. In this case, 
since the (integer) linear program $P1$ is invariant to scaling of
$v_i,~ i\in [n]$, and $P2$ is invariant to scaling of $v_i,~i\in [n]$ and  $p$ by
the same factor, we may assume that
$v_i, i~ \in [n],p,\pi$ are integers when convenient.

We look at the space of functions on $\Omega$ as ordered by the
domination order, namely, for ${p,q}: \Omega \mapsto \R_+$, $p \leq q$
if for every $j \in \Omega$, $p(j) \leq q(j)$. If $p \leq q$ we also
say that $p$ is dominated by $q$.

The \emph{gross substitute} class of valuation is the class in which a player never drops an item whose price was not increased in an ascending auction dynamics. Formally:

\begin{definition}[gross substitute $(GS)$]\label{def:gs}
For two price vectors $p,q$ and a set $S$, let $S^=(p,q) = \{j|\ j\in S, p(j)=q(j)\}$.
A valuation $v$ is \emph{gross substitute} if for every price vector
$p$ and  $S\in D(p)$, for every price vector $q \geq p$, $\exists S'\in
D(q)$ such that $S^=(p,q)\subseteq S'$. 
\end{definition}

In other words, if $S$ is a
demand set for $p$, and $q \geq p$ then there is a demand set for $q$
that contains all elements $i \in S$ for which $p(i)=q(i)$. 

It is known~\cite{GulSt2000} that when valuations are monotone the class gross substitute is equivalent to the class of \emph{single improvement} valuations which is defined by the following.
\begin{definition} [single improvement]
A valuation $v$ is in the class \emph{single improvement} if for every price vector $p$ and a set of items $S\notin D(p)$, there exists a set $T$, such that $u(S)<u(T)$, $|T\setminus S|\leq 1$ and $|S\setminus T|\leq 1$.
\end{definition}

The class of gross substitute valuations is important and has been
extensively studied. It can be showed that gross substitute valuations
are submodular (see formal definition below). It contains additive
valuations (in which every single item has a value, and the value of a
set it the sum of values of its items). For some references on gross
substitute valuations see \cite{Nicholson1998,Browning1999}. In the context of this study, the
importance of gross substitute valuations stems from the following
theorem, and the corresponding ascending-auctions of Gul-Stacchetti
\cite{GulSt2000} and Ausubel \cite{Ausubel2005}.

\begin{theorem}\emph{\cite{KelsoCr1982}}\label{thm:GS-wal}
If a set of valuations $\{v_i\}_1^n$ is gross substitute, then it
posses a Walrasian equilibrium. 
\end{theorem}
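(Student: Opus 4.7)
The plan is to prove existence constructively, by an ascending auction that uses the Lyapunov potential $\Lyap(p)$ as a monotone progress measure. By the Bikhchandani–Mamer characterization recalled just above the theorem, it suffices to exhibit a price vector $p^*$ at which (i) the LP relaxation of $P1$ has no integrality gap and (ii) there is an integral allocation in which every player receives a demand set and every item with $p^*_j>0$ is assigned. I will produce $p^*$ as the terminal price of an integer-step ascending process starting from $p^0=0$.

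First I would verify the basic analytic setup: each utility $u_{i,p} = \max_S\bigl(v_i(S)-p(S)\bigr)$ is a pointwise maximum of affine functions of $p$, hence convex; thus $\Lyap(p)=\sum_i u_{i,p}+\sum_j p_j$ is convex and, since $\sum_j p_j$ dominates as $\|p\|\to\infty$ while each $u_{i,p}\ge 0$, coercive. Assuming integer valuations, $\Lyap$ takes integer values on integer prices and attains its minimum on a bounded integer box. Next, at any integer $p$, I would call a set $T\subseteq\Omega$ \emph{overdemanded} if strictly more than $|T|$ players have the property that every $S\in D_i(p)$ forces at least one item outside $\Omega\setminus T$ to be replaced by an item of $T$ in order for $i$ to be satisfied; equivalently, $T$ witnesses a failure of the Hall condition in the bipartite demand graph. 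If no overdemanded $T$ exists at $p$, Hall's theorem produces a system of disjoint demand sets, i.e., an envy-free allocation.

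The central step is showing that whenever $p$ is not Walrasian, raising the prices of a minimal overdemanded $T$ by one strictly decreases $\Lyap$. The gross substitute hypothesis enters exactly here, through the single improvement reformulation of $GS$ stated just above: for each player $i$ with $D_i(p)\cap 2^T\neq\emptyset$, the GS property guarantees that after the unit increase on $T$, $i$ still admits a demand set contained in $T$, so the drop in $u_i$ is at most $1$, while the aggregate utility loss is strictly less than $|T|$ because the multiplicity of players affected exceeds $|T|$ by the overdemand condition. The net change in $\Lyap$ is therefore strictly negative; combined with integrality and coercivity, the process halts at some $p^*$ where no overdemanded set exists, and the Hall-theoretic step of the previous paragraph yields the desired allocation. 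Items with $p^*_j>0$ must be covered because the last raise that brought $p_j$ above zero targeted a tight overdemanded set, whose members are forced into the matching by Hall's condition at the terminal step.

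The main obstacle, and the place where GS is truly used, is the descent lemma: one must argue that a \emph{minimal} overdemanded $T$ is the correct object to raise and that GS prevents "leakage" of demand from $T$ to $\Omega\setminus T$ after the raise, so that the Lyapunov accounting stays exact across the step. This is the structural content that the single-improvement equivalence encodes, and it is also what the later sections of the paper exploit to unify the Gul–Stacchetti and Ausubel constructions; in my plan it functions as the single hard technical lemma, with the rest of the argument being convex-analytic bookkeeping and a standard Hall's-theorem extraction.
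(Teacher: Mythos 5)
First, note that the paper does not actually prove this theorem: it is cited from Kelso--Crawford, and Section 3.3 explicitly says the existence of a Walrasian equilibrium for gross substitutes ``is not proved here''; indeed the paper's auction analysis (Propositions \ref{upper:bound} and \ref{prop:gen_dom}) \emph{assumes} an optimal Walrasian $p^*$ exists. So your attempt is a genuine proof attempt where the paper has none, but as written it has gaps at exactly the hard points. (i) Your extraction step ``no overdemanded set $\Rightarrow$ envy-free allocation by Hall's theorem'' is valid only for unit-demand valuations, where demand sets are singletons. For general GS valuations a demanded set has many items, and the statement that $f_p(S)\leq 0$ for all $S$ yields a system of pairwise disjoint demand sets is a matroid union / Rado-type theorem, which needs the matroidity of $D_i^*(p)$ (Lemma \ref{Structural0}); the paper itself flags this as ``heavier machinery from Matroid theory'' in the remark after its Optimality proposition. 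Plain Hall does not suffice, and this is precisely the ``single hard technical lemma'' your plan treats as routine. (ii) Your Lyapunov accounting is inverted: by Lemma \ref{Structural2}, $\Lyap(p+1_T)-\Lyap(p)=|T|-\sum_i f_{i,p}(T)$, so a strict decrease requires the aggregate utility drop to \emph{exceed} $|T|$, not be ``strictly less than $|T|$''; also GS guarantees retention of items whose prices did not rise, not that a player ``still admits a demand set contained in $T$''. The correct overdemand notion must sum intersection multiplicities $f_{i,p}(T)$, not count affected players.

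(iii) The most serious gap is the terminal step. Your descent lemma is quantified as ``whenever $p$ is not Walrasian,'' but that is false: a price vector can be envy-free yet non-Walrasian because positively priced items go unallocated, and then no overdemanded set exists and $\Lyap$ cannot be decreased by raising prices. So termination of your ascent at a price with no overdemanded set does not by itself give Walrasian-ness, and your closing sentence (``the last raise that brought $p_j$ above zero targeted a tight overdemanded set, whose members are forced into the matching'') is an assertion, not an argument. The paper closes the analogous step by comparing the terminal price with an optimal Walrasian $p^*$ (Claim \ref{cl:739}, the Optimality proposition, and Proposition \ref{prop:gen_dom}), which presupposes the very existence you are trying to prove and would be circular here. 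To make your route work you need genuinely additional content: either the Gul--Stacchetti-style invariant that the ascent never overshoots the (yet-to-be-shown) minimal equilibrium price proved without assuming existence, or an LP-duality/complementary-slackness argument at the Lyapunov minimizer combined with the matroid union theorem, or discrete-convexity (M$^\natural$-concavity) tools. As it stands, the proposal is a plausible outline of the known constructive proof, but the two steps it labels as bookkeeping are where the theorem actually lives.
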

Moreover, there is an additional nice feature to it.  Finding a
Walrasian equilibrium is computationally easy if one has the access to
the full representation of the valuations, or even a demand oracle of the valuations, by solving the LP
relaxation to $P1$ above. However, this full knowledge may be too
large to handle, or, and more important, in a real economic situation,
not available. It turns out, however, that there is a natural way to
compute a Walrasian equilibrium for gross substitute valuations, along
with a Walrasian allocation. This is done by a process that is called an 
``ascending-auction''. 
 In an ascending-auction the auctioneer starts from the
price vector $p_0=0$ and at each step $t$, finds whether $p_t$ is
Walrasian, or increases the prices of some elements to obtain the next
price vector $p_{t+1}$. Thus such a process makes sense also as real
economical process.

The notion of ascending-auctions is not well defined. One can
potentially start with $p_0 =0$, compute the Walrasian equilibrium and
Walrasian price vector $p^*$ and just set $p_1 = p^*$ on the very next
step. A natural ascending-auction should be such that the next step
can be decided from the current step with a very limited knowledge of
the individual valuations (e.g., at step $t$ only access to $D_i(p_t),
i=1, \ldots n$ should be used), and the increase in prices should be
'natural' namely, can be shown to be required in order to arrive at an
envy-free allocation. It turns out that for gross substitute valuations
such natural auctions have been proposed \cite{GulSt2000, Ausubel2005}. This will be one of the focuses of the study here.

We end this section with the definition of the class $GGS(k,M)$. 
\begin{definition}[truncation]
  For an integer $k$ and $M \in \R_+$, A function \emph{(}valuation\emph{)} $v:
  2^\Omega \mapsto \R_+$ is a $(k,M)$-truncation of a valuation $u$ if
  $v(S) = u(S)$ for every $S \subseteq \Omega$ for which $|S| < k$, and $v(S) =
  M$ for if $|S| \geq k$.
\end{definition}  
\begin{definition}[$GGS(k,M)$]
Let $k$ be a natural number and $M \in \R_+$. A valuation
$v:2^{\Omega} \mapsto \R_+$ is $GGS(k,M)$, if there is a
gross substitute valuation $g$ on $\Omega$ such that $v$ is the
$(k,M)$-truncation of $g$.

In addition, in order that $v$ will be monotone submodular we further
demand that for every $S$, $g(S) \leq M$ if $|S|< k$ and $g(S)
\geq M$ if $|S|
\geq k.$ 
\end{definition}

It is quite straightforward that $GGS(k,M)$ is submodular for every
$k$ and $M$ as restricted above. Furthermore, it is obvious that any
gross substitute
valuation $g$ is in $GGS(k,M)$ for some $k$ and $M$ - simply, define $k=m+1$,
and $M = \max \{g(S), S \subseteq \Omega \}$. Hence, in this respect,
the union of $GGS(k,M)$ over all $k$ and suitable $M$'s contains
gross substitute. $GGS(1,M)$ is just the constant valuation
function. $GGS(2,M)$ is already quite interesting. A valuation in
this class is arbitrary on singletons, and is $M$ on all pairs. 
  We will show in what follows that $GGS(2,M)$ is not
  gross substitute. Moreover, the 'blocking' type obstacle that is used
  for gross substitute auctions to show  that a price vector is not
  equilibrium (not optimal for the dual LP, $P2$ above), does not hold
  anymore.
 
\subsection{Basic Structural Results for Gross Substitute}
We present here some structural results on gross substitute
valuations. These results will turn useful for the simple proof of an
ascending auction for gross substitute. Some of these results are new,
others might have been known. We include proofs of these properties
in the appendix, for completeness.

For a price vector $p$ we denote by $D_i^*(p)$ the collection of
minimal demand sets, that is, $$D_i^*(p) = \{S|\ S\in D_i(p),\ \forall
T\subset S,\ u_{i,p}(T) < u_{i,p}(S)\}$$

\begin{lemma} [Matroidity of GS]
\label{Structural0}
If $v_i$ is gross substitute then for each $p$, $D^*_i(p)$ forms the
bases of a Matroid. 
\end{lemma}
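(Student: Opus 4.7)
I would prove this by first establishing the $M^\natural$-exchange property for the full demand correspondence $D_i(p)$ and then deducing both matroid axioms for $D_i^*(p)$. Throughout I drop the subscript $i$.

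The $M^\natural$-exchange statement I aim for is: for every $S, T \in D(p)$ and every $x \in S \setminus T$, either $S \setminus \{x\} \in D(p)$, or there exists $y \in T \setminus S$ with $(S \setminus \{x\}) \cup \{y\} \in D(p)$. The method is price perturbation. I raise the price of $x$ by a small $\epsilon > 0$ to obtain $p'$; since $x \notin T$, $T$ keeps utility $u_p(S)$ at $p'$, while every set containing $x$ loses $\epsilon$, so no demand set at $p'$ contains $x$ and the maximum utility at $p'$ is still $u_p(S)$. Applying GS to $S$ yields $S^\star \in D(p')$ with $S \setminus \{x\} \subseteq S^\star$, and since $x \notin S^\star$, in fact $S^\star \in D(p)$ as well. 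If $S^\star = S \setminus \{x\}$ the first alternative holds; otherwise $S^\star$ strictly contains $S \setminus \{x\}$, with additional elements in $\Omega \setminus S$, and the main obstacle is forcing such an element into $T \setminus S$ rather than into $\Omega \setminus (S \cup T)$. I would handle this by augmenting the perturbation with a small simultaneous price increase on items in $\Omega \setminus (S \cup T)$, which traps the new demand set inside $(S \cup T) \setminus \{x\}$, and then using the single improvement reformulation of GS together with submodularity of $v$ to peel off redundant elements until only a singleton exchange into $T \setminus S$ remains.

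Granted the $M^\natural$-exchange, the lemma is then straightforward. Take $S, T \in D^*(p)$ and $x \in S \setminus T$; minimality of $S$ rules out $S \setminus \{x\} \in D(p)$, so some $y \in T \setminus S$ gives $S' := (S \setminus \{x\}) \cup \{y\} \in D(p)$. To show $S' \in D^*(p)$, I suppose for contradiction that $R \subsetneq S'$ lies in $D(p)$. If $y \notin R$, then $R \subseteq S \setminus \{x\} \subsetneq S$, already contradicting minimality of $S$. Otherwise $y \in R$ with $R \setminus \{y\} \subsetneq S \setminus \{x\}$; applying $M^\natural$-exchange to the pair $(R, S)$ with the unique element $y \in R \setminus S$ produces, in each of its two alternatives, a proper subset of $S$ inside $D(p)$ (explicitly, either $R \setminus \{y\}$, or $S$ with the non-empty set $(S \setminus \{x\}) \setminus (R \setminus \{y\})$ removed up to at most one swap), again contradicting minimality of $S$. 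Hence $S' \in D^*(p)$, establishing the basis exchange property. Equal cardinality of the sets in $D^*(p)$ then follows by iterating exchanges to turn any two differently-sized minimum demand sets into a properly nested pair, contradicting minimality. Therefore $D^*(p)$ is the family of bases of a matroid on $\Omega$.
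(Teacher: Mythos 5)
Your overall architecture (an exchange property for the full demand correspondence $D(p)$, then the deduction of the basis-exchange and equal-cardinality axioms for $D^*(p)$) is sound, and the second half of your argument is correct as written. The trapping device is also fine: raising the price of $x$ and of all items outside $S\cup T$ (the paper does the same thing by setting prices outside $D_1\cup D_2$ to $\infty$) does force every demand set at the perturbed price to avoid $x$ and the outside items and to remain a demand set at $p$, since $T$ anchors the maximum utility.

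The genuine gap is the ``peeling'' step, which is exactly the crux of the lemma. Applying the GS property to $S$ only yields some $S^\star\in D(p)$ with $S\setminus\{x\}\subseteq S^\star\subseteq (S\cup T)\setminus\{x\}$; the added set $Z=S^\star\setminus(S\setminus\{x\})$ may contain many elements, and nothing you say reduces it to at most one. ``Single improvement plus submodularity'' is not an argument here: submodularity alone cannot do it (one can build monotone submodular utilities whose only maximizers are $A\cup\{x\}$ and $A\cup\{y_1,y_2\}$, so removing elements of $Z$ one at a time need not stay in $D(p)$), and the single improvement property, applied to a non-demanded set, only produces a strictly better set, not a maximizer, so iterating it can wander away from containing $S\setminus\{x\}$. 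The paper closes precisely this step by a different move: with integral valuations and prices, raise $p(x)$ by a full unit (after pricing items outside the two sets at $\infty$), so that the displaced set has utility exactly one below the optimum, and then apply single improvement \emph{to the displaced set itself}; strict improvement plus integrality forces the improving set to be an actual demand set, and the one-swap bound, together with the exclusion of $x$ and of outside items, pins it down to $S\setminus\{x\}$ or $(S\setminus\{x\})\cup\{y\}$ with $y\in T\setminus S$. Your $\epsilon$-perturbation can be repaired the same way (take $\epsilon<1$ and note that utilities of sets avoiding $x$ and the outside items are integers), but as written the proposal asserts rather than proves the single-element exchange, so the proof is incomplete at its central point.
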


\begin{definition}\label{def:13}
For a set $S$ and a player $i$, let $f_{i,p}(S) = \min_{D\in D^*_i(p)}{\{|D\cap S|\}}.$

For a set $S\subseteq \Omega$, we denote by $1_S$ the vector which has value $1$ for every element $j\in S$ and $0$ otherwise. When $S=\{j\}$ is a singleton we use also $1_j$.
\end{definition}

\begin{lemma}
\label{Structural2}
Let $p$ be an integer price vector, $S$ a set of items and $p' = p +
1_S$. For a gross substitute player $i$, $u_{i,p} = u_{i,p'} + f_{i,p}(S)$.
\end{lemma}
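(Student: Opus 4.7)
The plan is to prove the two directions of $u_{i,p}=u_{i,p'}+f_{i,p}(S)$ separately. The easy direction $u_{i,p'}\ge u_{i,p}-f_{i,p}(S)$ needs only a witness: I would pick $D\in D^*_i(p)$ with $|D\cap S|=f_{i,p}(S)$; at price $p'$, $D$ has utility $u_{i,p}(D)-|D\cap S|=u_{i,p}-f_{i,p}(S)$, so $u_{i,p'}\ge u_{i,p}-f_{i,p}(S)$. No GS property is needed here.

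For the reverse direction I would argue by induction on $|S|$. In the base case $|S|=1$, say $S=\{j\}$: if some $D\in D^*_i(p)$ avoids $j$ then $f_{i,p}(\{j\})=0$ and $u_{i,p'}\le u_{i,p}$ suffices; otherwise every minimal demand set at $p$ contains $j$, and since each demand set contains a minimal one, every $T\in D_i(p)$ contains $j$ and has $u_{i,p'}(T)=u_{i,p}-1$. Every $T\notin D_i(p)$ has $u_{i,p}(T)\le u_{i,p}-1$ by integrality of $v_i$ and $p$, so $u_{i,p'}(T)\le u_{i,p}(T)\le u_{i,p}-1$. Thus $u_{i,p'}\le u_{i,p}-f_{i,p}(\{j\})$.

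For the inductive step I would fix any $j\in S$ and set $p''=p+1_j$. Combining the base case with the induction hypothesis applied to $(p'',S\setminus\{j\})$ yields $u_{i,p}-u_{i,p'}=f_{i,p}(\{j\})+f_{i,p''}(S\setminus\{j\})$, so it suffices to prove $f_{i,p}(\{j\})+f_{i,p''}(S\setminus\{j\})\ge f_{i,p}(S)$. When $f_{i,p}(\{j\})=0$, the base-case analysis also identifies $D^*_i(p'')=\{D\in D^*_i(p):j\notin D\}$, making $f_{i,p''}(S\setminus\{j\})$ a minimum over a subset of $D^*_i(p)$ and hence automatically $\ge f_{i,p}(S)$.

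The interesting case is $f_{i,p}(\{j\})=1$, where every demand set at $p$ contains $j$. Here I would take a minimizer $D''\in D^*_i(p'')$ of $|D''\cap(S\setminus\{j\})|$ and exhibit $D\in D^*_i(p)$ with $|D\cap S|\le 1+f_{i,p''}(S\setminus\{j\})$. If $j\in D''$ then $u_{i,p}(D'')=u_{i,p}$ so $D''\in D_i(p)$, and any minimal $D\subseteq D''$ in $D_i(p)$ lies in $D^*_i(p)$ and satisfies $|D\cap S|\le|D''\cap S|=f_{i,p''}(S\setminus\{j\})+1$. If $j\notin D''$ then $u_{i,p}(D'')=u_{i,p}-1$ so $D''\notin D_i(p)$; invoking the single-improvement reformulation of GS yields $T$ with $u_{i,p}(T)>u_{i,p}(D'')$ and $|T\triangle D''|\le 2$. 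Integrality forces $T\in D_i(p)$, hence $j\in T$; combined with $j\notin D''$ and $|T\setminus D''|\le 1$, $|D''\setminus T|\le 1$, the only possibilities are $T=D''\cup\{j\}$ or $T=(D''\setminus\{y\})\cup\{j\}$, both giving $|T\cap S|\le|D''\cap S|+1$, and a minimal $D\subseteq T$ in $D^*_i(p)$ closes the bound. The main obstacle is exactly this last sub-case: transferring a minimizer at $p''$ that is \emph{not} a demand set at $p$ over to a minimal demand set at $p$ with controlled intersection with $S$, which is precisely where the GS hypothesis (through single improvement) is essential.
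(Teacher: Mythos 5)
Your proof is correct, and it takes a genuinely different route from the paper's. The paper's argument for the nontrivial direction $u_{i,p'} \leq u_{i,p} - f_{i,p}(S)$ goes through an auxiliary ``utility distance'' claim: for a gross substitute valuation, any set whose utility is $l$ below optimal can be enlarged by at most $l$ elements to contain an optimal demand set. Given a demand set $S'$ for $p'$, this is used to pull $S'$ back to a nearby demand set for $p$ whose intersection with $S$ is then shown to be smaller than $f_{i,p}(S)$, a contradiction. You instead telescope over $S$ one element at a time, reducing to the identity $u_{i,p}-u_{i,p'}=f_{i,p}(\{j\})+f_{i,p''}(S\setminus\{j\})$ and then proving the superadditivity $f_{i,p}(\{j\})+f_{i,p''}(S\setminus\{j\})\ge f_{i,p}(S)$ directly, invoking single improvement exactly once, at the point where a minimizing demand set at $p''$ that is not a demand set at $p$ must be repaired. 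Both arguments ultimately rest on single improvement and integrality; yours is self-contained and in the base case re-derives the relevant part of the paper's structural lemma on how $D^*_i$ evolves under a unit price increase, whereas the paper's one-shot approach outsources the work to the utility-distance claim and avoids the induction on $|S|$. One small presentational caveat: your assertion that, when $f_{i,p}(\{j\})=0$, one has $D^*_i(p'')=\{D\in D^*_i(p): j\notin D\}$ is true but is not literally ``identified by the base-case analysis'' as written; it needs the extra (easy) observation that any demand set at $p''$ avoids $j$ and already had optimal utility at $p$, so it is a demand set at $p$, and minimality then transfers. This is a one-line gap, not a flaw.
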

\begin{definition}
For two price vectors $p, q \in \R_+^n$, $\max(p,q)$ is the vector that in every
coordinate $i\in [n]$, has $\max \{p(i), q(i)\}$. Similarly $\min
(p,q)$ is defined. 
\end{definition}

\begin{lemma} [Lyapunov's sub-modularity]\emph{~\cite{Ausubel2005}}
\label{LSM}
For gross substitute valuations, the Lyapunov is 'submodular'
w.r.t. price vectors and the $\min /~ \max$ operations.
That is, $$\Lyap (\max(p,q)) + \Lyap (\min(p,q)) \leq \Lyap (p) + \Lyap
(q).$$
\end{lemma}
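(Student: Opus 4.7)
The plan is to split $\Lyap$ and observe that the $\sum_j p_j$ term cancels, reduce to per-player submodularity of the indirect utility at integer prices, invoke Topkis' local characterization of lattice submodularity, rewrite the local pairwise inequality using Lemma~\ref{Structural2} as a coloop-preservation statement, and finally use gross substitutes together with Lemma~\ref{Structural0} to verify the latter.

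In more detail: since $\max(p,q)_j + \min(p,q)_j = p_j + q_j$ coordinatewise, the modular term $\sum_j p_j$ in the Lyapunov contributes identically to both sides, so it suffices to prove that for each player $i$ the indirect utility $u_{i,\cdot}$ is lattice submodular,
$$u_{i,p} + u_{i,q} \geq u_{i,\max(p,q)} + u_{i,\min(p,q)}.$$
Using continuity of $u_{i,\cdot}$ in $p$ together with the scaling invariance noted earlier in the section, I would reduce to integer $p, q$. For integer prices, Topkis' standard equivalence turns lattice submodularity on $\mathbb{Z}_+^m$ into the pairwise local inequality
$$u_{i, p+1_j} + u_{i, p+1_k} \geq u_{i, p+1_j+1_k} + u_{i, p}, \quad j \neq k,$$
which, after applying Lemma~\ref{Structural2} (with singleton $S$) to both sides, becomes $f_{i, p+1_j}(\{k\}) \geq f_{i, p}(\{k\})$. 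Since $f_{i,p}(\{k\}) \in \{0,1\}$, the only real content is: \emph{if $k$ lies in every minimal demand set at $p$, then $k$ lies in every minimal demand set at $p + 1_j$}.

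The hard part is this coloop-preservation statement, and it is where the gross substitute hypothesis is genuinely used. By Lemma~\ref{Structural0} the minimal demand sets form the bases of a matroid $M_p$, and the hypothesis is that $k$ is a coloop of $M_p$. Fix any minimal demand $D^\ast$ at $p$; by hypothesis $k \in D^\ast$. I would apply GS to the price change $p \mapsto p + 1_j$ to produce a demand $D'$ at $p+1_j$ with $D^\ast \setminus \{j\} \subseteq D'$, so $k \in D'$ since $j \neq k$. To upgrade this from ``$k$ lies in some demand'' to ``$k$ lies in every basis of $M_{p+1_j}$'', I would case-split on $f_{i,p}(\{j\}) \in \{0,1\}$: when $f_{i,p}(\{j\}) = 0$, the demand sets at $p+1_j$ are exactly the demand sets at $p$ avoiding $j$, which identifies $M_{p+1_j}$ with the deletion $M_p \setminus \{j\}$; when $f_{i,p}(\{j\}) = 1$, a parallel but more delicate argument using the single-improvement characterization of GS identifies $M_{p+1_j}$ essentially with the contraction $M_p / \{j\}$. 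Both matroid operations preserve coloops disjoint from $\{j\}$, so $k$ remains a coloop of $M_{p+1_j}$, yielding the local claim. Real prices then follow by continuity of both sides in $p, q$.
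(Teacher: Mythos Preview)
Your approach is essentially the paper's: both reduce lattice submodularity to the local decreasing-marginal inequality and then establish the per-player version via exactly the coloop-preservation statement you isolate, namely that if $k$ lies in every minimal demand set at $p$ then it still does at $p+1_j$. The paper packages this last step as Lemma~\ref{Structural1} (proved from single improvement) and argues at the level of the full Lyapunov by counting players for which $x$ is a coloop, whereas you decompose per player first via Lemma~\ref{Structural2}; these are cosmetic differences.

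One correction to your case analysis: when $f_{i,p}(\{j\})=1$, the matroid $M_{p+1_j}$ is \emph{not} in general the contraction $M_p/\{j\}$. Lemma~\ref{Structural1} shows there is a second possibility in which the rank does not drop: all old bases (still containing $j$) remain, and some new bases of the form $B\cup\{j'\}\setminus\{j\}$ with $B\in D^*(p)$ are adjoined. Your conclusion is unaffected, since $k\in B$ and $k\neq j$ force $k$ into every such new base as well, but the intermediate matroid identification should be stated as in Lemma~\ref{Structural1} rather than as a contraction.
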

\section{Auctions for Gross Substitute}\label{sec:GS}
We review here the ascending auctions of Gul-Stacchetti
\cite{GulSt2000} and Ausubel \cite{Ausubel2005}. The original proofs for the correctness of
these auctions where long and complicated. We give here a
short proof for this auction.  We also show that although the auctions
in \cite{GulSt2000} and \cite{Ausubel2005} are different in
description, these two auctions are actually the same.

In this section we assume that all valuation are gross substitute
integer functions.
\subsection{Witnesses for non existence of an allocation}\label{sec:3.1}
In the following set of definitions the intention is to
isolate a witness showing that a price vector $p$ does not support an
envy-free allocation. Recall the definition of $f_{i,p}(S)$ from
Definition \ref{def:13}.
\begin{definition}\emph{\cite{GulSt2000}}
Let $p$ be a price vector.
\begin{itemize}
\item For a set $S$, let $f_p(S) = (\sum_{i\in [n]}f_{i,p}(S)) - |S|$.
\item Let $O^*_p\subseteq \Omega$ be
  a set such that $\forall S,\ f_p(S)\leq f_p(O^*_p)$ and $\forall
  S\subset O^*_p,\ f_p(S) < f_p(O^*_p)$. Namely, $O^*_p$ is the set of
  maximum $f_p$ and that is minimal in this respect.

If for all sets $S$, $f_p(S) \leq 0$ we define $O^*_p= \emptyset$.

\end{itemize}
\end{definition}
Again, when the price vector is known we omit the subscript and use
$f_i(S)$, $f(S)$ and $O^*$. The proof of the next observation is in the Appendix.

\begin{observation}\label{obs:obstacle}
If $\exists S$ such that $f(S)>0$ then there is no envy-free
allocation. $\qed$
\end{observation}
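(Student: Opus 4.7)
The plan is to assume the contrapositive: suppose an envy-free allocation $(S_1,\ldots,S_n)$ does exist with respect to $p$, and show that $f_p(S) \leq 0$ for every $S \subseteq \Omega$. Given the definitions, this amounts to showing $\sum_i f_{i,p}(S) \leq |S|$ for every $S$.

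The main ingredients are a simple counting bound using disjointness of the allocation, plus the observation that every demand set contains a minimal demand set. More precisely, fix an arbitrary $S \subseteq \Omega$. For each player $i$, the allocated set $S_i$ lies in $D_i(p)$. Since the power set of $\Omega$ is finite, I can select a minimal-by-inclusion subset $T_i \subseteq S_i$ with $u_{i,p}(T_i) = u_{i,p}(S_i)$; by construction $T_i \in D_i^*(p)$. Then $|S_i \cap S| \geq |T_i \cap S| \geq f_{i,p}(S)$, where the last inequality is just the defining minimum of $f_{i,p}(S)$ over $D_i^*(p)$.

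Summing over $i$ and using the fact that the $S_i$'s are pairwise disjoint (so $\sum_i |S_i \cap S| = |\bigcup_i (S_i \cap S)| \leq |S|$) yields
\[
\sum_{i \in [n]} f_{i,p}(S) \; \leq \; \sum_{i \in [n]} |S_i \cap S| \; \leq \; |S|,
\]
that is, $f_p(S) \leq 0$ for every $S$. Taking the contrapositive gives the observation.

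I do not expect any real obstacle here; the only subtlety is remembering that $f_{i,p}$ is defined over minimal demand sets $D_i^*(p)$ rather than all of $D_i(p)$, which is precisely why the step of replacing $S_i$ by a minimal subset $T_i \in D_i^*(p)$ with $T_i \subseteq S_i$ is needed. Nothing beyond monotonicity of set inclusion, finiteness, and the definitions of $D_i(p)$ and $D_i^*(p)$ is required; gross substitutability is not used for this observation.
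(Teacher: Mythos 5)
Your proof is correct and is essentially the paper's own argument: count, using disjointness of the allocated sets, that an envy-free allocation must place at least $\sum_i f_{i,p}(S)$ items inside $S$, which forces $f_p(S)\leq 0$. The only difference is that you spell out explicitly the (true, and worth noting) step that every allocated demand set $S_i\in D_i(p)$ contains a minimal demand set $T_i\in D_i^*(p)$, which the paper's proof uses implicitly.
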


Hence, in view of Observation \ref{obs:obstacle}, if no allocation
exists due to a set $S$ with $f(S) > 0$, then $O^*_p$ is the
extremal obstacle for $p$ to support an envy-free allocation. In
particular, the elements in $O^*_p$ are {\em over demanded} in the sense
that in every possible allocation w.r.t $p$, the number of elements in
$O^*_p$ that need to be allocated is more than it size. It thus make
sense to increase the price of these elements, so that they will be
less demanded. This is just what the Gul-Stacchetti auction does. In the
proof of correctness, one will need to also show that when no obstacle
exists, then there is indeed an envy-free allocation, and moreover,
that this is in addition a Walrasian allocation (namely, that every
positive priced item is allocated).  


We state here an additional property (see appendix for proof) of gross substitute valuations.
\begin{lemma} [Monotonicity]
\label{mon}
Let $p' = p + 1_j$, then $\forall S$ for which $ j\notin S,\ f_{p'}(S) \geq f_p(S)$
\end{lemma}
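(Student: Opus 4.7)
The plan is to reduce the statement to a per-player inequality. Since $|S|$ appears identically in $f_p(S)$ and $f_{p'}(S)$, it suffices to prove $f_{i,p'}(S) \ge f_{i,p}(S)$ for every player $i$; summing over $i$ then gives the lemma. Throughout I would use Lemma~\ref{Structural2} applied to the singleton $\{j\}$, which says $u_{i,p}-u_{i,p'}=f_{i,p}(\{j\})\in\{0,1\}$, together with the single-improvement characterization of gross substitute.

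Fix a player $i$ and pick $D'\in D_i^*(p')$ with $|D'\cap S|=f_{i,p'}(S)$. The aim is to produce some $D\in D_i^*(p)$ with $|D\cap S|\le|D'\cap S|$, since this yields $f_{i,p}(S)\le|D\cap S|\le|D'\cap S|=f_{i,p'}(S)$. Split on whether $j\in D'$. If $j\in D'$, then $u_{i,p}(D')=u_{i,p'}(D')+1=u_{i,p'}+1$, and combining with $u_{i,p}=u_{i,p'}+f_{i,p}(\{j\})$ forces $f_{i,p}(\{j\})=1$ and $D'\in D_i(p)$. Symmetrically, if $j\notin D'$ and $f_{i,p}(\{j\})=0$, then $u_{i,p}(D')=u_{i,p'}=u_{i,p}$, so $D'\in D_i(p)$ again. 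In either easy case, any minimal demand subset $D\subseteq D'$ lies in $D_i^*(p)$ and inherits $|D\cap S|\le|D'\cap S|$.

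The main obstacle is the remaining case, $j\notin D'$ with $f_{i,p}(\{j\})=1$, where $u_{i,p}(D')=u_{i,p}-1$ and so $D'\notin D_i(p)$. Here I would invoke single improvement to obtain $T$ with $|T\setminus D'|\le 1$, $|D'\setminus T|\le 1$, and $u_{i,p}(T)>u_{i,p}(D')$, which forces $u_{i,p}(T)=u_{i,p}$ and $T\in D_i(p)$. If $j\notin T$, then $p$ and $p'$ agree on $T$, giving $u_{i,p'}(T)=u_{i,p}>u_{i,p'}$ — a contradiction. Hence $j\in T$, and since $j\notin D'$ this forces $T\setminus D'=\{j\}$, so $T=(D'\setminus X)\cup\{j\}$ for some $X\subseteq D'$ with $|X|\le 1$. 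Because $j\notin S$, we get $|T\cap S|\le|D'\cap S|$, and taking any minimal demand subset $D\subseteq T$ with $D\in D_i^*(p)$ completes the step. Everything else is bookkeeping around Lemma~\ref{Structural2}; the real content of the argument is using single improvement to migrate from the $p'$-optimal set $D'$ to a $p$-optimal set $T$ that picks up $j$ while dropping at most one element of $D'$, which is exactly why $j\notin S$ keeps the intersection with $S$ from growing.
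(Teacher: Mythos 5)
Your proof is correct, but it takes a genuinely different route than the paper's. The paper proves the lemma by invoking Lemma~\ref{Structural1}, which classifies how the minimal demand set $D^*_i(p)$ transforms into $D^*_i(p+1_j)$: either bases not containing $j$ survive, or all bases drop $j$, or new bases of the form $B\cup\{j'\}\setminus\{j\}$ appear. In each of those three cases it is a one-line observation that $\min_{D}|D\cap S|$ cannot decrease because $j\notin S$. That argument runs \emph{forward}, from $p$ to $p'$, and factors the combinatorics through the matroid-structural characterization. You instead run \emph{backward}: you start from a witness $D'\in D^*_i(p')$ achieving $f_{i,p'}(S)$ and manufacture a competitor $D\in D^*_i(p)$ with $|D\cap S|\le|D'\cap S|$, using only Lemma~\ref{Structural2} applied to $\{j\}$ (so $u_{i,p}-u_{i,p'}=f_{i,p}(\{j\})\in\{0,1\}$) plus a single application of the single-improvement property in the one nontrivial subcase ($j\notin D'$, $f_{i,p}(\{j\})=1$), where you show the improving set $T$ must pick up $j$ and drop at most one element of $D'$, so $|T\cap S|\le|D'\cap S|$ because $j\notin S$. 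Your version is more self-contained --- it does not require Lemma~\ref{Structural1} at all --- while the paper's is shorter once Lemma~\ref{Structural1} is already on hand and makes the underlying matroid picture more visible. Both rely on integrality, which the appendix assumes throughout, and your per-player reduction ($f_p(S)\ge f_{p'}(S)$ follows from $f_{i,p}(S)\le f_{i,p'}(S)$ for each $i$ since $|S|$ cancels) is exactly the right first step.
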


\subsection{Equivalence}

Although Gul and Stacchetti's auction is ascending while Ausubel's can
ascend or descend, we will show their equivalence. 
Namely, if we start with the $0$ price vector, then at
every step the same set of items is chosen for a price
increase in both auctions. 
We next describe both auctions formally.

\vspace{3.5mm}

\textbf{Gul and Stacchetti's auction:}
\begin{itemize}
	\item Start with $p=0$
	\item	While $f_p(O^*_p) \neq \emptyset$ increase $p(j)$ by $1$ for all $j\in O^*$
\end{itemize}

\vspace{3.5mm}

\textbf{Ausubel's auction:}
\begin{itemize}
	\item Start with $p$
	\item Let $\Lyap (\cdot)$ denote the Lyapunov function. While $\exists S,S'$ such that $\Lyap (p+1_S-1_{S'})<
          \Lyap (p)$ find $S^*,S'^*$ such that $\forall T,T',\ \Lyap
          (p+1_{S^*}-1_{S'^*})\leq \Lyap (p+1_T-1_{T'})$ and $\forall
          T\subset S^*,\ T'\subset S'^*$ it holds that $\Lyap
          (p+1_{S^*}-1_{S'^*})< \Lyap (p+1_T-1_{T'})$. Increase $p(j)\ by\ 1$ for all $j\in S^*$ and decrease it by $1$ for all $j\in S'^*$
\end{itemize}

We assume in what follows that the auction
has a Walrasian equilibrium (as every gross substitute does). Note that for Ausubel's auction, if $S,S'$ as above exist it
immediately shows that $p$ cannot be optimal in the dual LP
$P2$. If such $S,S'$ are found, Ausbel's auction improves the Lyapunov
by the suggested change in price. Hence, the auction
is a dual-auction. The actual $S^*, S'^*$ that are chosen are these
that maximize the change in the dual cost $\Lyap$. 
 
It was proved by Ausubel that if the current price vector is not
higher than equilibrium, this process will not decrease any
price. Hence the auction can be described in a simpler way. Define the
\emph{minimal minimizer} set $S^* (=S_p^*)$ to be such that $\forall
S, L(p+1_{S^*})\leq L(p+1_S)$ and if for a set $S\neq S^*$,
$L(p+1_{S^*}) = L(p+1_S)$ then $|S^*| \leq |S|$. Clearly, a minimal
minimizer always exists, but furthermore, Ausubel provides also a
proof of it being unique.\\

\textbf{Ausubel's ascending auction:}
\begin{itemize}
	\item Start with $p=0$
	\item	While $S^* \neq \emptyset$ increase $p(j)$ by $1$ for all $j\in S^*$
\end{itemize}

What we show next (proof in appendix) is that the set $S^*$ in Ausbel's auction collides
with $O^*$ in Gul-Stacchetti's.
\begin{corollary}\label{cor:equiv} \emph{(}of Lemma \ref{Structural2}\emph{)}
The two algorithms collide on gross substitute valuation.
\end{corollary}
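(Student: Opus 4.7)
The plan is to show that both $O^*_p$ and $S^*_p$ coincide with the unique subset-minimal maximizer of $f_p(\cdot)$, by first rewriting Ausubel's Lyapunov change as (minus) the Gul--Stacchetti quantity $f_p$, and then exploiting Lemma \ref{LSM} to give a lattice structure on the set of minimizers.

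First, I would unfold $L(p+1_S)$ using Lemma \ref{Structural2}. Summing the identity $u_{i,p} = u_{i,p+1_S} + f_{i,p}(S)$ over all players $i$ and noting that $\sum_j(p+1_S)_j = \sum_j p_j + |S|$, one obtains
\[
L(p+1_S) \;=\; \sum_i\bigl(u_{i,p} - f_{i,p}(S)\bigr) + \sum_j p_j + |S| \;=\; L(p) - \Bigl(\sum_i f_{i,p}(S) - |S|\Bigr) \;=\; L(p) - f_p(S).
\]
In particular, a set $S$ minimizes $S\mapsto L(p+1_S)$ iff it maximizes $f_p(S)$. Note also that the boundary case $O^*_p = \emptyset$ (when $f_p(S)\le 0$ for all $S$) corresponds exactly to $L(p+1_S)\ge L(p) = L(p+1_\emptyset)$ for all $S$, so $S^*_p = \emptyset$ in precisely the same situation.

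Next, I would apply Lemma \ref{LSM} to the pair $p+1_{S_1},\; p+1_{S_2}$, whose coordinate-wise $\max$ and $\min$ are $p+1_{S_1\cup S_2}$ and $p+1_{S_1\cap S_2}$. This gives
\[
L(p+1_{S_1\cup S_2}) + L(p+1_{S_1\cap S_2}) \;\leq\; L(p+1_{S_1}) + L(p+1_{S_2}).
\]
If $S_1,S_2$ are both minimizers of $L(p+1_\cdot)$, the right-hand side equals $2\min_S L(p+1_S)$, so equality is forced and both $S_1\cup S_2$ and $S_1\cap S_2$ are minimizers as well. Hence the collection $\mathcal{M}_p$ of minimizers is closed under union and intersection; in particular, $T_0 := \bigcap_{T\in\mathcal{M}_p} T$ is itself in $\mathcal{M}_p$ and is its unique subset-minimal element.

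Finally, I would identify $T_0$ with both $O^*_p$ and $S^*_p$. By step~1, $\mathcal{M}_p$ is exactly the family of $f_p$-maximizers, so its subset-minimal element $T_0$ satisfies the defining properties of $O^*_p$ in Gul--Stacchetti. For Ausubel, every minimizer $T\in\mathcal{M}_p$ satisfies $T_0\subseteq T$, hence $|T_0|\le|T|$ with equality only when $T=T_0$; thus $T_0$ is also the unique minimum-cardinality minimizer of $L(p+1_\cdot)$, i.e.\ $T_0 = S^*_p$. Therefore $O^*_p = T_0 = S^*_p$ at every price vector $p$ reached by either auction, which is the corollary.

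I do not foresee a substantive obstacle: the whole argument rests on the identity $L(p+1_S) - L(p) = -f_p(S)$, after which Lemma \ref{LSM} mechanically delivers the lattice structure of the minimizer set and the two alternative "minimality" notions (subset-minimal vs.\ minimum-cardinality) collapse to the same element by containment.
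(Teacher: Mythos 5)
Your proof is correct, and its first half is exactly the paper's route: the appendix proof of Corollary \ref{cor:equiv} likewise rests on summing Lemma \ref{Structural2} over players to get $\Lyap(p+1_S)=\Lyap(p)-f_p(S)$, so that minimizing $\Lyap(p+1_{\cdot})$ is the same as maximizing $f_p(\cdot)$. Where you genuinely diverge is in reconciling the two minimality notions. The paper argues by contradiction: if $O^*_p\neq S^*_p$, a strict inequality between $\Lyap(p+1_{S^*})$ and $\Lyap(p+1_{O^*})$ contradicts the maximality of $f_p(O^*)$, and in the tie case it imports Ausubel's uniqueness of the minimal minimizer to force $|S^*|<|O^*|$ and then appeals to the ``minimality'' of $O^*_p$. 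You instead apply Lemma \ref{LSM} to $p+1_{S_1},p+1_{S_2}$ to show the optimizer family is closed under union and intersection, so the intersection $T_0$ of all optimizers is itself an optimizer and is simultaneously the unique subset-minimal $f_p$-maximizer and the unique minimum-cardinality Lyapunov minimizer. This buys you a self-contained argument: you do not need Ausubel's uniqueness of $S^*$ (you re-derive it), you get for free that $O^*_p$ is well defined (unique), and you handle cleanly the equality case, which is the weakest step in the paper's version since $O^*_p$ is defined by subset-minimality rather than cardinality-minimality, so $|S^*|<|O^*|$ alone does not immediately contradict its definition. The paper's two-case contradiction is shorter given the imported uniqueness; your lattice argument is slightly longer but tighter and proves a bit more.
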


We end this section with the remark that  the equivalence shown
above is in direct analog to the divisible goods
scenario. For auctions in the divisible good scenario a similar Lyapunov is used
to identify the set of prices to be changed. The reason is that its sub-gradient at $p$ equals the excess supply, which, in
turn, is the 
analog to the discrete function $f(\cdot)$ (or $-f(\cdot)$ to be
precise). See~\cite{Varian1981} for divisible good combinatorial
auctions. 

\subsection{A finer ascending auction for Gross Substitute}
We conclude this section with a 'finer' version of the Gul-Stacchetti
auction. The idea is the same,
the only difference is that we go by finer steps, namely, at each step
we increase the price of a single element. We also include a full
proof that it reaches the optimal equilibrium. For this we use the
properties that we have stated in Sections
\ref{sec:prelim} and \ref{sec:3.1} with the additional fact that
gross substitute valuations posses Walrasian equilibrium (which is not
proved here).

\vspace{3.5mm}

\textbf{Auction:}
\begin{itemize}
	\item Start with $p=0$
	\item	While $O^* \neq \emptyset$ increase $p(j)$ by $1$ for some $j\in O^*$
\end{itemize}
In order to prove that this algorithm finds an optimal price vector we
will show that it never
goes above any optimal price vector, and when it stops the dual objective value is at-most
the value of the optimal price dual objective. We assume that there
is an optimal Walrasian equilibrium.\\
Let $p^*$ be an optimal price vector, $p_t$ the current price vector and $T$ the last step of the algorithm, that is, we need to show $p^* = p_T$.
The next proposition, which its proof is in the appendix, shows exactly that.

\begin{proposition}\label{upper:bound}
 For every  $t \leq T,$ $p_t\leq p^*$.
\end{proposition}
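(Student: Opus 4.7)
The plan is induction on $t$. The base case $p_0 = 0 \leq p^*$ is immediate. For the inductive step, assume $p_t \leq p^*$. The auction picks some $j \in O^*_{p_t}$ and sets $p_{t+1} = p_t + 1_j$; since prices are integer, it suffices to show $p_t(j) < p^*(j)$ for \emph{every} $j \in O^*_{p_t}$.

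Suppose toward contradiction that $J := \{j \in O^*_{p_t} : p_t(j) = p^*(j)\}$ is nonempty. I would apply Lemma~\ref{LSM} (Lyapunov submodularity) to the pair of price vectors $p_t + 1_{O^*_{p_t}}$ and $p^*$. A coordinatewise check, using $p_t \leq p^*$ and the definition of $J$, yields
\begin{align*}
\max(p_t + 1_{O^*_{p_t}},\, p^*) &= p^* + 1_J,\\
\min(p_t + 1_{O^*_{p_t}},\, p^*) &= p_t + 1_{O^*_{p_t} \setminus J}.
\end{align*}
Next, I would derive from Lemma~\ref{Structural2} the identity $\Lyap(q + 1_S) = \Lyap(q) - f_q(S)$, by summing $u_{i,q} = u_{i,q+1_S} + f_{i,q}(S)$ over $i$ and folding in the extra $|S|$ contribution to the total price. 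Plugging this into
$$\Lyap(p^*+1_J) + \Lyap(p_t + 1_{O^*_{p_t}\setminus J}) \leq \Lyap(p_t + 1_{O^*_{p_t}}) + \Lyap(p^*)$$
and cancelling $\Lyap(p_t)$ and $\Lyap(p^*)$ gives
$$f_{p_t}(O^*_{p_t}) \leq f_{p^*}(J) + f_{p_t}(O^*_{p_t}\setminus J).$$

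Since $p^*$ is Walrasian it supports an envy-free allocation, so by the contrapositive of Observation~\ref{obs:obstacle} we have $f_{p^*}(J) \leq 0$, giving $f_{p_t}(O^*_{p_t}) \leq f_{p_t}(O^*_{p_t}\setminus J)$. But $J \neq \emptyset$ makes $O^*_{p_t}\setminus J$ a proper subset of $O^*_{p_t}$, contradicting the minimality clause in the definition of $O^*_{p_t}$, which requires strict inequality on every proper subset. This closes the induction. The main obstacle, I expect, is guessing the right pair to feed into Lyapunov submodularity; in hindsight $(p_t + 1_{O^*_{p_t}},\, p^*)$ is forced by two considerations: $p_t + 1_{O^*_{p_t}}$ is a \emph{strict} Lyapunov descent from $p_t$ (because $f_{p_t}(O^*_{p_t}) > 0$), and the $\min/\max$ operations automatically split $O^*_{p_t}$ along the ``bad'' set $J$ in precisely the form that lets the minimality of $O^*_{p_t}$ bite.
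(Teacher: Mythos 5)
Your proof is correct and follows essentially the same route as the paper: Lyapunov submodularity (Lemma~\ref{LSM}) applied to the pair $p_t+1_{O^*_{p_t}}$ and $p^*$, combined with the identity $\Lyap(q+1_S)=\Lyap(q)-f_q(S)$ from Lemma~\ref{Structural2} and the minimality/maximality of $O^*$, with your $J$ playing the role of the paper's $O^=$. The only (cosmetic) differences are that you certify $f_{p^*}(J)\leq 0$ via envy-freeness and Observation~\ref{obs:obstacle} rather than via $\Lyap(p^*)\leq\Lyap(p^*+1_{O^=})$, and you absorb the paper's separate case $O^==O^*$ (handled there by Lemma~\ref{mon}) into the same submodularity computation, which slightly streamlines the argument.
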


\begin{claim}
  \label{cl:739}
If for an integral price $p$, $f_p(S) \leq 0$ for some $S$, then
$\Lyap(p+1_S) \geq \Lyap(p)$
\end{claim}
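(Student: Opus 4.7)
The plan is to prove this by a direct algebraic computation that expresses $\Lyap(p + 1_S)$ in terms of $\Lyap(p)$ and $f_p(S)$, using Lemma \ref{Structural2} (which relates individual utilities before and after raising prices on a set) together with the definitions of $\Lyap$ and $f_p$.

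Concretely, I would proceed as follows. First, unfold the definition of the Lyapunov function on the price vector $p' = p + 1_S$:
\[
\Lyap(p + 1_S) \;=\; \sum_{i} u_{i, p+1_S} \;+\; \sum_{j} (p + 1_S)(j).
\]
The second sum is immediately $\sum_j p(j) + |S|$. For the first sum, I would invoke Lemma \ref{Structural2}, which gives $u_{i,p} = u_{i,p+1_S} + f_{i,p}(S)$ for every gross substitute player $i$ (this requires $p$ to be integral, which is part of the hypothesis), so $u_{i, p+1_S} = u_{i,p} - f_{i,p}(S)$. Summing over $i$ yields
\[
\sum_i u_{i, p+1_S} \;=\; \sum_i u_{i,p} \;-\; \sum_i f_{i,p}(S).
\]

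Putting these together and using $f_p(S) = \sum_i f_{i,p}(S) - |S|$:
\[
\Lyap(p + 1_S) \;=\; \sum_i u_{i,p} - \sum_i f_{i,p}(S) + \sum_j p(j) + |S| \;=\; \Lyap(p) - f_p(S).
\]
The conclusion is then immediate: if $f_p(S) \leq 0$ then $-f_p(S) \geq 0$, so $\Lyap(p + 1_S) \geq \Lyap(p)$, as desired.

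There is essentially no obstacle here: the claim is a one-line consequence once Lemma \ref{Structural2} is applied, and the main thing to check is that all hypotheses of that lemma are met (integrality of $p$, gross substitutability of each $v_i$, which is the standing assumption of Section 3). The identity $\Lyap(p + 1_S) - \Lyap(p) = -f_p(S)$ is in fact the key reason Ausubel's and Gul--Stacchetti's auctions coincide, and it is worth stating this identity explicitly in the proof since it will likely be reused in the termination/optimality argument that follows.
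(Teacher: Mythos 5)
Your proof is correct and matches the paper's approach: the paper simply notes that the claim is an immediate corollary of Lemma~\ref{Structural2}, and your computation spells out exactly that corollary, yielding the identity $\Lyap(p+1_S) - \Lyap(p) = -f_p(S)$ and hence the claim. Stating the identity explicitly, as you do, is a useful addition since (as you observe) it is the algebraic link between the $f_p$-based description of the Gul--Stacchetti auction and the Lyapunov-based description of Ausubel's.
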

\begin{proof}
This is an immediate corollary of lemma~\ref{Structural2}.
\ignore{ 
 Indeed let $p,S$ as above and $p' = p+1_S$. For the $i$th player
  $u_{i,p'} \geq u_{i,p} - f_{i,p}(S)$. This is so as the demand set
  $D \in D_p(i)$ for which $D \cap S = f_{i,p}(S)$ has decreased its
  utility by $f_{i,p}(S)$ while there might be another set $D' \notin
  D_p(i)$ that decreased its utility by less and becomes a member of
  $D_{p'}(i)$. Hence the difference in the Lyapunov is $\Lyap(p') -
  \Lyap(p) = \sum_i (u_{i,p'} - u_{i,p}) + \sum_{x \in \Omega} p'(x) -
  p(x) \geq |S| -\sum_i f_{i,p}  = -f(S) \geq 0$. }
\end{proof}

\begin{proposition} [Optimality]
$\Lyap (p_T) \leq \Lyap (p^*)$.
\end{proposition}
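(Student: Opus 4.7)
My plan is to prove $\Lyap(p_T) \leq \Lyap(p^*)$ and combine it with the reverse (weak-duality) inequality $\Lyap(p_T) \geq \Lyap(p^*)$ — which is automatic, since $\Lyap(p)$ is the dual LP objective minimized over $\pi$ with $p$ held fixed, and $p^*$ is dual-optimal — to conclude that $\Lyap(p_T) = \Lyap(p^*)$, i.e., that $p_T$ itself is optimal. This is the ``when it stops the dual objective value is at-most the value of the optimal price dual objective'' half that is announced in the text just before the proposition.

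The concrete approach is to move from $p_T$ up to $p^*$ through a distinguished chain of single-set increments. By Proposition~\ref{upper:bound}, $\delta := p^* - p_T$ is a non-negative integer vector, and I take the level-set decomposition $\delta = \sum_{k=1}^{K} \mathbf{1}_{S_k}$ with $S_k := \{j : \delta(j) \geq k\}$ and $K := \max_j \delta(j)$, so that $S_1 \supseteq S_2 \supseteq \cdots \supseteq S_K$. Setting $q_k := p_T + \sum_{i \leq k} \mathbf{1}_{S_i}$ gives $q_0 = p_T$, $q_K = p^*$, and the telescoping identity coming from Lemma~\ref{Structural2} reads
\[
\Lyap(p^*) - \Lyap(p_T) \;=\; \sum_{k=1}^{K}\bigl(\Lyap(q_k) - \Lyap(q_{k-1})\bigr) \;=\; -\sum_{k=1}^{K} f_{q_{k-1}}(S_k).
\]
So it suffices to show $f_{q_{k-1}}(S_k) \leq 0$ for every $k$, and then summing yields $\Lyap(p^*) \geq \Lyap(p_T)$.

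For $k=1$ this is exactly the termination condition $O^*_{p_T} = \emptyset$, i.e. $f_{p_T}(S) \leq 0$ for all $S$, which is the hypothesis behind Claim~\ref{cl:739}. The step I expect to be the main obstacle is $k \geq 2$: by construction $q_{k-1}$ agrees with $p^*$ outside $S_k$ and is strictly smaller than $p^*$ on $S_k$, so the increments separating $q_{k-1}$ from $p^*$ raise prices only of items \emph{inside} $S_k$. The direct monotonicity statement of Lemma~\ref{mon} concerns price raises on items \emph{outside} the evaluated set and therefore does not transfer $f_{p^*}(S_k) \leq 0$ to $f_{q_{k-1}}(S_k) \leq 0$ on its own. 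My plan is to push this through by combining (i)~the Lyapunov sub-modularity of Lemma~\ref{LSM}, which forces $f_{q_{k-1}}$ to be a supermodular set function so its values on a chain of nested subsets are tightly controlled, (ii)~the matroid structure of $D^*_{i,p_T}$ (Lemma~\ref{Structural0}), and (iii)~the Walrasian property of $p^*$, to compare $f_{q_{k-1}}(S_k)$ with $f_{p^*}(S_k)$ along the chain $S_K \subseteq \cdots \subseteq S_k$. A parallel, primal-side route — which I would use to cross-check — is to exhibit an envy-free allocation $B_1,\dots,B_n$ at $p_T$ with $\bigcup_i B_i \supseteq \{j : p_T(j) > 0\}$; existence of the $B_i$ as disjoint minimal demand sets is matroid union applied to $f_{p_T}(S) \leq 0$, and given the covering property the identity $\Lyap(p_T) = \sum_i v_i(B_i) + p_T(\Omega \setminus \bigcup_i B_i) = \sum_i v_i(B_i) \leq \max\mathrm{SW} = \Lyap(p^*)$ finishes the proof immediately.
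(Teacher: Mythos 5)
Your chain construction is exactly the paper's argument in disguise: the paper adds $1_{O^<}$ repeatedly, and the successive sets $\{j: p_t(j)<p^*(j)\}$ are precisely your nested level sets $S_1\supseteq S_2\supseteq\cdots$, with Claim~\ref{cl:739} (i.e.\ Lemma~\ref{Structural2}) giving the telescoping. So the skeleton is fine. The problem is that you stop exactly where the real work is. You correctly observe that for $k\geq 2$ neither the termination condition nor Lemma~\ref{mon} gives $f_{q_{k-1}}(S_k)\leq 0$, but your proposed fix is only a plan, and the ingredients you list do not obviously deliver it. What is needed is the statement the paper actually asserts at this point: for a gross substitute player, raising prices by $+1$ uniformly on a set $B$ cannot increase $f_{i}(S)$ for any $S\subseteq B$; applied with $B=S_1,\dots,S_{k-1}$ (each of which contains $S_k$) this transports $f_{p_T}(S_k)\leq 0$ up the chain to $q_{k-1}$. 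In terms of indirect utilities this is the inequality $u_i(p)+u_i(p+1_B+1_S)\geq u_i(p+1_B)+u_i(p+1_S)$ for $S\subseteq B$, and note that the four price vectors involved are not related by coordinatewise $\min/\max$, so the lattice submodularity of Lemma~\ref{LSM} does not apply to them; in fact the only consequence of Lemma~\ref{LSM} in this neighborhood is Lemma~\ref{mon}, which points in the \emph{wrong} direction. Your item (iii) cannot substitute for it either: you would want to push $f_{p^*}(S_k)\leq 0$ \emph{down} from $p^*$ to $q_{k-1}$, but $q_{k-1}$ differs from $p^*$ only by lower prices \emph{inside} $S_k$, and lowering prices inside a set can certainly increase $f(S_k)$ (already for unit demand). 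So the missing step is the crux of the proof, not a detail, and the tools you name do not close it; one genuinely needs the stronger convexity-type property of $u_i$ (or a direct combinatorial argument about how $D^*_i$ moves when all prices in $B\supseteq S$ rise by the same amount).

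Your backup ``primal'' route is the alternative the paper itself mentions right after the proposition and declines to execute: deduce an envy-free allocation at $p_T$ from $f_{p_T}\leq 0$ via matroid union and then invoke Proposition~\ref{prop:gen_dom}. As written you again assert the hard part (``is matroid union applied to $f_{p_T}(S)\leq 0$'') without an argument, and you additionally demand that the allocation cover all positively priced items, which is more than you need (Proposition~\ref{prop:gen_dom} only requires an envy-free allocation together with $p_T\leq p^*$ from Proposition~\ref{upper:bound}) and would itself require justification. So neither route, as proposed, constitutes a proof.
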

\begin{proof}
By proposition \ref{upper:bound} $p_T\leq p^*$. Let $O^< = \{j|~
p_T(j)<p^*(j)\}$ be the
elements that have smaller price in $p_T$ than in $p^*$. 
since $T$ is the last step, $f_T(S) \leq 0$ for every set $S$,
and in particular for $f_T(O^<) \leq 0$. Hence by Claim \ref{cl:739} $\Lyap (p_T + 1_{O^<}) \geq \Lyap (p_T)$.

Now, if $p_T +   1_{O^<} = p^*$ this is in
contradiction with the optimality (and minimality) of $p^*$. Otherwise
if $p' = p_T +   1_{O^<} \neq p^*$, the same argument
applies to $p'$. This is true as one can argue that $f_{p'}(S) \leq
0$ for every $S \subseteq O^<$ since this was true for $p_T$ and
increasing prices in $O^<$ by $1$, cannot cause the increase in
$f_{i,p}(S)$ for any set $S \subseteq O^<$. 
\end{proof}

We note that a different proof for the last Lemma is possible along the
following lines: By the Matroid union theorem, it can be shown that
$p_T$ posses an envy-free allocation, 
since $O^*= \emptyset$ w.r.t $p_T$. Proposition \ref{prop:gen_dom}
then shows that this is a Walrasian allocation. However, this uses 
heavier machinery from Matroid theory.

\section{$GGS(k,M)$ valuations}

Recall, a valuation $v$ is $GGS(k,M)$ if there is a gross substitute
valuation $g$ such that $v=g$ for all sets of size less than $k$
- these will be called {\em small sets}, and is equal $M$ for larger
size sets. $GGS(k,M)$ are submodular, but are not gross substitute in
general. Such a small example, showing that $GGS(2,M)$ is already not
gross substitute is in Appendix Section \ref{appen:GS2}. In the
following discussion we will show this, by utilizing the facts we know
about an auction defined on an ensemble of gross substitute
valuations. In particular, this will show that the condition for non
existence of an allocation that is used for gross substitute is not
useful for $GGS(k,M)$ auctions anymore.

We will consider here combinatorial auctions in which all players have
valuations that are in $GGS(k,M)$, namely the same $k,M$ for every
player. Note that this does not mean that all valuations are the
truncation of the same gross substitute
valuation $g$.
 

Recall that for gross substitute valuations, the existence of a
non-empty set $O^*_p$ was a witness of non-optimality of $p$.
Furthermore, $O^*_p$ was the handle in which ascending auctions like
Gul-Stacchetti's and Ausbel's found a set of items whose prices are to
be increased. In the next claim we show that this is not the case for
some valuations in $GGS(2,M)$. In particular, this shows that
$GGS(k,M)$ is not gross substitute already for $k=2$. 

Let $n$ be large enough  and $m=2n-2$. (e.g., $n=5, m=8$ will do).
Let $M=2$ and the following $GGS(2,2)$ valuations:
$v_i, i\leq n-2$ is uniform value
$1$ on all singletons. $v_{n-1}=v_n$  assign value $1$ for 
all items but the last which has value $2$. 

\begin{claim}\label{cl:118}
For the valuations above and $p=0$ there is no allocation, while
$\forall S, f_p(S) \leq 0$.  $\qed$
\end{claim}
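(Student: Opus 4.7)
I would first identify the minimal demand sets at $p=0$. Since $p=0$, utilities equal valuations, and by the $GGS(2,2)$ structure the utility of any set of size at least $2$ is $2$. Thus for a player $i\leq n-2$, whose singletons all have value $1$, $D_i^*(0)$ is exactly the collection of all pairs. For a player $i\in\{n-1,n\}$, the singleton $\{m\}$ itself already attains the maximal utility $2$, so $D_i^*(0)$ consists of $\{m\}$ together with every pair that does not contain $m$ — pairs containing $m$ are not minimal since they strictly include the demand set $\{m\}$.

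To rule out an envy-free allocation, observe that each of the first $n-2$ players must be allocated a set of size at least $2$, while among players $n-1,n$ at most one can receive the singleton $\{m\}$, forcing the other to receive a set of size at least $2$. The disjoint allocated sets would then contain a total of at least $2(n-2)+1+2=2n-1$ items, exceeding the $m=2n-2$ items available, a contradiction.

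It remains to check $f_0(S)\leq 0$ for every $S\subseteq \Omega$. For a player $i\leq n-2$, the minimum $|P\cap S|$ over pairs $P$ is $0$ whenever at least two items lie outside $S$, equals $1$ when exactly one item lies outside $S$, and equals $2$ when $S=\Omega$. For a player $i\in\{n-1,n\}$, the extra option $\{m\}$ forces $f_{i,0}(S)=0$ whenever $m\notin S$, and caps $f_{i,0}(S)\leq 1$ in all cases (including $S=\Omega$). Summing these contributions and subtracting $|S|$, every case with $|\Omega\setminus S|\geq 1$ is strictly negative by inspection, so the only interesting case is $S=\Omega$, where $f_0(\Omega)=2(n-2)+2\cdot 1-(2n-2)=0$.

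The main ``obstacle'' is conceptual rather than technical: the example is carefully balanced so that the sum $\sum_i f_{i,0}(\Omega)$ hits $|\Omega|$ exactly, exploiting the fact that the option $\{m\}$ reduces the contribution of each of the last two players from $2$ to $1$ — flexibility that is unusable in any actual allocation since $\{m\}$ can be given to only one of them. This mismatch is precisely why the gross-substitute overdemand witness fails to certify non-equilibrium in $GGS(2,M)$, and I expect it is what motivates the more refined tools needed in the next section.
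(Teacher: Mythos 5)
Your proof is correct; the paper states Claim \ref{cl:118} without any written proof (the qed appears in the statement itself), and your direct verification — identifying $D_i^*(0)$, the counting argument $2(n-2)+1+2=2n-1>m$ against an envy-free allocation, and the case analysis showing $f_0(S)\leq 0$ with equality only in the balanced case $S=\Omega$ — is exactly the intended argument. One negligible slip: for $S=\emptyset$ (which has $|\Omega\setminus S|\geq 1$) the value is $f_0(\emptyset)=0$, not strictly negative, but this still satisfies $f_0(S)\leq 0$ and does not affect the claim.
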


\ignore{
As noted, we call sets of $k$ elements or more \emph{big} sets and sets with less than $k$ elements \emph{small} sets. In the same manner, we call players who have big sets in their demand \emph{big players} and players who have small sets in their demand \emph{small players}. Note that a player can be both small and big at the same time.}
\ignore{
\begin{observation}
\label{cheapest}
The demand of a big player contains every set that consists of the cheapest $k$ items.
\end{observation}}

\ignore{
\subsection{Walrasian on $GGS(k,M)$}
Next we prove a theorem which states that whenever all valuations are taken from $GGS(k,M)$ there exists a competitive equilibrium. Note that this proves, for example, the existence of competitive equilibrium when valuations are gross substitute. AS FAR AS WE KNOW, THERE IS NO SIMPLE PROOF FOR THE LAST FACT, THAT IS, THE EXISTENCE OF EQUILIBRIUM FOR GS (HOWEVER THIS IS ALSO NOT THE MOST 'SIMPLE' PROOF WE WISHED).
\begin{theorem} [Walrasian on $GGS(k,M)$]
\label{equilibrium}
If $v_1,v_2,\ldots,v_n\in GGS(k,M)$ then there exists a Walrasian equilibrium.
\end{theorem}
\begin{proof}
First note that if $m\geq nk$ there is a trivial equilibrium, just take the zero price vector and allocate any $k$ items to each player. In the same manner we can show (APPENDIX?) that whenever $m\leq n(k-1)$ an equilibrium exists. So we only care for the case that $m=nk-l$ for some $0<l<n$. We will show that there exists an envy free allocation that gives $l$ of the players sets of size $k-1$ and the rest some big set, hence an equilibrium.

We look on $p=p^D$ the dual optimum price vector. To show that this price is Walrasian we will build an allocation for it. Let $A$ be a maximum allocation of small demanded sets to players that are small and not big.
\begin{claim}
$A$ is 'perfect', meaning, all small and not big players have sets in $A$.
\end{claim}
\begin{proof}
Since these players are not big the utility of a big set is strictly smaller than the minimum utility of these players. By the gross substitute auction, if there is no allocation there exists a set of items $O^*$, such that $f(O^*)>0$. If we increase the price of $O^*$ items by a small enough $\epsilon$ all small but not big players will remain small and not big, hence, by the GS discussion, we will decrease the Lyapunov, a contradiction to it being minimum.
\end{proof}

Next, take $A$ and extend it to a maximum allocation of small sets. 
Let $H= \Omega \setminus MIN$, where $MIN = \{j | \forall j' \in \Omega,  p(j) \leq p(j')\}$.
\begin{claim}
The extended $A$ is 'perfect' on $H$, meaning, all items of $H$ are allocated.
\end{claim}
\begin{proof}
First note that if a player is small then the small demand sets of which form the bases of a Matroid, this is regardless of the player being also big or not. To see that this is true consider the price vector which is identical to $p$ on all items $j$ s.t. $j\in D\in D(p)\cap small$ and $\infty$ elsewhere. Clearly, for this price vector all the small demand sets are still demand sets also in the underling gross substitute valuation $\hat{v}$ that comes from the $GGS(k,M)$ definition. But then, by Matroidity of GS (lemma~\ref{Structural0}), these sets form the bases of a Matroid, so, since the utility is the same for both valuations, it is also the case here.\ \ \ \ 
Next we will use the Matroid union theorem~\cite{} which is the following:
\begin{theorem}
\label{matroid:union}
Let $M_1,M_2,\ldots,M_n$ be $n$ Matroids with independent sets ${\cal I}_1,{\cal I}_2,\ldots,{\cal I}_n$, then the Matroid $M$ which has the uniform of all elements as a ground set and the independent sets are ${\cal I} = \{i_1\cup i_2\cup\ldots\cup i_n| s.t.\ i_l\in {\cal I}_l\}$ is feasible and its rank is $r(S)=\min_{T\subseteq S}{\{|S\setminus T|+\sum_i{r_i(T)}\}}$.
\end{theorem}
We look on the rank of the Matroid union of these Matroid. Recall that the rank is just the maximum intersection with an independent set, so here $$r(S)=\min_{T\subseteq S}{\{|S\setminus T|+\sum_i{\max_{D_i\in D_i(p)}{\{|T\cap D_i|\}}}\}}$$ Hence, if the extended $A$ is not perfect, we will have $r(H)<|H|$ and so by the theorem it means that $\exists T,\ s.t.\ |T|>\sum_i{\max_{D_i\in D_i(p)}{\{|T\cap D_i|\}}}$.
Decreasing the price of items in that $T$ by $\epsilon$ then, will decrease the Lyapunov since it will decrease the sum of prices by $\epsilon |T|$ but increase the players' utilities only by $\epsilon \sum_i{\max_{D_i\in D_i(p)}{\{|T\cap D_i|\}}}$.
\end{proof}

So we have an allocation $A$ which is both perfect on the set of small and not big players and on the set $H$ of non minimum price items. We will show that $|A|=l$ therefore the allocation we found already allocates a set to all the small players we need and we will increase the allocation only by allocating big sets.

\begin{claim}
If $|A|<l$ then $p$ is not optimal
\end{claim}
\begin{proof}
We will separate the discussion to the case there are no more small players other than players of $A$ and the case that there are.
\begin{enumerate}
	\item If there are no more small players rather than $A$ then increasing the prices of all items by $\epsilon>0$ will decrease the Lyapunov, because all the players and sets of $A$ will balance each other, while other players will decrease their utility by $\epsilon k(n-|A|)$ and the total price increase is $\epsilon[m-|A|(k-1)] = \epsilon[nk-l - |A|(k-1)]$. The overall change is then $\epsilon(|A|-l)<0$, a contradiction.
	\item If there are some small players left then the induced $f(O^*)>0$. So we will increase the prices of items in $O^*$ by $\frac{\epsilon k}{k-1}$ and all other items by $\epsilon$. After this increase every small player will remain small, so by the gross substitute discussion the Lyapunov value will decrease.
\end{enumerate}
\end{proof}

\begin{lemma}
If $|A|>l$ then $p$ is not optimal
\end{lemma}
\begin{proof}
We will partition the set of players to two disjoint sets, set $OS_{pl}$ of players who have only small sets on their demand and $B_{pl}$, the set of all other players, namely, players who have a big set on their demand (and possibly some small sets also). We already saw that $A$ is perfect on the set of players $OS_{pl}$ and on the set of non minimum items, $H$. Assume first that there is an allocation $A'\subseteq A$ that is both perfect on $OS_{pl}$ and on $H$ that does not use any players from $B_{pl}$ and uses exactly $\lambda$ items from $MIN$, for the minimum integer $\lambda\geq 0$. In this case, $|A|>l \Rightarrow k\cdot |B_{pl}| < |MIN|-\lambda$. 
\begin{claim}
If there are no $B_{pl}$ players in $A'$ the Lyapunov is not minimum.
\end{claim}
\begin{proof}
Note that here $(k-1)|OS_{pl}|=|H|+\lambda$, so if we decrease all prices by some small $\epsilon>0$ we will decrease the sum of prices on the Lyapunov by $\epsilon(|MIN|+|H|)$ and increase the utility only by $\epsilon(k|B_{pl}|+(k-1)|OS_{pl}|)$ which is less. In particular (dividing by $\epsilon$) we have
$$\Delta(p)=|MIN|+|H|=|MIN|-\lambda + \lambda + |H|> k|B_{pl}|+(k-1)|OS_{pl}|=\Delta(u)$$
\end{proof}

Next we assume there is no allocation which is both perfect on $OS_{pl}$ and on $H$ that does not use players from $B_{pl}$. Let $A'$ be an allocation that is perfect on $OS_{pl}$ and on $H$ which uses $\lambda$ items from $MIN$ and $\rho$ players from $B_{pl}$ for some integer $\rho>0$, and take $\lambda$ to be minimum. Note that here $(k-1)(|OS_{pl}|+\rho)=|H|+\lambda$. The next claim says that if this is the case and in addition the valuations are from $GGS(2,M)$ then the price is not optimal.

\begin{claim}
If $A'$ takes $\rho$ players from $B_{pl}$ and $\lambda$ items from $MIN$ in order to be perfect both on $H$ and $OS_{pl}$, $k=2$ and there are not enough players in $B_{pl}$ left, then the Lyapunov is not minimum.
\end{claim}
\begin{proof}
The claim states that $2(|B_{pl}|-\rho)<|MIN|-\lambda$. Now since $\rho>0$ we must have a set of items $T\subseteq H$, such that $\tau=|T|=|N(T)|+\rho$, where $N(T)\subseteq OS_{pl}$ is the set of players which are only small and have at least one neighbor item (a demand set) in $T$.
Decreasing all item prices in $T$ by $2\epsilon$ and all others by $\epsilon$ will decrease the sum of prices on the Lyapunov by $\epsilon(|MIN|+|H|+\tau)$ and increase the sum of utilities by $\epsilon(2|B_{pl}|+|OS_{pl}|+|N(T)|)$ which is less since here $|OS_{pl}|+\rho=|H|+\lambda$. In particular, omitting the $\epsilon$ term we have
$$\Delta(p)=|MIN|+|H|+\tau=|MIN|-\lambda+\lambda+|H|+\tau>2(|B_{pl}|-\rho)+|OS_{pl}|+\rho+\tau$$
$$=2|B_{pl}|+|OS_{pl}|+|N(T)|=\Delta(u)$$
\end{proof}

Finally, we will drop the assumption on $k=2$ and finish the proof of the theorem.

\begin{claim}
If $A'$ takes at least $\rho$ players from $B_{pl}$ and $\lambda$ items from $MIN$ in order to be perfect on $H$ and $OS_{pl}$ and there are not enough players in $B_{pl}$ left, then the Lyapunov is not minimum.
\end{claim}
\begin{proof}
The claim states that $k(|B_{pl}|-\rho)<|MIN|-\lambda$. We apply the Matroid union theorem on the Matroids induced by players in $OS_{pl}$. Now since $\rho>0$ we must have that $r(H)<|H|$. Let $T\subseteq OS_{pl}$ be the minimum set on $r(H)$'s definition of size $|T|=\tau$, that is, $r(H)= |H\setminus T|+\sum_{i\in OS_{pl}}{r_i(T)}$.
Let $\tau'=\tau-\sum(r_i(T))$ and note that $\tau'>0$.
\begin{fact}
By the minimality of $\rho$ we know that every subset of size $\rho$ in $B_{pl}$ can span at most $\tau'$ items in the set $T$.
\end{fact}
\begin{corollary}
For any integer $\alpha$, every $\rho-\alpha$ players from $B_{pl}$ spans at most $\tau'-\alpha$ elements from $T$.
\end{corollary}

Decreasing all item prices in $T$ by $\epsilon(1+\rho/{\tau'})$ and all others by $\epsilon$ will decrease the sum of prices on the Lyapunov by $\epsilon(|MIN|+|H|+\frac{\rho\tau}{\tau'})$ and increase the sum of utilities by $\epsilon(k|B_{pl}|+(k-1)|OS_{pl}|+\frac{\rho\sum{r_i(T)}}{\tau'})$. This is true since the utility of any set of size $\rho$ of players in $B_{pl}$ can increase by $\epsilon\max{\{k\rho,\rho[(k-1)+\tau'/{\tau'}]\}}=k\rho$. 
Now by $(k-1)(|OS_{pl}|+\rho)=|H|+\lambda$ and omitting the $\epsilon$ we have
$$\Delta(p)=|MIN|+|H|+\frac{\rho\tau}{\tau'}=|MIN|-\lambda+\lambda+|H|+\frac{\rho\tau}{\tau'}$$
$$>k(|B_{pl}|-\rho)+\lambda+|H|+\frac{\rho\tau}{\tau'}=k(|B_{pl}|-\rho)+(k-1)(|OS_{pl}|+\rho)+\frac{\rho\tau}{\tau'}$$
$$=k|B_{pl}|-\rho+(k-1)|OS_{pl}|+\frac{\rho\tau}{\tau'}=k|B_{pl}|+(k-1)|OS_{pl}|+\frac{\rho(\tau-\tau')}{\tau'}$$
$$=k|B_{pl}|+(k-1)|OS_{pl}|+\frac{\rho\sum{r_i(T)}}{\tau'}=\Delta(u)$$
%
This completes the proof of claim, the lemma and theorem of the equilibrium on $GGS(k,M)$ valuations.
\end{proof}

\ignore{We wish to use Matroid union theorem (theorem~\ref{matroid:union}) again but we cannot use Matroidity of GS (lemma~\ref{Structural0}) anymore, since the valuations are not gross substitute anymore. However, note that if a player is forced to be big, that is, disallow small sets on its demand, then the (minimal) demand sets will form the bases of a Matroid. In fact, for all players this Matroid is the uniform Matroid of rank $k$ over the elements of minimum price.

Let $\Gamma = \{S,B\}^n$ and for $\gamma\in \Gamma$ force player $i$ to be small if $\gamma_i = S$ and big if $\gamma_i = B$. Now for each $\gamma \in \Gamma$ we can apply the Matroid union theorem and build a Matroid $M^{\gamma}$ whose rank function is given by
$$r^{\gamma}(S)=\min_{T\subseteq S}{\{|S\setminus T|+\sum_i{\max_{D_i\in D^{\gamma_i}_i(p)}{\{|T\cap D_i|\}}}\}}$$
Where by $D^{\gamma_i}_i(p)$ we mean the demand of player $i$ on $p$ restricted to it being a $\gamma_i$ player.
Now let $r(S) = \max_{\gamma\in \Gamma}{\{r^{\gamma}(S)\}}$ and note that, by the above discussion, if $p$ is not Walrasian then $r(MIN) < |MIN|$.
Define $p'=p-\epsilon_{MIN}$ and look at $L(p')$. By definition, $L(p')=L(p)-\epsilon \cdot |MIN| + \epsilon \cdot \max_{\gamma \in \Gamma}{\{r^{\gamma}(MIN)\}}$, but this is exactly $L(p)-\epsilon \cdot |MIN| + \epsilon \cdot r(MIN) < L(p)$, a contradiction.}
\end{proof}
%

\ignore{OLD PROOF:
Let $p^D$ be the dual optimum price vector. We will first prove that $p^D$ is an envy free price vector. Then we will prove that all items are allocated in this envy free allocation, resulting it is a Walrasian.
\begin{claim}
\label{envyfree}
$p^D$ is envy free.
\end{claim}
\begin{proof}
First note that the set of players who have only small sets on their demand is envy free, or else just increase the prices of items in (the induced) $O^*_{p^D}$ by an $\epsilon \leq \frac 1{k-1}$. Now by Structural 2 (lemma~\ref{Structural2}) the new value of the Lyapunov will be strictly smaller, which is a contradiction to the minimality of the Lyapunov on $p^D$. So we ignore players with only small sets on the demand.
Next suppose we have $n$ players, none of which has only small sets in its demand, $n_1$ of which has some small sets and $n_2$ has none on their demands. Assume there is no envy free allocation.
Let $MIN = \{j | \forall j' \in \Omega,  p^D(j) \leq p^D(j')\}$ and let $q = p^D+1_{MIN}$. Now let's look on the Lyapunov: $ L(q) = L(p^D) + |MIN| - (k-1)n_1 - k\cdot n_2 $ by definition. So, by the minimality of the Lyapunov on $p^D$, we have that $ |MIN| \geq (k-1)n_1 + k\cdot n_2 $, but that means that there are enough items in $MIN$ to allocate to all the players and in particular to players of $n_2$ who are indifferent about their allocated set. Now we only have to take care of players in (the set that relates to) $n_1$, so we ignore also $n_2$ players.
If $f(O^*_q)\leq 0$ we found an envy free allocation. Therefore we assume $f(O^*_q) > 0$ and so by integrality $f(O^*_q) \geq 1$.
Let $ \alpha = |MIN| - (k-1)n_1 - k\cdot n_2 $. Let $c_{MAX} = \max_{i}{\{f_{i,q}(O^*_q)\}}$ be the maximum intersection of the players demand with $O^*_q$. Mark $q' = q + \{\frac{1}{c_{MAX}}\}_{O^*_q}$, if $ f(O^*_q) > c_{MAX}\cdot\alpha $, then $L(q') = L(p^D) + \alpha + \frac{1}{c_{MAX}}(|O^*_q| - \sum_i{f_{i,q'}}) = L(p^D) + \alpha - \frac{1}{c_{MAX}}f(O^*_q) < L(p^D)$, a contradiction.

So we assume $f(O^*_q) \leq c_{MAX}\cdot\alpha$. Note that this means that $\alpha \geq 1$ also.
Now if we allocate a big set to a player that achieve $c_{MAX}$ and ignore this player too, we will get a new setting in which the new $\alpha$ is smaller. By the lower bound on $\alpha$ then we can only carry on till we receive either an envy free allocation, or a contradiction to the optimality of $p^D$.
\end{proof}
\begin{claim}
\label{noDrops}
The former envy free allocation includes all items.
\end{claim}
\begin{proof}
By the previous claim $p^D$ is an envy free price vector, hence, there exists an optimal, envy free primal solution $x^*$. But then, by the complementary slackness condition, whenever there is a item $j$ which is not taken by $x^*$, meaning that the $j$'s primal constraint is not satisfied with equality, then the dual variable for $j$, namely $p^D_j$ equals $0$.
\end{proof}\\
So $p^D$ is an optimal price vector for which there exists an envy free allocation that allocates all items whose price is non zero, hence $p^D$ is Walrasian.}
\end{proof}
}

\subsection{$GGS(2,M)$ posses  Walrasian equilibrium}\label{sec:ggs(2,m)}

Here we show that $GGS(2,M)$ always posses Walrasian equilibrium. We
don't know whether this fact is true for $GGS(k,M)$ for $k \geq 3$.

\begin{theorem}\label{thm:wal2}
  Let $\Omega$ be a set of $m$ items and let $v_i, ~i=1, \cdots ,n$ be
  valuations on $\Omega$, each in $GGS(2,M)$. Then there is a
  Walrasian equilibrium for $v_1, \ldots, v_n$. 
\end{theorem}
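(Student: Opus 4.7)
The plan is to prove the theorem by the LP-duality route: by Bikhchandani--Mamer, the existence of a Walrasian equilibrium is equivalent to the winner-determination LP $P1$ having no integrality gap, which in turn is equivalent to the Lyapunov $L(p)=\sum_j p_j+\sum_i u_{i,p}$ attaining its minimum together with an envy-free allocation at the minimizer that covers every item of positive price. After rescaling to integer valuations, $L$ is piecewise-linear with integer breakpoints, so its minimum is attained at some integer $p^*$ with $0 \le p_j^* \le M$ for every $j$.

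The crucial structural observation specific to $GGS(2,M)$ is that, for any $v_i$ in this class, the demand of player $i$ at any price vector is described by only three options: the empty set, the best singleton $\{j\}$ maximising $v_i(j)-p_j$, or a pair of the two cheapest items (any set of size $\geq 3$ is weakly dominated by a $2$-subset of itself since $v_i(S) = M$ for all $|S| \ge 2$ and prices are nonnegative). Using this, at $p^*$ I partition the players into $Z$ (empty-preferring), $T$ (singleton-preferring), and $P$ (pair-preferring) according to which option realises $u_{i,p^*}$, breaking ties consistently, and define the cheap-item set $L^* = \{j : p_j^* \le p^*_{(2)}\}$ (those items that can participate in a cheapest pair). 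The plan is then to exhibit an envy-free allocation by combining a bipartite matching of singleton-preferring players into their tight singleton demands with a $2$-to-$1$ matching of pair-preferring players into $L^*$, while $Z$-players receive the empty set.

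The main obstacle is proving that this combined matching is feasible, and the key tool is a directional-derivative (subgradient) analysis of $L$ at the minimiser $p^*$. If the Hall condition fails on the pair side---meaning that for some $P' \subseteq P$ the items in $L^*$ available to $P'$ are fewer than $2|P'|$---then the one-sided directional derivative of $L$ along $\mathbf{1}_{L^*}$ is strictly negative, contradicting the minimality of $p^*$; an analogous perturbation handles the Hall condition for singleton-preferring players, and the symmetric perturbation (lowering the price of an uncovered item) gives the Walrasian coverage condition that every $j$ with $p_j^* > 0$ is allocated. The two matchings must be combined carefully since an item in $L^*$ may simultaneously be a singleton-demander's best singleton and part of a pair-demander's cheapest pair; this coupling is handled by a standard augmenting-path / matroid-intersection argument on the combined bipartite demand hypergraph. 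The fact that pair-demands are for exactly $2$-element subsets---rather than $k$-element ones for general $GGS(k,M)$---is what keeps these perturbations tractable with integer step sizes, and is consistent with the paper's remark that the theorem remains open for $k \ge 3$.
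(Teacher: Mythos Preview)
Your overall framework is the same as the paper's: take a Lyapunov minimiser $p^*$, classify players into ``singleton'' and ``pair'' types, build a bipartite matching on the singleton side, and use directional perturbations of $L$ to force Hall-type conditions and the Walrasian coverage condition. So there is no divergence in strategy.

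There is, however, a genuine gap in the perturbation step. You assert that if the pair-side Hall condition fails then the one-sided derivative of $L$ along $\mathbf{1}_{L^*}$ is strictly negative. This is not what the situation delivers. Raising all prices in $L^*$ by $\epsilon$ changes $L$ by $\epsilon\bigl(|L^*| - 2|P| - |T\cap N(L^*)|\bigr)$ (roughly), and whether this is negative depends on how many singleton-players also have their optimal singleton in $L^*$; the pair-side and singleton-side allocations are coupled through shared items in $L^*$, and a uniform increment on $L^*$ does not separate the two obstructions. The paper resolves this by first proving (via two separate Hall/perturbation arguments) that a maximum singleton matching can be taken to cover both the singleton-only players \emph{and} all non-minimum-price items, then counting the $n'$ unmatched players and $m'$ unmatched items, and finally --- when $2n'>m'$ --- applying a \emph{non-uniform} perturbation: $2\epsilon$ on the neighbourhood $N(P')$ of a Hall-deficient player set and $\epsilon$ on all remaining items. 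The $2\epsilon$/$\epsilon$ weighting, chosen against the K\"onig--Egerv\'ary structure of the maximum matching, is exactly what makes the bookkeeping close; a single-direction increment $\mathbf{1}_{L^*}$ does not. Your appeal to ``augmenting-path / matroid-intersection on the combined hypergraph'' is where this work has to happen, and it is not a routine step --- indeed the paper also needs a separate case analysis for $|MIN|=1$ versus $|MIN|\ge 2$, which your $L^*$ definition conflates. Similarly, the Walrasian coverage argument in the paper again requires a weighted ($2\epsilon$ vs.\ $\epsilon$) decrease tailored to a K\"onig--Egerv\'ary cover of a sub-matching, not just lowering the price of a single uncovered item.
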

Before we prove Theorem \ref{thm:wal2}, we prove the following
intermediate result.
\begin{lemma}
\label{lem:ev2}
Let $\Omega$ be a set of $m$ items and let $v_i, ~i=1, \cdots ,n$ be
  valuations on $\Omega$, each in $GGS(2,M)$. Let $p$ be a minimal optimal
  price vector \emph{(}namely, $p$ is optimal in the dual LP, $P2$\emph{)}. Then $p$
  poses an envy-free allocation.  
\end{lemma}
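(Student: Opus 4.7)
My plan is to construct an envy-free allocation at the minimal optimal dual price $p$ by adapting the Hall/matroid-union arguments from the gross-substitute setting to the two-cardinality demand structure of $GGS(2,M)$. First I would classify players by their demand at $p$: call a player \emph{small-only} if $D_i(p)$ contains only sets of size less than $2$, and \emph{big} otherwise. Because $v_i(S) = M$ on every set of size $\geq 2$, for any big player the minimal big demand sets are pairs of items minimizing the pair-price, and all big players share the same big-demand utility $M - \pi^{\min}$, where $\pi^{\min}$ is the sum of the two smallest prices in $p$. Players with zero utility can be handed $\emptyset$, so I assume all utilities are positive.

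Next, I would exploit the optimality and minimality of $p$ to obtain Hall-type inequalities on demands. Optimality (no integer price-increase lowers the Lyapunov) yields $f_p(T) \leq 0$ for every $T$, in line with the GS argument underlying Observation \ref{obs:obstacle}. Minimality (no coordinate-wise smaller $p'\le p$ is also optimal) yields a matching lower bound: for every subset $T$ contained in the positive-price items, the total maximum-intersection of demand sets with $T$ is at least $|T|$, i.e.\ a Hall-type covering condition. For small-only players, these inequalities reduce to inequalities about the underlying gross-substitute valuation $g_i$ (since $v_i$ and $g_i$ coincide on small sets), so Lemma \ref{Structural0} endows their minimal singleton demands with matroid structure.

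With these tools in place, I would build the allocation in two phases. Phase (a): match small-only players to disjoint singletons from $D_i(p)$ via a matroid-union / Hall argument on the induced singleton matroids, whose feasibility follows from the minimality inequalities. Phase (b): allocate each remaining big player a disjoint pair drawn from the minimum-price items (or from the minimum plus the second-cheapest level when the minimum is unique). Feasibility of (b) reduces to counting cheap items: if too few were available to accommodate all big players, a carefully chosen non-uniform price decrease (larger on the minimum layer, smaller on the next) would strictly lower the Lyapunov, contradicting optimality of $p$.

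The main obstacle is that minimal demand sets under $GGS(2,M)$ have two different sizes (singletons and pairs), so they do not form the bases of a common matroid, and the clean matroid-union argument for gross substitutes does not apply directly. My plan is to circumvent this by treating big demands as an auxiliary rank-$2$ uniform matroid on the cheap items, and by handling \emph{mixed} players (those with both a singleton and a pair in $D_i(p)$) through a case distinction that routes each such player to phase (a) or (b) depending on which items remain free. The most delicate point is ruling out the kind of "over-demanded without a GS obstacle" configuration displayed in Claim \ref{cl:118}; those configurations must fail precisely because $p$ is optimal, and the extra leverage should come from the truncation structure, which constrains the gap between $v_i$ and the underlying $g_i$ via the monotonicity condition $g_i(S) \leq M$ for small $S$.
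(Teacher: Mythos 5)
Your overall architecture (split players into singleton-only and pair-demanding, match the singleton-only players via Hall-type arguments driven by Lyapunov perturbations, then hand the remaining players disjoint cheapest pairs) is the same skeleton as the paper's proof, and your preliminary observations (pair-demands are exactly the cheapest pairs, $f_p(T)\leq 0$ for all $T$ at an optimal $p$) are correct. But the decisive step is missing, and the tool you propose for it points in the wrong direction. The heart of the lemma is precisely the configuration you flag at the end: after a maximum matching that covers the singleton-only players and the non-minimum-price items, $n'$ pair-demanding players and $m'$ minimum-price items remain unmatched with $2n' > m'$ --- this is exactly the Claim \ref{cl:118} situation, where $f_p(T)\le 0$ everywhere yet no allocation exists. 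In that situation the cheap items are \emph{over}-demanded, so no price \emph{decrease} can contradict anything: a decrease lowers the Lyapunov only when the decreased set is under-demanded (the price term must fall by more than the utilities rise), and over-demand gives you no such set. The paper instead derives the contradiction from a weighted price \emph{increase}: raise the Hall-deficient neighborhood $N(P')$ by $2\epsilon$ and every other item by $\epsilon$; matched players' utility drops cancel the price increases of their matched items, the $m'$ unmatched items add $\epsilon m'$ to the price side, and each of the $n'$ unmatched players (whose singleton demands lie in $N(P')$ and whose pairs rise by at least $2\epsilon$) loses $2\epsilon$, so $2n' > m'$ forces $\Lyap(\tilde p) < \Lyap(p)$, contradicting optimality. (The non-uniform \emph{decrease} you describe is the move used in the proof of Theorem \ref{thm:wal2} to upgrade envy-free to Walrasian, i.e., in the complementary situation where too many items are left over; you have conflated the two directions.) Without this increase argument, "the extra leverage should come from the truncation structure" is not a proof, and your plan does not close the one case that distinguishes $GGS(2,M)$ from gross substitutes.

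Two smaller points. First, the case $|MIN|=1$ genuinely needs separate treatment (at most one pair can ever be allocated, and the paper uses a K\"onig--Egerv\'ary cover of the demand graph with a case split on whether the unique cheapest item is in the cover); your parenthetical "or from the minimum plus the second-cheapest level" does not engage with this. Second, the machinery you invoke is heavier than needed and partly beside the point: since the relevant demands in phase (a) are singletons, plain bipartite Hall/K\"onig arguments suffice (no matroid union or rank-$2$ auxiliary matroid is required), and the paper's matching claims (Claims \ref{cl:at2_2} and \ref{cl:at2_3}) are driven by strict Lyapunov decreases from optimality alone --- minimality of $p$ is only needed in a corner sub-case of $|MIN|=1$, not for the Hall-type covering condition you attribute to it.
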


We will recurrently use the following standard fact from matching
theory \cite{matching}.
\begin{fact}
  \label{fact1}
$~$

  \begin{enumerate}
  \item Let $G=(V,E)$ be a graph and assume that $V' \subseteq V$ is matched
by some \emph{(}not necessarily maximum\emph{)} matching, then there is a maximum
matching that matches $V'$.
\item Let $G=(X,Y:E)$ be a bipartite graph and $M$ a maximum
  matching. Then \emph{(}K\"onig Egerv\'ary theorem, see e.g.,
  \emph{\cite{matching})} there is a vertex cover $C \subseteq X \cup Y$
  covering the edges of $G$ with $|C| = |M|$. Moreover, $M$ matches
  the $|C \cap X|$ members of $X \cap C$ with items in $Y \setminus C$ and
  it matches the $|C \cap Y|$ members of $Y \cap C$ with members of
  $X\setminus C$.
  \end{enumerate}
\end{fact}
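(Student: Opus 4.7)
The plan is to prove both parts of Fact~\ref{fact1} by the classical alternating-path toolkit; no machinery beyond standard matching theory is required.

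For Part~1, let $M'$ be a matching covering $V'$ and let $M$ be any maximum matching, and consider the symmetric difference $M \triangle M'$, which decomposes into vertex-disjoint alternating paths and even cycles. For any $v \in V'$ unmatched by $M$, the vertex $v$ is matched by $M'$ and is therefore an endpoint of some path $P$ in $M \triangle M'$ starting with an $M'$-edge. If the opposite endpoint of $P$ were also $M$-unmatched, $P$ would contain one more $M'$-edge than $M$-edge and would be $M$-augmenting, contradicting the maximality of $M$; hence $P$ ends with an $M$-edge and $|P \cap M| = |P \cap M'|$. I then form $M^\star$ from $M$ by swapping $M$- and $M'$-edges inside every component of $M \triangle M'$ that contains an $M$-uncovered vertex of $V'$. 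Parity preserves the size in each swapped component, so $|M^\star| = |M|$; each previously uncovered $v \in V'$ is now covered by its $M'$-edge; and every $v \in V'$ already covered by $M$ is in fact covered by \emph{both} $M$ and $M'$, hence has degree $0$ or $2$ in $M \triangle M'$ and so stays covered after any component swap. Thus $M^\star$ is a maximum matching covering $V'$.

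For Part~2, I would run the textbook constructive proof of K\"onig--Egerv\'ary and read off the displayed matching structure as a by-product. Let $U \subseteq X$ be the $M$-unmatched vertices of $X$, let $Z$ be the set of vertices reachable from $U$ by $M$-alternating paths, and set $C := (X \setminus Z) \cup (Y \cap Z)$. Three routine checks suffice. (i)~$C$ is a vertex cover: an edge $(x,y)$ with $x \in Z$ and $y \notin Z$ either lies in $M$ (forcing $y$ to be $x$'s predecessor on its alternating path, so $y \in Z$) or does not (allowing the alternating path to $x$ to extend to $y$, so $y \in Z$), a contradiction either way. (ii)~Every $y \in Y \cap Z$ is $M$-matched (else the alternating path to $y$ is $M$-augmenting), and its $M$-partner must lie in $X \cap Z = X \setminus C$; symmetrically every $x \in X \setminus Z$ is $M$-matched (else $x \in U \subseteq Z$), and its $M$-partner must lie in $Y \setminus Z = Y \setminus C$ (else the partner's alternating path would pull $x$ back into $Z$). (iii)~Restricting $M$ to the edges incident to $C$ yields two disjoint sub-matchings that together saturate $C$, giving $|M| \geq |C|$; combined with the trivial $|M| \leq |C|$ valid for any vertex cover, $|M| = |C|$. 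The matching-structure clause of the fact is precisely item~(ii).

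The only delicate point is the $V'$-preservation step in Part~1, i.e., verifying that the component swaps never uncover a vertex of $V'$ that was already covered by $M$; this is handled by the parity observation above. Everything else is bookkeeping.
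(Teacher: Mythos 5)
The paper states Fact~\ref{fact1} as a standard result and cites a matching-theory reference rather than proving it, so there is no in-paper proof to compare against; you have in effect supplied the proof the paper delegates to the literature. Your argument is correct. Part~1 is the standard symmetric-difference argument, and the parity observation you isolate --- that any $v\in V'$ already covered by $M$ is covered by \emph{both} $M$ and $M'$, hence has degree $0$ or $2$ in $M\triangle M'$ and survives any component swap --- is indeed the only nontrivial step; one might also note explicitly that no swapped component can have both endpoints $M$-uncovered (such a path would be $M$-augmenting), so each swapped path loses coverage only at its non-$V'$ endpoint. Part~2 is the textbook alternating-reachability proof of K\"onig--Egerv\'ary with $C=(X\setminus Z)\cup(Y\cap Z)$, and your item~(ii) gives precisely the matching-structure clause (each vertex of $C\cap X$ matched into $Y\setminus C$, each vertex of $C\cap Y$ matched into $X\setminus C$) that the paper later invokes in the proofs of Lemma~\ref{lem:ev2} and Theorem~\ref{thm:wal2}.
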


\begin{proof}[Of Lemma \ref{lem:ev2}]

Obviously, to show the existence of an
  envy-free allocation it
  is enough to consider only sets in the demand set of each player
  that are of size at most two (as larger sets don't have larger
  value). Thus a demand set that does not contain the empty set
  typically contains some singletons and sets of size $2$ (referred
  here as pairs). 
Let $p$ be an optimal solution to the associated LP, $P2$ for the
valuations $v_i, ~i=1, \ldots, n$.
In the rest of the proof we partition the players into two sets. The set $S$
of players for which the demand set contains {\em only} singletons, and the
set $P$. Hence the demand set of every player in $P$ contains pairs
too (it might also contain singletons). We also partition the items
into two sets, the set $MIN$ of items of min price, and the set $A$ of
other items. Namely $MIN= \{j \in \Omega|~ \forall j' \in \Omega,
~p(j) \leq p(j')\}$.  

We will recurrently use the following bipartite graph $G=([n],\Omega:
E)$, where $E =\{(i,x)| ~ \{x\} \in D_i(p)\}$. 
\begin{claim}
  \label{cl:at2_1}
If $|MIN| \geq 2$ then for every player $i \in P$, the demand set
$D_i(p)$ contain all pairs in $MIN$ and no other pairs. 
\end{claim}
\begin{proof}
This is obvious as if the minimum price is $p_0$, then the utility of
every pair of items from $MIN$ is $M -2p_0$ while every other
pair has smaller utility.  
\end{proof}
Following similar reasoning we get the following claim.
\begin{claim}\label{cl:min=1}
  Assume that $MIN = \{x\}$, namely $|MIN|=1$.  Let $MIN2 =
\{y\in \Omega| ~\forall z \neq x,~ p(y)\leq p(z) \}$. Namely,
$MIN2$ is the set of items of minimal price in $\Omega\setminus \{x\}$.
 Then for every $i \in P$ its demand pairs are $MIN \times MIN2$. \qed
\end{claim}

\begin{claim}
  \label{cl:at2_2}
There is an allocation of singletons to the players in $S$.
\end{claim}
\begin{proof}
  We only need to
 show that there is a perfect matching in $G$ w.r.t. $S$.

Indeed, if there is no perfect matching then by Hall marriage theorem
there is a set of players $S' \subseteq S$ such that $N(S') = \{x|~
\exists i \in S', ~(i,x) \in E(G) \}$ has cardinality $|N(S')| <
|S'|$. In that case consider an increase of $\epsilon > 0$ in the
price of every element in $N(S')$. The total price will increase by
$\epsilon \cdot |N(S')|$ while the total utility will decrease by at
least $\epsilon \cdot |S'|$. Hence the Lyapunov will decrease in
contradiction with the assumption that $p$ is an optimal price vector.
\end{proof}

\begin{claim}
  \label{cl:at2_3}
If $|MIN| \geq 2$ then there is an allocation of singletons that covers all $A$ and $S$.
\end{claim}
\begin{proof}
We first show that there is a matching $M_A$ in $G$ that matches $A$.  
Then, since Claim
  \ref{cl:at2_2} asserts the existence of a matching $M_S$ that
  matches $S$, it is easy to see that $M_S \cup M_A$ contains a
  matching that covers $A \cup S$. 

Indeed, assume that there is no matching in $G$ that matches $A$. In this
case, again by Hall theorem, there is a set of items $A' \subseteq A$
for which $|N(A')| < |A'|$, where $N(A') = \{i| ~ \exists x \in A',
~(i,x) \in E\}$. Consider the decrease in price by
$\epsilon > 0$ of every element in $A'$. $\epsilon$ is taken small enough so that the elements of minimum price
will not be affected. Then the total price will decrease by $\epsilon
|A'|$. The utility of players from singleton increases only for players
in $N(A')$, and in this case, increases by $\epsilon$. The utility of
pairs does not increase at all (by Claim \ref{cl:at2_1}, as we assume
that $|MIN| \geq 2$). Hence in total the Lyapunov decreases in
contradiction with the optimality.
\end{proof}

We now end the proof of the lemma for the case $|MIN| \geq 2$ by the
following argument. By Claim \ref{cl:at2_3} there is a matching in $G$
that cover $S$ and $A$. Let $M$ be a maximum matching in $G$ that
matches $A$ and $S$. By Fact \ref{fact1} we may assume the existence
of such maximum matching.

If $M$ does not define an envy-free allocation of all players, there
is a set of players $P' \subset P \cup S$ for which $|N(P')| <
|P'|$. It is standard that we may assume in addition that $M$ leaves
unmatched $|P'| - |N(P')|= n'$ players. As $M$ matches $S$, these
unmatched players are in $P$ and in particular, have any pair in $MIN$
in their demand sets. 

Assume $m'$ elements in $MIN$ remain unmatched by $M$. If $2n' \leq
m'$ then the allocation defined by $M$ can be augmented by a
collection of $n'$ disjoint pairs from the unmatched elements in
$MIN$, and we arrive at an envy-free allocation. Hence we may assume
that $2n' > m'$. 

Consider the following increased price vector $\tilde{p}$: for every element in $N(P')$
the price is increased by $2\epsilon$ and for every other element by
$\epsilon>0$ for some small enough $\epsilon$. 

We claim that the Lyapunov has decreased in contradiction with the
optimality of $p$. Indeed, $\Lyap(\tilde{p}) - \Lyap(p) = \sum_{i \in
  [n]} (u_{i,\tilde{p}} - u_{i,p}) + \sum_{x \in \Omega}
(\tilde{p}(x)- p(x))$.  To show that this difference is negative, we
will show that for each matched player (by $M$), its difference in
utility balances off the difference in the price of the element it is
matched to. We then will show that for the other players, the
difference in utility (which is negative), is more in absolute value
than the difference in the price of unmatched items.

Indeed, for every player in $P'$, The difference in the price of its item as
well as the difference in its  utility are $2\epsilon$. This is true
as such player has singletons only  in $N(P')$ (and possibly pair, but
the utility from a pair drops by at least $2\epsilon$). 

For player not in $P'$, obviously the difference in utility
as well as the difference in the price of its matched item, in
absolute value is $\epsilon$.

The total unaccounted difference in price is contributed by the unmatched
items. This is,   by our notations, $\epsilon \cdot m'$. The total
unaccounted difference in utility is contributed by the unmatched
players. Note that all unmatched players are in $P'$. For any such player $x \in P'$ (there are $n'$ such unmatched
players), the change in utility is $2\epsilon$, as singletons have
increased price of $2\epsilon$ while pairs have increased price of at
least $2\epsilon$. Hence, in total, by our assumption that $2n' > m'$
we get that the total Lyapunov has decreased.

For the case where $MIN=\{x\}$, Namely $|MIN|=1$, an analysis similar
in nature to this above shows that there is an envy free allocation. The
detailed argument for this case is presented in the appendix.
\ignore{
We now consider the case where $MIN=\{x\}$, Namely $|MIN|=1$. In this
case, let $M$ be a maximum matching in the graph $G$. We may assume
that $M$ is not perfect w.r.t. $P \cup S$ as otherwise, it would
define an envy free allocation. 

By Fact \ref{fact1} there is a cover of the edges of $G$ $C_1 \cup
C_2$, where $C_1 \subseteq (S \cup P), ~ C_1 \subseteq (A \cup
\{x\})$, and $|C_1| + |C_2| =|M|$. Also, since the players in $C_1$
are matched into $(A \cup \{x\} \setminus C_1$, it follows that 
\begin{equation}
  \label{eq:137}
 |C_2| \leq |(A \cup \{x\} \setminus C_2|
\end{equation}

 Moreover, since we assume that $|M|
< |S \cup P|$, it follows that 
\begin{equation}\label{eq:e311}
|C_2| < |(S \cup P) \setminus C_2|
\end{equation}
Assume first that $x \in C_2$. In that case, increasing prices of
items in $C_2$ by small enough $\epsilon$ increases the total price by
$\epsilon |C_2|$ while the utility of every player in $(S \cup P)
\setminus C_2$ decreases by $\epsilon$ (we use here the fact that
utility of pairs decreases by at least $\epsilon$ as $x \in C_2$). By
Equation (\ref{eq:311}) we conclude that the Lyapunov has decreased. 

Consider now the case where $x \notin C_2$. In this case either $m \leq
n$ (where $m = |\Omega|$), or $m > n$. In the first case the
increase of the price in every item in $A$ by $\epsilon$ obviously decreases
the Lyapunov (since the utility from of every player decreases by
exactly $\epsilon$). In the later case ($m > n$), decreasing the price of
every item in $A$ by $\epsilon$ decreases total price by $\epsilon
(m-1)$, while every player's utility increases by exactly
$\epsilon$. Hence the total Lyapunov does not increase. 
}
\end{proof}

\begin{proof}[of Theorem \ref{thm:wal2}]
Let $p$ be an optimal solution to the associated LP, $P2$ for the
valuations $v_i, i=1, \ldots, n$. Here we assume further that $p$ is
the minimal such optimal price vector (w.r.t. the order on prices as
defined in the preliminaries). 

  We show that $p$ poses a Walrasian allocation rather just an
  envy-free one. We consider the same graph $G$  as in
  the proof of Lemma \ref{lem:ev2}. Assume first that $|MIN| \geq
  2$. In this case Claim \ref{cl:at2_3} asserts that
  there is a matching in $G$ that matches $S \cup A$. Hence, by Fact
  \ref{fact1} there is a maximum matching $M$ in $G$ that matches $S \cup
  A$. We assume that this matching leaves $n'$
  players  and $m'$ items unmatched. Moreover, 
  Lemma \ref{lem:ev2} asserts that there is a envy-free allocation,
  hence $2n' \leq m'$ as the unmatched players are  allocated disjoint
  pairs from $MIN$. 
We denote this set of $m'$ unmatched items by
  $M'$ and the set of $n'$ unmatched players by $N'$.  

To understand the situation, disregarding $M$, we first consider 
 a maximum matching $M_{SA}$ between $S$ and
$A$. By Fact \ref{fact1} we know that there is a cover $C_S \cup C_A$
of all edges going from $S$ to $A$, where $C_S \subseteq S, ~ C_A
\subseteq A$ and $|C_s| + |C_A| = |M_{SA}|$. In addition, it follows
that $M_{SA}$ leaves $r= |A| -(|C_S| + |C_A|)$ items unmatched in $A$,
and $d= |S| - (|C_S) + |C_A|)$ players from $S$. 
Note also that since $C_S \cup C_A$ is a cover, it follows that there
is no edge between $S \setminus (C_S)$ and $A \setminus C_A$.

We now turn our attention back to the matching $M$. $M$ matches $S
\cup A$ (and some other items). It need not be consistent with the
matching $M_{SA}$, but as it matches all players in $S$ and all items
in $A$, it must use extra $r$ players from $P$ and extra $d$ items
from $MIN$.

Now, consider the price vector $\tilde{p}$ that is obtained from $p$ as
follows: We reduce the price of every item in  $A \setminus C_A$ by
$2\epsilon$ and reduce the price of any other item by 
$\epsilon$.
Hence the total decrease in price $-\Delta(\tilde{p}) = p(\Omega) -
\tilde{p}(\Omega) = \epsilon (m' + d + |C_A| + 2|C_S| + 2r )$. 

To analyze the change in utility, note that every player in $P$ has an
increase in its utility by $2\epsilon$. This is also the case for
players in $C_S$. However, note a player in $S \setminus C_S$ has
an increase of $\epsilon$, as its demand contains only singletons from
$C_A \cup MIN$. 
Hence the total increase in
utility is $\Delta(\tilde{u})= \sum_i u_{i,\tilde{p}} - u_{i,p} \leq
2\epsilon n' + 2 \epsilon r + 2\epsilon |C_S| + \epsilon |C_A| +
\epsilon d$. 

Comparing the decrease in price and increase in utility, taking into
account that $2n' \leq m'$ we conclude that $\Delta(\tilde{u}) \leq
-\Delta(\tilde{p})$ which implies that  $\Lyap(\tilde{p})
\leq \Lyap(p)$ which is a contradiction.

In the case $|MIN|=1$, Claim \ref{cl:min=1} implies that all pairs in
the demand sets intersect (in $MIN$). Since there is an envy free
allocation by Lemma \ref{lem:ev2}, it follows that at most one player can
be allocated a pair. We conclude that either $n-1$ items from $A$
are allocated  as singletons, plus one pair, or $n$ items are
allocated as singletons. In any case, if the allocation is not
Walrasian it follows that $|A| \geq
n$ and hence decreasing prices in all items in $A$ by $\epsilon$ will
not increase Lyapunov. 
\end{proof}

\subsection{ An Ascending auction for $GGS(2,k)$}\label{sec:at2_auc}
We propose here an ascending-auction for $GGS(2,k)$. This auction is a
natural generalization of induced gross substitute auction on
unit-demand valuations of which the $GGS(2,k)$ valuations are the
$(2,M)$-truncation. We will prove that this auction
terminates in a Walrasian allocation, and a corresponding
Walrasian equilibrium.

Recall that for $GGS(2,k)$ valuations, and a price vector $p$, we may
restrict the demand sets of each player to its demanded sets of size
at most two. Again, for a price $p$ we will partition the players into
these that have only singletons in their demand sets. These players
will  be called 'small' players. The other players have
pairs in their demand sets (and may also have singletons). We
disregard players whose demand sets contains the empty set.

We note that given a price vector $p$, we may consider the auction
induced only on the small players $S$, and corresponding singletons
$N(S)$. The demand sets for this induced auction are consistent with
the unit-demand valuation (which is gross substitute), and will be
referred as the induced gross substitute auction on $S$. We considered
several gross substitute ascending auctions, in what follows we refer
to the auction of Gul-Stacchetti in which a price of all items in the
over-demanded set $O^*$ is increased at every step.

We assume in what follows, as discussed before, that all valuations
are integral, and that the minimum Walrasian equilibrium price vector
is also integral.

\vspace{3.5mm}

\textbf{Auction:}
\begin{enumerate}
	\item Start with $p=0$
	\item	If the set of small players $S \neq \emptyset$, and
          there is no allocation for this set, 
          increase prices by taking one step of the induced 
          gross substitute auction on these players. Note that as a
          result some small players may cease to be small. Go back to 2.
	\item There is a perfect matching (partial allocation) of
          the small players to items. Find a maximum matching of
          players to singletons that matches all small players.
  Assume that $n'$ players and $m'$ items remained unmatched. 
  \item   \begin{itemize}
    \item If $2 n'\leq m'$ then $p$ is a Walrasian price vector.
    \item  Else ($2 n' > m'$), increase the price of all items in
      $MIN$ by $1$ and go back to
       2. with the new price vector and the empty allocation.
          \end{itemize}

\end{enumerate}

\vspace{3.5mm}

\begin{theorem}
  \label{thm:at2_auc}
Let $v_1, \ldots, v_n$ be $GGS(2,M)$ valuations. Then the auction
above stops at a Walrasian allocation, and a Walrasian equilibrium.
\end{theorem}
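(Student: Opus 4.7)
The plan is to establish three things: (i) the invariant $p_t \leq p^\ast$ throughout the auction, where $p^\ast$ is a minimal Walrasian price vector granted by Theorem~\ref{thm:wal2}; (ii) termination; and (iii) a Walrasian allocation at termination.

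For (i), the Step~2 case reuses the ascending GS argument of Proposition~\ref{upper:bound}, applied to the induced unit-demand subproblem on the current set $S$ of small players and their singleton demands: since $p^\ast$ supports an envy-free allocation on $S$ (by Lemma~\ref{lem:ev2}), the induced $f$-function satisfies $f^S_{p^\ast}(O^\ast) \leq 0$, and monotonicity (Lemma~\ref{mon}) forces $p_t(j) < p^\ast(j)$ for every $j \in O^\ast$. The Step~4 case is the main obstacle: I need that whenever the trigger $2n' > m'$ fires, every $j \in \mathrm{MIN}$ already satisfies $p_t(j) < p^\ast(j)$. I plan to argue by contradiction, assuming $p_t$ agrees with $p^\ast$ on some $j_0 \in \mathrm{MIN}$ and replaying the matching/cover analysis from the proof of Theorem~\ref{thm:wal2} at $p_t$ to deduce that $2n' \leq m'$ already holds at $p_t$. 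A robust fallback, if this direct approach proves unwieldy, is to show instead that every execution of Step~4 strictly decreases $\Lyap$ (using that, by Claim~\ref{cl:at2_1} and integrality of prices, each big player's utility drops by exactly $2$ when $\mathrm{MIN}$ rises by $1$), and then conclude termination from integrality and boundedness of $\Lyap$.

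Granting (i), termination (ii) is immediate: prices are integer valued, coordinatewise bounded by the fixed vector $p^\ast$, and the total price strictly increases at each iteration, so the auction halts.

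For (iii), when Step~4 exits, the maximum matching $M$ from Step~3 saturates all small players; assuming $|\mathrm{MIN}| \geq 2$, Claim~\ref{cl:at2_3} combined with Fact~\ref{fact1} lets me take $M$ to also saturate all of $A$, so the $n'$ unmatched players lie in $P$ and the $m'$ unmatched items lie in $\mathrm{MIN}$, with $2n' \leq m'$. By Claim~\ref{cl:at2_1} every player in $P$ demands exactly the pairs inside $\mathrm{MIN}$, so any $n'$ disjoint pairs taken from the unmatched items complete an envy-free allocation. The remaining $m' - 2n'$ items in $\mathrm{MIN}$ must have price zero---otherwise a small uniform decrease on them would strictly reduce $\Lyap$, mirroring the final Lyapunov decrease at the end of the proof of Theorem~\ref{thm:wal2}---yielding a Walrasian allocation. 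The edge case $|\mathrm{MIN}| = 1$ is handled analogously using Claim~\ref{cl:min=1}.
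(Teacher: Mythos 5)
There are genuine gaps here, and they all stem from one source: every optimality-based fact you import from Section \ref{sec:ggs(2,m)} is proved in the paper only at a Lyapunov-optimal price vector, whereas the auction's intermediate prices are not optimal; the domination invariant $p_t\leq p^*$ is precisely what must substitute for optimality, so it cannot itself be derived from those facts. Concretely, for Step 2 you cannot obtain $f^S_{p^*}(O^*)\leq 0$ from Lemma \ref{lem:ev2}: the envy-free allocation at $p^*$ is with respect to the $GGS(2,M)$ demands, and a player that is small at $p_t$ may be big at $p^*$ and be allocated a pair there, so no allocation of demanded singletons to $S$ at $p^*$ is provided. Even granting that inequality, Lemma \ref{mon} only excludes the case in which $p_t$ already equals $p^*$ on \emph{all} of $O^*$; it says nothing in the mixed case where only part of $O^*$ has reached $p^*$, which is exactly what must be ruled out to get $p_t+1_{O^*}\leq p^*$. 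In Proposition \ref{upper:bound} that mixed case is handled by Lyapunov submodularity together with optimality of the target price for the relevant Lyapunov, and $p^*$ is not known to be optimal for the Lyapunov of the induced unit-demand subproblem. The paper's Proposition \ref{prop:at2_auc1} closes this differently: if a nonempty $O^=\subseteq O^*$ sits at the $p^*$ level, the minimality/extremality of $O^*$ forces a set $B$ of more than $|O^=|$ players whose demanded singletons lie in $O^=$ and nowhere outside $O^*$; since pairs and outside singletons only became weakly more expensive at $p^*$, these players still demand only inside $O^=$ at $p^*$, so $O^=$ is over-demanded at the Walrasian price, a contradiction. This argument (or a substitute for it) is absent from your proposal.

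For Step 4, which you rightly identify as the main obstacle, you give only a plan, and the plan does not work as stated: Claims \ref{cl:at2_2} and \ref{cl:at2_3} and the counting in the proof of Theorem \ref{thm:wal2} are contradictions with Lyapunov optimality, so they cannot be ``replayed'' at the non-optimal $p_t$ to conclude $2n'\leq m'$. The paper's Proposition \ref{increasMin} argues instead from an equal-price set $O^=\subseteq MIN$, splitting on $|O^=|\geq 2$ versus $|O^=|=1$; in the latter case it constructs a strictly smaller Walrasian vector $p^{**}$, contradicting the \emph{minimality} of $p^*$ --- a hypothesis your proposal never invokes but which is essential exactly here. Your fallback is also unsound: under a unit increase on $MIN$, a big player's utility drops by $2$ only if its demand contains no singleton outside $MIN$ (Claim \ref{cl:at2_1} constrains pairs, not singletons), so a strict decrease of $\Lyap$ is not guaranteed; and even if it were, it would yield termination but neither the invariant nor correctness. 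Finally, at exit, your appeal to Claim \ref{cl:at2_3} and your ``the leftover $MIN$ items must have price zero, else decreasing them reduces $\Lyap$'' step again presuppose that the terminal price is Lyapunov-optimal, which is what is being proved, so the argument is circular. The paper avoids this entirely with Proposition \ref{prop:gen_dom}: an envy-free allocation at a price dominated by a Walrasian $p^*$ is automatically Walrasian. That proposition, combined with the domination invariant maintained by Propositions \ref{prop:at2_auc1} and \ref{increasMin}, is the actual closing step, and your proposal has no equivalent of it.
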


\begin{proof}
It is obvious, by the definition of the auction that the price vector
is integral at every step. 
Theorem \ref{thm:wal2} asserts that there is a Walrasian equilibrium.
 Let $p^*$ be an integral Walrasian price vector. Propositions
 \ref{prop:at2_auc1}, \ref{alloc} and \ref{increasMin}
 imply that the proposed auction never increases the prices beyond
 $p^*$, and that there exists an envy free allocation when it stops. 
 Proposition \ref{prop:gen_dom} implies then that the 
 allocation is Walrasian.
\end{proof}
 
\begin{proposition}\label{prop:at2_auc1}
  Let $p$ be a current price vector in the auction above, for which $p
  \leq p^*$.  If there is no allocation of the small players
  w.r.t. $p$, and hence an increase in price is done is step 2., then
  the resulting price vector $p+1_{O^*} \leq p^*$. 
\end{proposition}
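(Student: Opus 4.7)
The plan is to argue by contradiction: suppose that $p + 1_{O^*} \not\leq p^*$. Since $p \leq p^*$ and all prices are integer (by the paper's convention), there must exist some $j_0 \in O^*$ with $p(j_0) = p^*(j_0)$. Write $E = \{j \in \Omega : p(j) = p^*(j)\}$, so the assumption becomes $O^* \cap E \neq \emptyset$.

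First I interpret the over-demand combinatorially. Every small player $i \in S$ has $\emptyset \notin D_i(p)$ (small players whose demand contains $\emptyset$ are disregarded), so the minimal demand sets are exactly the singletons in $D_i(p)$; hence $f_{i,p}(T) \in \{0,1\}$ equals $1$ iff every such singleton is contained in $T$. Defining $W(T) = \{i \in S : \text{every } \{j\} \in D_i(p) \text{ satisfies } j \in T\}$, we obtain $f_p(T) = |W(T)| - |T|$. The over-demand condition gives $|W(O^*)| \geq |O^*| + f_p(O^*) \geq |O^*| + 1$, while the minimality of $O^*$ together with integrality of $f_p$ gives $|W(T)| \leq |T| + f_p(O^*) - 1$ for every $T \subsetneq O^*$. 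Applying this to the proper subset $T = O^* \setminus E$ and subtracting, I get
\[
|W_E| \;\geq\; |O^* \cap E| + 1, \qquad W_E := W(O^*) \setminus W(O^* \setminus E).
\]
Concretely, $W_E$ is the set of small players whose demand singletons all lie in $O^*$ and at least one of which lies in $E$.

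Next I derive a contradiction by examining, for each $i \in W_E$, the demand $D_i(p^*)$ under the full $GGS(2,M)$ valuation. Any singleton $\{j\}$ with $j \notin O^*$ is not in $D_i(p)$ (since $i \in W(O^*)$), so its utility at $p$ is strictly below $u_{i,p}$, and at $p^*$ only drops further. Any singleton $\{j\}$ with $j \in O^* \setminus E$ has $p^*(j) > p(j)$, so its utility at $p^*$ is strictly below its utility at $p$, hence strictly below $u_{i,p}$. Any pair has utility strictly below $u_{i,p}$ at $p$ by smallness of $i$, and no greater at $p^*$. Finally, any singleton $\{j\} \in D_i(p)$ with $j \in O^* \cap E$ retains utility exactly $u_{i,p}$ at $p^*$. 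Therefore $u_{i,p^*} = u_{i,p} > 0$, and $D_i(p^*)$ consists solely of singletons in $O^* \cap E$. In any Walrasian allocation at $p^*$, the members of $W_E$ thus receive distinct items from $O^* \cap E$, forcing $|W_E| \leq |O^* \cap E|$ and contradicting the Hall-type lower bound from the previous paragraph.

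The delicate point is that $O^*$ is produced by the induced gross-substitute auction on the unit-demand truncations, while $p^*$ is Walrasian for the full $GGS(2,M)$ valuations; the heart of the argument is ensuring that no player in $W_E$ can drift to a pair-demand or to a singleton outside $O^* \cap E$ as prices rise from $p$ to $p^*$. The strict smallness of $i$ at $p$ combined with monotonicity $p \leq p^*$ is precisely what closes this loop.
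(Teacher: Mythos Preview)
Your proof is correct and follows essentially the same line as the paper's: your set $O^*\cap E$ is the paper's $O^=$, your $W_E$ is the paper's set $B$, and your Hall-type inequality $|W_E|\geq |O^*\cap E|+1$ is exactly their claim $|B|>|O^=|$, obtained in both cases from the minimality of $O^*$. The only difference is presentational: the paper treats the case $O^*=O^=$ (i.e.\ $O^*\setminus E=\emptyset$) separately, whereas your subtraction argument absorbs it uniformly; and you spell out more carefully why each $i\in W_E$ stays small at $p^*$ (pairs had strictly smaller utility at $p$ and cannot recover under $p^*\geq p$), a point the paper leaves implicit.
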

\begin{proof}
Recall that the prices in the gross substitute auction that is induced
on the small players increases the price by $1$ for every item in
$O^*_p$ where $O^*_p$ is the minimal most over-demanded set. Now,
assume for the contrary that $p+1_{O^*}$ is not dominated by $p^*$. Then
(using the integrality assumption) 
there must be  a non empty set $O^= \subseteq O^*_p$ of items for which
$\forall i \in O^=,~ p(i) = p^*(i)$. Moreover, $O^* \neq O^=$ as
otherwise, $f_{p^*}(O^*) \geq f_p(O^*) >0$. This is so since 
if $O^* = O^=$ then $\forall i \in O^*, ~p^*(i)= p(i)$, while  all items outside $O^*$ 
have non smaller prices in $p^*$. Hence the demand sets of all
the players that contribute $1$ to $f_p(O^*)$ remain unchanged w.r.t. $p^*$.

Let $B$ the set of players that have demand sets in $O^=$ and that
have no demand set outside $O^*$.  We first claim that $|B| >
|O^=|$. Indeed, otherwise, $f_p(O^* \setminus O^=) \geq f_p(O^*)$
which is in contradiction to either the minimality or extremity of
$O^*$.

We conclude that w.r.t. $p^*$, the set of players $B$ have their
demand sets only in $O^=$ (as the prices of these items did not
increase w.r.t. $p$ while the prices of other items in $O^*$ did
increase). Hence $O^=$ is over-demanded set w.r.t. $p^*$ in
contradiction that $p^*$ is Walrasian. 
\end{proof}

\begin{proposition}
\label{increasMin}
Let $p\leq p^*$ be a current price vector, assume further that 
there is an allocation for the small items, and that price is
increased in step 4. due to the fact that $2n' > m'$, then the
resulting price is still dominated by $p^*$. 
\end{proposition}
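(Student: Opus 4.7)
The plan is to argue by contradiction: suppose $p + 1_{MIN} \not\leq p^*$. Because both vectors are integer and $p \leq p^*$, the set $MIN^{=} = \{ j \in MIN : p(j) = p^*(j) \}$ is nonempty. First, I would establish the structural identity $MIN^{=} = MIN^*$ together with $p_{\min} = p^*_{\min}$: for any $j \in MIN^{=}$, $p^*(j) = p(j) = p_{\min} \leq p(k) \leq p^*(k)$ for every $k$, so $j \in MIN^*$ and the minimum prices of $p$ and $p^*$ coincide; conversely, any $j' \in MIN^*$ satisfies $p(j') \leq p^*(j') = p^*_{\min} = p_{\min}$ while $p(j') \geq p_{\min}$ by minimality at $p$, forcing $p(j') = p^*(j')$ and placing $j'$ in $MIN^{=}$. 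In particular $MIN^* \subseteq MIN$.

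The second step uses this identity to pin down the demand structure. Because $p_{\min} = p^*_{\min}$, every player big at $p$ stays big at $p^*$: its pair utility $M - 2 p_{\min}$ is unchanged while singleton utilities only weakly decrease, so a pair that was optimal at $p$ is still optimal at $p^*$. By Claim~\ref{cl:at2_1} applied at $p^*$, its pair demand at $p^*$ lies inside $MIN^* \times MIN^* = MIN^{=} \times MIN^{=}$. I would also record the converse pinning: any singleton $\{ x \}$ in the demand at $p$ of a player that is big at $p$ must satisfy $p(x) = p^*(x)$, since otherwise the singleton utility $v_i(x) - p(x)$ would strictly exceed $M - 2 p_{\min}$, contradicting that the player is big at $p$; the same pinning applies at $p^*$.

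The contradiction is then obtained by lifting the Walrasian allocation $A^* = (S_1^*, \ldots, S_n^*)$ guaranteed at $p^*$ to an envy-free allocation at $p$, which violates the triggering condition $2n' > m'$ of step 4. Each pair $S_i^* \subseteq MIN^{=}$ is a demand set of its allocated player at $p$ by the identity of pair utilities; each singleton $S_i^* = \{ x \}$ allocated to a player big at $p^*$ has $p(x) = p^*(x)$ by the pinning, hence is a demand set at $p$ as well; and singletons allocated in $A^*$ to players small at $p^*$ (which must also be small at $p$, by the contrapositive of bigness preservation) can be absorbed via a Hall/augmenting-path rearrangement of the max matching at $p$ along $A^*$-edges inside the equal-price set $E = \{ x : p(x) = p^*(x) \}$. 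The hard part will be this final matching bookkeeping: carefully verifying that the disjointness of $A^*$'s sets and the coverage of all positive-priced items transfer, possibly after alternating-path swaps, to an envy-free allocation at $p$ that simultaneously matches every small player at $p$ and accommodates the $n'$ unmatched big players with pairs drawn from the unused portion of $MIN^{=}$, without colliding with the matching already in place from step~3.
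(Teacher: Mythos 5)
Your opening reduction is sound: if $p+1_{MIN}\not\le p^*$ then (by integrality and $p\le p^*$) the set $MIN^{=}=\{j\in MIN:\ p(j)=p^*(j)\}$ is nonempty, and your identity $MIN^{=}=MIN^*$ together with $p_{\min}=p^*_{\min}$ is correct. But everything that follows silently assumes $|MIN^{=}|\ge 2$: the assertion that the pair utility $M-2p_{\min}$ is unchanged, the preservation of bigness from $p$ to $p^*$, and the use of Claim~\ref{cl:at2_1} at $p^*$ all require at least two items of minimum $p^*$-price. When $MIN^{=}$ is a single item $x$, the cheapest pair at $p^*$ costs $p^*(x)+\min_{z\ne x}p^*(z)$, which can strictly exceed $2p_{\min}$; big players at $p$ need not remain big at $p^*$, the (at most one) pair allocated at $p^*$ contains an item whose price is strictly lower at $p$, and no pair of $A^*$ lies inside $MIN^{=}\times MIN^{=}$, so there is nothing to lift. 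This is precisely the case the paper must handle by a different idea (Claim~\ref{cl:min=1} together with the price vector $p^{**}$ obtained from $p^*$ by lowering all prices except $p^*(x)$ by $\alpha$, contradicting the \emph{minimality} of $p^*$); your plan does not cover it.

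Even in the case $|MIN^{=}|\ge 2$, the central lifting step is not justified, because your demand-transfer claims go in the wrong direction. You correctly show ``big at $p$ $\Rightarrow$ big at $p^*$'', but to certify that the Walrasian allocation $A^*$ is envy-free at the \emph{lower} price $p$ you need the converse: that a pair or singleton allocated at $p^*$ is still a demand set at $p$. This fails in general, since items outside the equal-price set are strictly cheaper at $p$, so a player allocated a pair at $p^*$ may strictly prefer some singleton at $p$, and a player allocated a singleton at $p^*$ may strictly prefer a different singleton at $p$; in particular your statement ``each pair $S_i^*\subseteq MIN^{=}$ is a demand set of its allocated player at $p$'' is false as stated, and the ``pinning'' claim $p(x)=p^*(x)$ does not follow from the inequality you invoke (it would need the player to be big, with equal pair utility, at \emph{both} price vectors). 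Consequently the ``matching bookkeeping'' you defer is not bookkeeping --- it is the whole content of the proposition (and even granting an envy-free allocation at $p$, relating it to the specific maximum matching of step~3 to contradict $2n'>m'$ still needs an augmenting-path count). The paper avoids all of this by arguing at $p^*$ rather than at $p$: pairs in $MIN^{=}$ have identical utility under $p$ and $p^*$, each small player's utility drop is offset by the price increase of its matched item, so $\Lyap(p^*)\ge \Lyap(p)$, contradicting the optimality/minimality of $p^*$. As it stands, your proposal establishes a useful structural identity but does not prove the proposition.
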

\begin{proof}
Assume that $p \leq p^*$ and that contrary to the proposition prices
are to be increased in step 4. resulting in the price $p' = p+1_{MIN}$
which is
not dominated by $p^*$ any more. Hence there is a non empty set $O^=
\subseteq MIN$ of items, for which $\forall i \in O^=,~ p(i) = p^*(i)$. 

We consider several cases here. Assume first that $|O^= \cap MIN| \geq
2$. In this case, all players that had pairs in their demand set
w.r.t. $p$ will still have all pairs in $O^= \cap MIN$ in their demand
set w.r.t. $p^*$ and with the same utility. For small players, since
in $p^*$ they are allocated, their change in utility balances off the
change in price of the items they are matched with. Thus, altogether,
since utility matches off change in price for small players, 
utility has remained the same for non small players, while total
price has increased, it must be the $\Lyap(p^*) \geq \Lyap(p)$ which
is in contradiction to the optimality of $p^*$.

Consider now the case in which $O^= \cap MIN = \{x\}$. Let $MIN2 =
\{y\in \Omega| ~\forall z \neq x,~ p^*(y)\leq p^*(z) \}$. Namely,
$MIN2$ is the set of items of minimal price w.r.t. $p^*$ in $\Omega
\setminus \{x\}$. In this case, w.r.t. $p^*$, all players that are not
small want all pairs in $\{x\} \times MIN2$. Hence, only one such
player can be allocated a pair, and if there is such player, its
utility increased by $\alpha = p^*(y) - p^*(x)$ w.r.t. its utility in
$p$.

Recall also that in $p^*$ there is a Walrasian allocation. Hence, this
allocation assigns singletons to all players possibly except one as
explained above. Consider now the price vector $p^{**}$ obtained from
$p^*$ by decreasing the price of every item except $x$, by
$\alpha$. By the discussion above, the allocation w.r.t. $p^*$ is
still a valid allocation w.r.t. $p^{**}$. Hence $p^{**}$ is a
Walrasian equilibrium in contradiction with the minimality of $p^*$.

\end{proof}
\begin{proposition}
\label{alloc}
If $2n'\leq m'$ then $p$ is a Walrasian price vector.
\end{proposition}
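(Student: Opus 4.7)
The plan is to construct an envy-free allocation for $p$ and combine it with the bound $p \leq p^*$ (supplied by Propositions \ref{prop:at2_auc1} and \ref{increasMin}) and Proposition \ref{prop:gen_dom} to conclude that $p$ is Walrasian. I will reuse the bipartite graph $G=([n],\Omega:E)$ with $E=\{(i,x): \{x\}\in D_i(p)\}$ and the partition of players into \emph{small} players $S$ (demand of singletons only) and $P$ (demand containing a pair), together with the partition $\Omega = MIN \cup A$ used in Lemma \ref{lem:ev2}.

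Assume first that $|MIN|\geq 2$. Claim \ref{cl:at2_3} produces a matching of $G$ covering $S\cup A$, so by Fact \ref{fact1}(1) I may assume that the maximum matching $M$ chosen in step $3$ of the auction also matches $A$. The $m'$ unmatched items then all lie in $MIN$ and the $n'$ unmatched players all lie in $P$. Since $2n'\leq m'$, I partition $2n'$ of the unmatched items into $n'$ pairwise disjoint pairs and assign one such pair to each unmatched player. Claim \ref{cl:at2_1} guarantees that every pair from $MIN$ is a demand set of every player in $P$, so every such player receives a set in its demand; combined with $M$ this yields an envy-free allocation.

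If instead $|MIN|=1$, Claim \ref{cl:min=1} says every demand pair of a $P$ player has the form $\{x\}\times MIN2$ where $\{x\}=MIN$, so all such pairs share $x$ and at most one player can be allocated a pair. Hence $n'\leq 1$. If $n'=0$ the matching $M$ is itself an envy-free allocation. If $n'=1$ then $m' \geq 2$, and an augmenting-path argument in $G$ (combined with Fact \ref{fact1}(1)) re-routes $M$ so that $x$ becomes unmatched while every small player stays matched to a demanded singleton; pairing $x$ with any remaining unmatched item of $MIN2$ completes the envy-free allocation.

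I expect the $|MIN|=1$ sub-case to be the main technical obstacle: one has to certify that the augmenting-path re-routing preserves the matching of every small player without disturbing the demand graph outside $MIN\cup MIN2$. Once an envy-free allocation is in hand, Proposition \ref{prop:gen_dom} together with $p\leq p^*$ converts it into a Walrasian allocation, finishing the proof.
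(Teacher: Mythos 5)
Your overall skeleton — build an envy-free allocation at the current price, then combine $p\le p^*$ with Proposition \ref{prop:gen_dom} — is exactly the paper's route, and your $|MIN|\ge 2$ construction is the intended one. However, the step you use to force all unmatched items into $MIN$ is not available: Claim \ref{cl:at2_3} (like Claim \ref{cl:at2_2}) is proved inside Lemma \ref{lem:ev2} under the hypothesis that $p$ is an \emph{optimal} (Lyapunov-minimizing) price vector; its proof turns a Hall violator into a decrease of $\Lyap$ below $\Lyap(p)$, contradicting optimality. The price reached at step 3 of the auction is in general not optimal (it is still strictly below the minimal Walrasian price), so that contradiction has no force and you cannot conclude that $G$ has a matching covering $S\cup A$. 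Without that, $2n'\le m'$ does not by itself yield $n'$ disjoint demand pairs among the unmatched items, because the demand pairs of players in $P$ lie inside $MIN$ (Claim \ref{cl:at2_1}) while unmatched items could a priori lie in $A$. Whatever argument closes this hole must use the auction dynamics or $p\le p^*$, not the optimality-based claims from Lemma \ref{lem:ev2}.

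The $|MIN|=1$ branch contains further unsupported steps. First, $n'\le 1$ does not follow from ``at most one player can receive a pair'': $n'$ is the deficiency of a maximum matching, and nothing you state rules out $n'\ge 2$ (if that configuration cannot occur at this point of the auction, it has to be derived, e.g.\ from $p\le p^*$ and Claim \ref{cl:min=1}). Second, the re-routing that frees $x$ need not exist: $x$ may be covered by \emph{every} maximum matching, in which case any matching avoiding $x$ is strictly smaller, a second player becomes unmatched, and only one demand pair (necessarily through $x$) is available. Third, even when $x$ is freed, pairing it with ``any remaining unmatched item of $MIN2$'' presumes an unmatched item of $MIN2$ exists, which you never establish. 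For comparison, the paper's own proof of this proposition is a one-line assertion that the unmatched players can be assigned disjoint pairs of remaining items, followed by Proposition \ref{prop:gen_dom}; your instinct to actually justify that assertion is sound, but the justification as written does not go through.
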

\begin{proof}
In this situation there is an envy-free  allocation (as the $n'$
players can be
assigned arbitrary disjoint pairs from the remaining items). Since by
induction $p \leq p^*$, Proposition \ref{prop:gen_dom} implies that $p$ is Walrasian.
\end{proof}

\begin{proposition}
  \label{prop:gen_dom}
Let $p^*$ be Walrasian equilibrium for an arbitrary combinatorial
auction \emph{(}not necessarily submodular\emph{)}. Let $p \leq p^*$ so that there
is an envy-free allocation with respect to $p$. Then $p$ is
Walrasian equilibrium. 
\end{proposition}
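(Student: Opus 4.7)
The plan is to prove that $\Lyap(p) = \Lyap(p^*)$, and then invoke the Bikhchandani--Mamer characterization recalled in Section~\ref{sec:prelim}: since $p^{*}$ is Walrasian, the integrality gap of the winner-determination LP is $1$, and in this situation every optimal dual price vector is Walrasian. Thus, once $p$ is shown to minimize $\Lyap(\cdot)$, the proposition follows.

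First, I would record that $\Lyap(p^*) = W^*$, where $W^*$ is the optimal social welfare. If $(T_1,\ldots,T_n)$ is a Walrasian allocation supporting $p^*$, then $u_{i,p^*} = v_i(T_i) - p^*(T_i)$; summing over $i$ and using that every item outside $\bigcup_i T_i$ has $p^*$-price zero gives $\sum_i u_{i,p^*} + p^*(\Omega) = \sum_i v_i(T_i) = W^*$. Weak LP duality, applied to the dual $P2$, then yields $\Lyap(p) \geq \Lyap(p^*)$ for any price vector $p$.

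The heart of the argument is the reverse inequality $\Lyap(p) \leq \Lyap(p^*)$, which I would establish using the envy-free allocation $(S_1,\ldots,S_n)$ for $p$. Envy-freeness gives the equality $u_{i,p} = v_i(S_i) - p(S_i)$, while the fact that $u_{i,p^*}$ is a maximum over all sets gives the inequality $u_{i,p^*} \geq v_i(S_i) - p^*(S_i)$. Subtracting, $u_{i,p} - u_{i,p^*} \leq (p^*-p)(S_i)$. Summing over $i$ and exploiting the disjointness of the $S_i$ produces $\sum_i u_{i,p} - \sum_i u_{i,p^*} \leq (p^*-p)\bigl(\bigcup_i S_i\bigr)$. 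Adding $p(\Omega) - p^*(\Omega) = -(p^*-p)(\Omega)$ to both sides of the difference $\Lyap(p) - \Lyap(p^*)$ gives $\Lyap(p) - \Lyap(p^*) \leq -(p^*-p)(U)$, where $U = \Omega \setminus \bigcup_i S_i$ is the set of unallocated items. Since $p \leq p^*$, the right-hand side is non-positive.

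Combining, $\Lyap(p) = \Lyap(p^*) = W^*$, so $p$ is an optimal dual solution to $P2$, and Bikhchandani--Mamer then delivers that $p$ is a Walrasian price vector. I do not anticipate a serious obstacle: the argument uses only LP weak duality together with the pointwise inequality $u_{i,p^*} \geq v_i(S_i) - p^*(S_i)$ and the monotonicity hypothesis $p \leq p^*$, so no GS or submodularity structure is invoked, matching the ``arbitrary valuation'' hypothesis. The one mild subtlety is that the starting envy-free allocation $(S_i)$ need not itself be Walrasian; the existence of a Walrasian allocation for $p$ is obtained via dual optimality rather than by exhibiting one directly.
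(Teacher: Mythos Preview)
Your proof is correct and follows essentially the same route as the paper: both arguments show $\Lyap(p)\le \Lyap(p^*)$ by writing $u_{i,p}=v_i(S_i)-p(S_i)$ for the envy-free allocation, comparing with $u_{i,p^*}\ge v_i(S_i)-p^*(S_i)$, and absorbing the slack into the nonnegative term $(p^*-p)(\Omega\setminus\bigcup_i S_i)$; dual optimality of $p^*$ then forces equality. You are somewhat more explicit than the paper in invoking Bikhchandani--Mamer for the final ``$\Lyap$-optimal $\Rightarrow$ Walrasian'' step, and in noting that the exhibited envy-free allocation need not itself be the Walrasian one, but the substance is the same.
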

\begin{proof}
  Let $(S_1^*, \ldots ,S_n^*)$ be the Walrasian allocation
  w.r.t. $p^*$ and let $(S_1, \ldots, S_n)$ be the envy-free
  allocation w.r.t. $p$. Let $S = \Omega \setminus \cup_{i=1}^n S_i$
  be the set of unallocated items. We get,

$$\Lyap(p) = p(S) + \sum_i (u_{i,p}(S_i) + p(S_i)) =$$
$$ \sum_i (u_{i,p^*}(S_i) +
p^*(S_i)) - (p^*(S) - p(S)) \leq $$
$$\sum_i u_{i,p^*}(S_i^*) + \sum_i p^*(S_i^*) =
\Lyap(p^*)$$
Where the second equality is since the drop in utility of the set
$S_i$ is just the increase in the prices of the corresponding
items. The first inequality is since $S_i$ is not necessarily in the
demand set of $i$ w.r.t. $p^*$, and since $(p^*(S) - p(S)) \geq 0$. 

Hence, since $\Lyap(p) \leq \Lyap(p^*)$ equality must occur and $p$ is
Walrasian too.
\end{proof}

\ignore{
\section{A generalized Ascending Auction}
We present an ascending auction that finds a Walrasian equilibrium for valuations that are general gross substitute with parameters $(k,M)$. The auction can be thought of as a generalization of the auction for gross substitute valuations.%

\begin{definition} [Partial allocation]
We call a collection of disjoint sets ${\cal S} = \{S_{i1},S_{i2},\ldots,S_{il}\},\ s.t.\ S_{ik}\subseteq [m]$ and players $i1,i2,\ldots il$ a \emph{partial allocation} of size $l$.
\end{definition}
\begin{definition} [Extended allocation]
Let ${\cal S} = \{S_{i1},S_{i2},\ldots,S_{il}\}$ be a partial allocation. We call a partial allocation ${\cal S}'$ an \emph{extended allocation} of ${\cal S}$, if ${\cal S}' = \{S_{i1},S_{i2},\ldots,S_{il}, S_{i(l+1)},\ldots,S_{i{l'}}\}$. That is, ${\cal S}'$ respects the allocation ${\cal S}$ but possibly have some more sets.
\end{definition}
\textbf{Auction:}
\begin{itemize}
	\item Start with $p=0$
	\item	If there are small players that are not big, increase prices by the induced gross substitute auction on these players
	\item Let ${\cal S}$ be the partial allocation that implicitly results from the last step, that is, an allocation of the induced gross substitute auction on small and not big players. Remove this allocation, namely ignore these players and items in the following
	\item Let ${\cal S}' = \{S_{i1},S_{i2},\ldots,S_{il}\}$ be a maximum size extended allocation of ${\cal S}$ to small demanded sets
  \item If $k(n-l)\leq m-(k-1)l$ then $p$ is a Walrasian price vector.
	\item Else ($k(n-l) > m-(k-1)l$), increase all minimum prices
\end{itemize}
\textbf{Analysis:}
Let $p^*$ be a Walrasian price vector. The next three propositions prove that the proposed auction never increases the prices too much and that there exists an envy free allocation when it stops.
\begin{proposition}
If there are small players that are not big, we can increase prices by the gross substitute auction.
\end{proposition}
\begin{proof}
%
This is a corollary of proposition~\ref{upper:bound}.
\end{proof}

\begin{proposition}
\label{alloc}
If $k(n-l)\leq m-(k-1)l$ then $p$ is a Walrasian price vector.
\end{proposition}
\begin{proof}
After allocating the $l$ sets in ${\cal S}'$ there are $n-l$ players left and $m-(k-1)l$ items. If all of which with the same minimum price, we can allocate big sets to all the remaining players. Notice that the only case of increasing the prices not to all minimum price's items is when applying the induced gross substitute auction. But then, by the correctness of that auction, no item will be dropped, hence, all relevant items have the same (minimum) price.
\end{proof}

\begin{proposition}
\label{increasMin}
If $k(n-l) > m-(k-1)l$ we can increase all minimum prices.
\end{proposition}
\begin{proof}
  Assume not and let $p^*$ be a Walrasian price vector for which the
  minimum price is the same as in $p$. Since the prices of all
  elements in big sets are the same, the utility of big players in
  $p^*$ is the same as in $p$. Since the utility from a small set in
  $p^*$ is at-most the utility in $p$, no big player will stop being
  big. Hence, in $p^*$ the demand of all players, but those that are
  small and not big, is a subset of the demand in $p$, so $p^*$ cannot
  be an equilibrium price vector.
\end{proof}
}

\section{Further Discussion - Witnesses for the non-existence of an allocation}\label{sec:further}

As already discussed before, a crucial feature of gross substitute
auctions, that allowed an ascending auction is the existence of a
condition that characterize non envy-free prices. Namely, as shown, a
price $p$ does not posses envy-free allocation if and only if there is
a set of over demanded items, namely $O^*_p
\neq \emptyset$. One reason is that assuming that a current non
envy-free price $p$ satisfies $p \leq p^*$ for a Walrasian (or even
just envy-free) $p^*$ directs the auction to what prices should be
increased.

For valuation in $GGS(2,M)$, the condition above was shown not to hold,
but yet, another characterization was developed (although not explicitly
stated). Essentially, this can be read off the proof of Lemma
\ref{lem:ev2}. It states that $p$ is envy-free if and only if there is
a maximum matching of singletons to players that covers all players in $S$,
and leaves $n'$ unmatched players, $m'$ unmatched items from $MIN$
with $2n' \leq m'$.

Hence, a good starting point for the existence of ascending auction
that finds a Walrasian equilibrium if such exists should be the
existence of such characterization. This is also interesting from the
complexity (namely, the computational resources to decide on next step
in an auction) point of view.
It puts the decision
whether a given price vector posses envy-free allocation in $NP \cap
CO-NP$ (here the input is taken as the  minimal demand sets
$D_i^*(p),~ i \in [n]$).

Such a characterization is not expected for any set of valuations. In
fact it can be shown that for some class of valuations, where the
demand sets are restricted to size two (and moreover, each demand set
contains at most two pairs), to decide whether the $0$-price allows
envy-free allocation is NP-hard.
The same result holds for the well known class of fractionally sub-additive
valuations (aka XOS), see~\cite{Feige2006, DobzinskiNiSh2005} for definitions.
However, these classes of allocation are
not submodular, and we do not know if for any submodular class, the
above characterization question is tractable. 

For $GGS(k,M)$ an inefficient characterization exists in a sense, 
(in particular not known to be in $co-NP$ even for
constant $k$). Moreover, we don't know how to use it to
construct an ascending auction that converges to a Walrasian
equilibrium if exist. 
This in turn, raises the following more finer decision problem: For
gross substitute valuations, as well as $GGS(2,M)$, for a given
$p$ that is dominated by some unknown Walrasian price $p^*$, there was
a way to identify an item $x \in \Omega$ for which $p + 1_x$ is still
dominated by $p^*$. We don't know if this can be done for
$GGS(k,M)$. If one could do this, we would immediately get the desired
ascending auction for $GGS(k,M)$.

\ignore{
 It amounts to the
following generalization of the ``over demanded set'' characterization
of gross substitute.

For a set of valuation $v_i : 2^{\Omega}\mapsto \R_+$, a price vector
$p$ and an arbitrary weight vector on elements $w : \Omega \mapsto
\R_+$ let $f_i(w) = \min\{w(S)|~ S \in D_i(p) \}$. Namely, $f_i(w)$ is the
minimal weight of a demand set for player $i$. Note that $f_i(w)$
depends on $p$ too, but this is omitted in the notation for ease of reading.
Let $f(w) = \sum_{i \in [n]} f_i(w)$, and $w(\Omega) = \sum_{x \in
  \Omega} w(x)$ be the total weight of items.

The following proposition is obvious and true for any set of valuations. It
provides a sufficient condition (not always necessary) for non
existence of envy-free allocation for a given price. 
\begin{proposition}\label{cl:87}
  Let $v_i ~ \in [n]$ be a set of valuations, $p$ a price vector, then
  if there a weighting $w$ such that $f(w) - w(\Omega) > 0$ then $p$
  does not posses an envy-free allocation. $\qed$
\end{proposition}
 
Obviously the proposition is correct, since in an  envy-free allocation,
each player gets a demand set of weight at least $f_i(w)$ and hence
$f(w) \leq(\Omega)$. It is also a direct generalization of the
``over-demanded'' set condition, as if $O^*$ is non-empty, then the
weight $w=1_{O^*}$ namely, that is $1$ for every $i \in O^*$ and $0$
elsewhere provides a weighting that exhibits the impossibility of
envy-free allocation.
}

\ignore{
In what follows we assume that $v_i,~ i=1, \ldots ,n$ are $AT(k,M)$
valuations. For any price vector $p$, let $D_i(p)$ be the demand set
of $i$ restricted to sets of size $k$ or smaller.

If $D_i$ contains no $k$-sets we call $i$ a small player. 
\vspace{1cm}

\textbf{Auction for $AT(k,M)$ valuations:}
\begin{enumerate}
	\item Start with $p=0$
	\item	If there is no allocation for the set of small players $S$,
          increase prices by taking one step of the induced 
          gross substitute auction on these players. Note that as a
          result some small players may cease to be small. Go back to 2.
	\item Let $M$ be a perfect allocation of small players into
          small sets. Augment the matched players into a maximum size
          allocation of players to small sets.
  \item Assume that $n'$ players and $m'$ items remained unmatched. 
    \begin{itemize}
    \item If $k n'\leq m'$ then $p$ is a Walrasian price vector.
    \item Else ($k n' > m'$), increase the price of all items in $MIN$
      by $1$ and go to item 2. with the new price vector and the empty
      allocation.
          \end{itemize}
\end{enumerate}

\begin{theorem}
  \label{thm:at2_auc}
  Let $v_1, \ldots, v_n$ be $AT(k,M)$ valuations. Assume further that
  the valuations admit a Walrasian equilibrium. Then the above auction
  stops at a
  Walrasian allocation, and a Walrasian equilibrium. \qed
\end{theorem}
}

\section{Conclusions}

We present the family of submodular valuation classes $GGS(k,M)$ that
generalize gross substitute. We prove that for some non-trivial
subclasses of it ($GGS(2,M)$) Walrasian equilibrium always exists and
there is a natural ascending auction that reaches this 
Walrasian equilibrium.  This is
contrary to the common belief that only gross-substitute valuations
have these property.  

While open questions w.r.t. $GGS(k,M)$ valuations exist and are
implicitly obvious from the discussion above, we reiterate the most
interesting open problem in the context of submodular valuations. This
is to give a characterization, (or at least to find other significant
families) of valuations  inside submodular, but that are not
gross-substitute and that do have Walrasian equilibrium.  An
additional question in this generality (while less well-defined) is
to present such families that admit {\em natural} ascending-auctions.

\bibliographystyle{plain}
\bibliography{biblio}

\begin{thebibliography}{10}

\bibitem{AAT10}
T.~Andersson, C.~Andersson, and A.J.J. Talman.
\newblock Sets in excess demand in simple ascending auctions with unit-demand
  bidders.
\newblock {\em Annals of Operations Research}, pages 1--10, 2013.

\bibitem{Ausubel2005}
L.~M. Ausubel.
\newblock Walrasian tatonnement for discrete goods.
\newblock Manuscript, 2005.

\bibitem{A04}
L.M. Ausubel.
\newblock An efficient ascending-bid auction for multiple objects.
\newblock {\em The American Economic Review}, 94(5):1452--1475, 2004.

\bibitem{AM00}
L.M. Ausubel and P.~Milgrom.
\newblock {Ascending auctions with package bidding}.
\newblock {\em Frontiers of Theoretical Economics}, 1(1):1--42, 2002.

\bibitem{BVSV08}
S.~Bikhchandani, S.~De~Vries, J.~Schummer, and R.V. Vohra.
\newblock An ascending vickrey auction for selling bases of a matroid.
\newblock {\em Operations research}, 59(2):400--413, 2011.
\newblock Preliminary version in SODA'08.

\bibitem{BikhchandaniMa1997}
S.~Bikhchandani and J.~W. Mamer.
\newblock Competitive equilibrium in an exchange economy with indivisibilities.
\newblock {\em Journal of Economic Theory}, 74(2):385 -- 413, 1997.

\bibitem{Browning1999}
E.~K. Browning.
\newblock {\em Microeconomic Theory \& Applications (6th Edition)}.
\newblock New York: Addison Wesley, 1999.

\bibitem{Colini-BaldeschiHeLeSt2012}
R.~Colini-Baldeschi, M.~Henzinger, S.~Leonardi, and M.n Starnberger.
\newblock On multiple keyword sponsored search auctions with budgets.
\newblock In {\em ICALP (2)}, pages 1--12, 2012.

\bibitem{DGS86}
G.~Demange, D.~Gale, and M.~Sotomayor.
\newblock Multi-item auctions.
\newblock {\em The Journal of Political Economy}, pages 863--872, 1986.

\bibitem{DLN08}
S.~Dobzinski, R.~Lavi, and N.~Nisan.
\newblock Multi-unit auctions with budget limits.
\newblock {\em Games and Economic Behavior}, 74(2):486--503, 2012.
\newblock Preliminary version in FOCS'08.

\bibitem{DobzinskiNiSh2005}
S.~Dobzinski, N.~Nisan, and M.~Schapira.
\newblock Approximation algorithms for combinatorial auctions with
  complement-free bidders.
\newblock In {\em STOC}, pages 610--618, 2005.

\bibitem{Feige2006}
U.~Feige.
\newblock On maximizing welfare when utility functions are subadditive.
\newblock In {\em STOC}, pages 41--50, 2006.

\bibitem{GMP12}
G.~Goel, V.~Mirrokni, and R.~Paes~Leme.
\newblock Polyhedral clinching auctions and the adwords polytope.
\newblock In {\em Proceedings of the 44th symposium on Theory of Computing
  (STOC)}, pages 107--122, 2012.

\bibitem{GulSt1999}
F.~Gul and E.~Stacchetti.
\newblock Walrasian equilibrium without complementarities.
\newblock {\em Journal of Economic Theory}, 83:95--124, 1999.

\bibitem{GulSt2000}
F.~Gul and E.~Stacchetti.
\newblock The english auction with differentiated commodities.
\newblock {\em Journal of Economic Theory}, 92(1):66 -- 95, 2000.

\bibitem{KelsoCr1982}
A.~S. Kelso and V.~P. Crawford.
\newblock Job matching, coalition formation, and gross substitutes.
\newblock {\em Econometrica}, 50(6):1483--1504, 1982.

\bibitem{matching}
M.D.~Plummer L.~Lov{\'a}sz.
\newblock {\em Matching Theory}.
\newblock North Holland, 1986.

\bibitem{Nicholson1998}
W.~Nicholson.
\newblock {\em Microeconomic Theory: Basic Principles and Extensions}.
\newblock The Dryden Press, 1998.

\bibitem{parkes99}
D.C. Parkes.
\newblock {iBundle: An efficient ascending price bundle auction}.
\newblock In {\em the Proceedings of the first ACM Conference on Electronic
  Commerce}, pages 157 -- 166, 1999.

\bibitem{SY06}
N.~Sun and Z.~Yang.
\newblock Equilibria and indivisibilities: gross substitutes and complements.
\newblock {\em Econometrica}, 74(5):1385--1402, 2006.

\bibitem{SY09}
N.~Sun and Z.~Yang.
\newblock A double-track adjustment process for discrete markets with
  substitutes and complements.
\newblock {\em Econometrica}, 77(3):933--952, 2009.

\bibitem{Varian1981}
H.~R. Varian.
\newblock Chapter 3 dynamical systems with applications to economics.
\newblock volume~1 of {\em Handbook of Mathematical Economics}, pages 93 --
  110. Elsevier, 1981.

\bibitem{Walras1874}
L.~Walras.
\newblock {\em {\'E}l{\'e}ments d'{\'E}conomie Politique Pure, ou Th{\'e}orie
  de la Richesse Sociale}.
\newblock Corbaz, Lausanne, 1874.

\end{thebibliography}

\newpage
\appendix 
\vspace{0.7cm}

\section{\small {\bf Proofs for claims in Sections 2 and 3.}}

\vspace{0.5cm} In this section we assume that all valuations and
prices are integral. This is w.l.o.g. as taking any set of valuations
and price vector, scaling all by the same constant does not change the
demand sets. As we assume that the valuations are rationals, and by
scaling integral, scaling large enough  means that prices too can
be rounded up to the closest integral value.  More over, as remarked
in Section \ref{sec:prelim} the optimum for the dual LP, $P2$ is also
integral.

Proof of Lemma~\ref{Structural0}
\begin{proof}
  Let $v$ be a gross substitute valuation, $p$ a price vector and let
  $D_1,D_2\in D^*_v(p)$ be two minimal demand sets. 
Let $p'$ be the price vector defined by $p'(j)=p(j)\ \forall
  j\in D_1\cup D_2$ and $p'(j) = \infty$ else. Note that $D_1,D_2$ are
  minimal demand sets on $p'$ as well. Let $j_2\in D_2\setminus D_1$
  and define $p'_{2} = p'+1_{j_2}$. Clearly
  $D_2\notin D(p'_{2})$, since the
  utility of $D_1$ is bigger for this price vector. Note also that
  $u_{p'_{2}}(D_2) = u_p(D_2) -1$.  By the single
  improvement property there exists $D_3$ such that $|D_3\setminus D_2|\leq
  1$, $|D_2\setminus D_3|\leq 1$ and the utility of $D_3$ on
  $p'_{2}$ is grater than $D_2$. Hence $u_{p'_{2}}(D_3) > u_{p'_{2}}(D_2) =
  u_p(D_2) -1$ which implies (by the integrality assumption) that
  $u_p(D_2) \leq u_{p'_{2}}(D_3) \leq u_p(D_3)$. In particular it follows
  that $j_2 \notin D_3, ~D_3 \in D_p$ and (by minimality of
  $D_2$ w.r.t. inclusion) $D_3 \setminus D_2 \neq \emptyset$.

Hence $\exists j_1\in D_1\setminus D_2$ such
  that $j_1\in D_3$. We conclude that $D_3 = D_2 \cup \{j_1\} \setminus \{j_2\}$,
  for $j_1\in D_1\setminus D_2$. Note that this conclusion is exactly
  the replacement property (aka the exchange principle) of a
  Matroid. By induction on $|D_2\setminus D_1|$ this also implies that
  $|D_1|=|D_2|$. 

Formally, the only thing left to prove  is that the set $D^*(p)$ is
not empty. Indeed, for each $p$, the demand set is well defined and hence the
  minimal demand set is not empty.
\end{proof}

Proof of Lemma~\ref{Structural2}
\begin{proof}
Clearly $u_p \leq u_{p'} + f_p(S)$ since a set $D$ for which $|S\cap
D|=f_p = f$,
guaranties a utility $u_{p'}(D)  = u_p(D) -f$. Assume that $u_p < u_{p'} + f$, that is, there exists a set $S'$ that gives a better utility in $p'$. The next claim, implicitly suggests that there exists a base $B$ in $p$, such that $u_{p'}(B) \geq u_{p'}(S')$. Since this holds for every $S'$, and $u_{p'}(B)\leq u_p - f$ the lemma follows.
\end{proof}
\begin{claim} [Utility Distance]
\label{utilDist}
For a gross substitute valuation, let $p$ be a price vector and
$u_p(S) + l = u_p$, that is, the utility of $S$ is smaller than optimum
utility by exactly $l$. Then, there are two sets $R, D$ such that $D
\subseteq S\cup R$, $|R| \leq l$, $D\in D(p)$.
\end{claim}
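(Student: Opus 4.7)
The plan is to induct on $l$, using the single improvement characterization of gross substitute (which, since we work under the integrality assumption, lets us assume each improvement step strictly increases utility by at least $1$).

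For the base case $l=0$ we have $u_p(S)=u_p$, so $S\in D(p)$ itself and we simply take $D=S$ and $R=\emptyset$. For the inductive step, assume the claim holds for all sets whose utility is within $l-1$ of the optimum. Since $u_p(S)<u_p$, $S\notin D(p)$, so by the single improvement property there is a set $T$ with $u_p(T)>u_p(S)$, $|T\setminus S|\le 1$, and $|S\setminus T|\le 1$. Integrality gives $u_p(T)\ge u_p(S)+1$, and hence $u_p(T)\ge u_p - (l-1)$.

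By the inductive hypothesis applied to $T$, there exist $R'$ and $D\in D(p)$ with $D\subseteq T\cup R'$ and $|R'|\le l-1$. Now set $R=(T\setminus S)\cup R'$. Then
$$D\subseteq T\cup R' \subseteq S\cup(T\setminus S)\cup R' = S\cup R,$$
and since $|T\setminus S|\le 1$ we get $|R|\le 1+(l-1)=l$, as required.

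The only subtlety, and the main thing one has to be careful with, is the integrality of utilities: the single improvement property by itself only guarantees a strict increase in utility, not an increase by $1$. This is why the claim is stated (and needs to be applied) under the integrality convention of Section \ref{sec:prelim}; one should check that the improving set $T$ indeed satisfies $u_p(T)\ge u_p(S)+1$ so that the inductive parameter decreases by at least one step. Once this is in place the argument is straightforward and no further machinery (Matroidity, Lyapunov, etc.) is needed.
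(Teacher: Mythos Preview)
Your proposal is correct and follows essentially the same route as the paper's own proof: induction on $l$, base case $D=S$, $R=\emptyset$, and for the step invoke the single improvement property to pass to a set $T$ with $|T\setminus S|\le 1$, apply the inductive hypothesis to $T$, and enlarge $R'$ by the (at most one) new element. Your explicit remark about needing integrality so that the deficit drops by at least $1$ is a useful clarification that the paper leaves implicit (it is assumed at the start of the appendix), but otherwise the arguments coincide.
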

\begin{proof}
  The proof is by induction on $l$. For $l=0$ the claim holds by $D=S$
  and $R = \emptyset$.  Assume that $l \geq 1$ and $u(S) + l = u_p$.
By the single improvement property there is a set $T$ and an element $j$, such that $u(T) >
  u(S)$ and $T\setminus S \subseteq  \{j\}$. By the
  induction hypothesis, there are two sets $R', D$ such that $D
  \subseteq T\cup R'$, $|R'| \leq u_p - u(T)$, and $D\in D(p)$.  If we now
  take $R = R'\cup \{j\}$ (or $R'=R$ if $T \subseteq S$) we will get
  the two sets $R, D$ as needed, as 
  $D \subseteq S\cup R$, and $|R| \leq u_p - u(T) + 1 \leq u_p - u(S)
  \leq l$.
\end{proof}

Lemma \ref{Structural0} provides information on the demand set at a
price vector $p$. It does not, however, characterize the gross
substitute valuations $v$. In
particular, it does not specify how the demand set $D_i(p)$ changes
when $p$ changes. The next 
lemma go in this later
direction.

\begin{lemma}
\label{Structural1}
Let $p$ be an integer price vector and let $p'=p+1_j$ for some element
$j$, then for a gross substitute valuation $v$, $D^*(p')$ determined
by $D^*(p)$ as follows:
\begin{itemize}
\item If $j$ is not in all members  of
$D^*(p)$, then $D^*(p')$ contains the members of $D^*(p)$ that do not
contain $j$.
\item If $j$ is in all members of $D^*(p)$
then there might be two cases: In the first, $D^*(p') = \{S
\setminus \{j\}|~ S\in
D^*(p)  \}$, namely, $j$ is deleted from any member of
$D^*(p)$, to result in $D^*(p')$. 

In the other case, $D^*(p') = D^*(p) \cup D'$ where $D'$ is a
collection of some new sets, each of the form  $B\cup\{j'\}\setminus
\{j\}$, for an old  $B \in D^*(p)$ and an
element $j'\neq j$.

We note that in Matroid notations, these possibilities are as
follows. In the first case, the Matroid $M(p')$ is result of the
contraction of $j$ from the Matroid $M(p)$.
 
For the 2nd case: In the first possibly, $M(p')$ is the result of
deleting $j$ from $M(p)$. The 2nd possibly is more complicated, in
this case the rank of the Matroid $M(p)$ does not drop, and some new
bases are added.
\end{itemize}  
\end{lemma}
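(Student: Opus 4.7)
The plan is to exploit the identity $u_{p'}(T) = u_p(T) - [j \in T]$, which holds for every $T$ because $p$ and $p'$ agree off $j$. Once the new optimum utility $u_{p'}$ is determined, membership in $D(p')$ reduces to a utility condition at $p$, and the matroid structure from Lemma \ref{Structural0} controls minimality. Lemma \ref{Structural2} applied to the singleton $\{j\}$ gives $u_p = u_{p'} + f_p(\{j\})$ with $f_p(\{j\}) \in \{0,1\}$, and the two top-level cases of the lemma correspond exactly to these two values.

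Case 1 ($j$ omitted by some $B \in D^*(p)$) is direct: $u_{p'} = u_p$, so any demand set at $p'$ containing $j$ would require $u_p(T) = u_p + 1$, which is impossible. Hence every $B^* \in D^*(p')$ omits $j$, utilities agree at $p$ and $p'$ on $B^*$ and on all its subsets, and minimality transfers intact in both directions. In Case 2 every $B \in D^*(p)$ contains $j$ and $u_{p'} = u_p - 1$, so every such $B$ lies in $D(p')$ automatically; the two sub-cases are distinguished by whether some $S \in D^*(p)$ satisfies $u_p(S \setminus \{j\}) = u_p - 1$ or whether every $S$ satisfies $u_p(S \setminus \{j\}) \leq u_p - 2$.

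In the first sub-case, $S \setminus \{j\} \in D(p')$, so by Lemma \ref{Structural0} the matroid $M(p')$ has rank $r - 1$, where $r$ is the common basis size of $M(p)$. For each $B \in D^*(p)$, $B$ strictly contains a basis of $M(p')$ of size $r - 1$, necessarily of the form $B \setminus \{x\}$; if $x \neq j$ then $u_p(B \setminus \{x\}) = u_p$ would put $B \setminus \{x\}$ in $D(p)$ and contradict minimality of $B$, forcing $x = j$. Conversely, any $B^* \in D^*(p')$ has size $r - 1$ (hence $j \notin B^*$, since otherwise $B^* \in D(p)$ with size below the basis size), and $u_p(B^*) = u_p - 1$; Claim \ref{utilDist} with $l = 1$ yields $D \in D(p)$, $D \subseteq B^* \cup R$, $|R| \leq 1$, and the requirements $j \in D$ and $j \notin B^*$ pin down $R = \{j\}$ and $|D| = r$, so $D \in D^*(p)$ and $B^* = D \setminus \{j\}$. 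In the second sub-case, $u_{p'}(B \setminus \{j\}) \leq u_p - 2 < u_{p'}$ excludes $B \setminus \{j\}$, the same minimality argument excludes $B \setminus \{x\}$ for $x \neq j$, and Claim \ref{utilDist} excludes proper subsets of $B$ of size below $r - 1$, so $D^*(p) \subseteq D^*(p')$ with basis size still $r$. For $B^* \in D^*(p') \setminus D^*(p)$, assuming $j \in B^*$ would give $B^* \in D^*(p)$ by the size-$r$ argument, hence $j \notin B^*$; Claim \ref{utilDist} once more produces a basis $B \in D^*(p)$ with $B \subseteq B^* \cup \{j\}$, and equality of sizes identifies $B^* = (B \setminus \{j\}) \cup \{j'\}$ for the unique $j' \in B^* \setminus B$.

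The main obstacle I anticipate lies in the first sub-case of Case 2: passing from a single witness $S$ with $u_p(S \setminus \{j\}) = u_p - 1$ to a uniform characterization over every basis of $M(p)$. This is resolved by invoking Lemma \ref{Structural0} on $D^*(p')$ to force every basis of $M(p')$ to have the common size $r - 1$, after which the routine exclusion arguments apply uniformly. Without this matroid rank constraint one would be tempted to derive individual utility identities via the single-improvement property, which only controls local exchanges and appears substantially more delicate.
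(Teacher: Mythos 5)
Your proof is correct, and it is in fact tighter than the paper's own argument, though it travels a somewhat different route. The paper's proof handles the second bullet by taking a ``new base'' $A$, noting $j\notin A$ and $u_p(A)=u_p-1$, and applying the single improvement property directly to produce $C\in D(p)$ with $C\setminus A=\{j\}$, splitting on $|A|<|B|$ versus $|A|=|B|$; it is terse and leaves several things implicit (that the two possibilities cannot mix globally, that the exchange partner can be taken minimal, and the converse inclusions). You instead organize the case split via Lemma~\ref{Structural2} on the singleton ($f_p(\{j\})\in\{0,1\}$), replace the direct single-improvement exchange by Claim~\ref{utilDist} with $l=1$, and use the equal-cardinality property from Lemma~\ref{Structural0} at $p'$ to enforce the global dichotomy (rank drops to $r-1$ versus stays at $r$), which is exactly the gluing step the paper's per-base argument glosses over; since Claim~\ref{utilDist} is itself a consequence of single improvement, the two proofs bottom out in the same structural facts, but yours yields the exact description of $D^*(p')$ in all cases rather than only the shape of new bases. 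One presentational quibble: when you assert that in the first sub-case $M(p')$ has rank exactly $r-1$ ``by Lemma~\ref{Structural0}'', the cited lemma only gives equal basis sizes and the witness $S\setminus\{j\}\in D(p')$ only gives the upper bound $r-1$; the matching lower bound (no basis of $M(p')$ can have size below $r-1$) needs the same Claim~\ref{utilDist} argument you deploy two sentences later (a basis $B^*$ with $j\in B^*$ would lie in $D(p)$ and have size at least $r$, and one with $j\notin B^*$ satisfies $D\subseteq B^*\cup\{j\}$ for some $D\in D(p)$ of size at least $r$), so the gap is one of ordering rather than substance and is closed by material already in your write-up.
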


\begin{proof}
  Clearly if for some $B\in D^*(p)$, $j\notin B$ then $B\in D^*(p')$.
  Assume $j$ is in every base of $D^*(p)$. Note that it follows that
  every set in $D(p)$ also contains $j$. The utility of a base went
  down by exactly $1$ with the increase of $j$'s price. In particular,
  by the integrality assumption, 
  every base in $D^*(p)$ is still in $D(p+1_j)$. Let $A$ be a
  new base. Since $j$ is in every demand set for $p$ we know that
  $j\notin A$ and $|A|\leq |B|$ for all bases $B$. More over, it
  follows that $u_p(A) = u_p(B) -1$. 

Since $A \notin D(p)$,
we apply the single improvement property to obtain a set $C$
  with a strictly grater utility and with at most one element more
  than $A$. Since $u_p(A) = u_p(B) -1$ it follows that  $u_p(C) =
  u_p(B)$, namely $C \in D(p)$ and $C\setminus A = \{j\}$.  Now either
  $|A| < |B|$ and hence we conclude that $A = B\setminus \{j\}$ for
  some base $B$. Or $|A| = |B|$ and hence $A =
  B\cup\{j'\}\setminus \{j\}$ for some base $B$ and for every $B \in
  D(p),~ B \in D(p+1_j)$. 
\end{proof}

We now prove Lemma~\ref{LSM}. As a first step we claim
that the following definition is an equivalent definition for functions
 'sub-modularity' in the context above. We note that this in complete
 analogy with the standard relation between submodular functions on the
 Boolean cube and the standard decreasing  marginal property.

\begin{definition}
An integer function $\alpha$ on $p$  has the '\emph{decreasing marginal return}' property if for every $p$ 
and $x,y \in \Omega$,  $$\alpha (p+ 1_x ) +
\alpha(p+ 1_y) \geq \alpha(p) + \alpha(p+  1_x + 1_y)$$.
\end{definition}

\begin{claim}
An integer function $\alpha$ is submodular if and only if $\alpha$ has the decreasing
marginal return property.
\end{claim}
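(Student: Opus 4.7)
The plan is to prove both directions of the equivalence separately, with the reverse direction being where the real work lies.

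For the forward direction (submodular implies decreasing marginal returns), I would simply specialize the submodularity inequality to the price vectors $p' = p + 1_x$ and $q' = p + 1_y$. When $x \neq y$, one checks coordinate-wise that $\max(p',q') = p + 1_x + 1_y$ and $\min(p',q') = p$, so the submodularity inequality $\alpha(\max(p',q')) + \alpha(\min(p',q')) \leq \alpha(p') + \alpha(q')$ is exactly the decreasing marginal return property. The case $x = y$ is trivial since both sides then coincide.

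For the reverse direction (decreasing marginal returns implies submodularity), I would first prove a strengthening of the marginal property: for any two price vectors $p' \leq p''$ and any element $x$,
\[
\alpha(p'' + 1_x) - \alpha(p'') \;\leq\; \alpha(p' + 1_x) - \alpha(p').
\]
This is proved by induction on $\|p'' - p'\|_1$. The base case $\|p'' - p'\|_1 = 1$ is exactly the decreasing marginal return hypothesis applied to $p'$ and the coordinate $y$ in which $p''$ exceeds $p'$. The inductive step writes $p'' = \tilde p + 1_y$ for some $\tilde p$ with $p' \leq \tilde p \leq p''$ and chains the induction hypothesis with one more application of decreasing marginals.

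Given this strengthening, I would prove submodularity by a telescoping argument. Set $r = \min(p,q)$ and $R = \max(p,q)$; a coordinate-wise check gives $R - q = p - r \geq 0$. Writing this non-negative integer vector as a sequence of unit increments $1_{x_1}, 1_{x_2}, \ldots, 1_{x_K}$ (with repetitions allowed), I can express $\alpha(R) - \alpha(q)$ and $\alpha(p) - \alpha(r)$ as telescoping sums of single-coordinate marginals taken along these two parallel integer paths. Since the path from $q$ to $R$ lies coordinate-wise above the path from $r$ to $p$ (shifted by the non-negative vector $q - r$), the strengthened inequality applied term-by-term yields $\alpha(R) - \alpha(q) \leq \alpha(p) - \alpha(r)$, which rearranges to submodularity.

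The main obstacle is the strengthened marginal inequality for general $p' \leq p''$; once that is established, the telescoping comparison is routine. A minor subtlety to keep in mind is that the unit-vector decomposition of $R - q$ may have repetitions, so the decreasing-marginal hypothesis must be read as allowing $x = y$ (equivalently, the inductive step must handle incrementing the same coordinate twice); the definition stated in the excerpt accommodates this without extra work.
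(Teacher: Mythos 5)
Your overall plan (trivial forward direction; reverse direction via a monotone-marginal lemma plus telescoping along parallel integer paths) is viable, but there is a genuine error in how you treat the case $x=y$, and it affects both directions as you wrote them. The equivalence is only true when the decreasing-marginal inequality is read for \emph{distinct} items $x\neq y$. Applying lattice submodularity to the pair $p+1_x,\,p+1_x$ gives only the tautology $2\alpha(p+1_x)\leq 2\alpha(p+1_x)$; it does not give $2\alpha(p+1_x)\geq \alpha(p)+\alpha(p+2\cdot 1_x)$, so your remark that for $x=y$ ``both sides coincide'' is false. Indeed no coordinate-wise concavity follows from submodularity: with a single item every function is lattice submodular (the max and min of two numbers are just those two numbers), yet $\alpha(p)=p^2$ violates the $x=y$ inequality; closer to home, the Lyapunov function to which the paper applies this claim is a maximum of linear functions plus a linear term, hence coordinate-wise convex, and it too fails the $x=y$ inequality in general. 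So your stated reading ``the definition accommodates $x=y$ without extra work'' cannot be kept, and the claim must be proved under the $x\neq y$ reading.

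This matters for your reverse direction because your strengthened lemma, as stated for every $x$ and every $p'\leq p''$, is false when only the $x\neq y$ inequalities are available: its base case with $p''=p'+1_x$ is exactly the coordinate-concavity statement above. Fortunately your telescoping never needs that case. The increments $x_k$ run over coordinates where $p>q$ (the support of $p-\min(p,q)=\max(p,q)-q$), while the two parallel base points differ by $q-\min(p,q)$, which is supported on coordinates where $q>p$; these supports are disjoint, so the two bases always agree in the coordinate being incremented. Restate the lemma as: if $p'\leq p''$ and $p'(x)=p''(x)$, then $\alpha(p''+1_x)-\alpha(p'')\leq \alpha(p'+1_x)-\alpha(p')$; its induction then only ever increments coordinates $y\neq x$ and uses only legitimate instances of the hypothesis. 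With these repairs your argument is correct, and it is essentially a modular reorganization of the paper's proof: the paper runs a single induction on $|p-q|_1$, peeling off one unit of $p$ at a time and quietly invoking the same monotone-marginal fact that you isolate explicitly, whereas you prove that fact separately and then telescope. Your version is arguably cleaner, but the ``repetition'' issue you flag is a red herring: repeated increments of the same coordinate along the path are handled entirely by $x\neq y$ instances, since the comparison at each step is across bases that agree in the incremented coordinate.
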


\begin{proof}[Of the claim]
As usual, note that having the decreasing marginal property is a
sub case of sub-modularity. Here we prove the 'other' direction.

Assume that $\alpha$ has the decreasing marginal return property. We prove
that $\alpha$ is submodular on every integer $p,q$ by induction on
$|p-q|_1$. For $|p-q|_1 = 2$, this is just the decreasing marginal
return. Assume then that $|p-q|_1  \geq 3$, and assume w.l.o.g that
$p(x) > q(x)$.
Then, by the induction assumption,
\begin{equation}
\alpha(\max(p-1_x, q)) + \alpha(\min(p,q)) \leq \alpha(p-1_x) + \alpha(q)
\end{equation}
This is so as $|(p-1_x)-q|_1 < |p-q|_1$, and $\min(p,q) = \min(p-1_x, q)$.
Equivalently, we get
$$\alpha(\max(p,q)) - (\alpha(\max(p,q)) - \alpha(\max(p-1_x, q))) +
\alpha(\min(p,q)) $$
$$ \leq
\alpha(p-1_x) + \alpha(q)
$$
However, by the deceasing marginal return, the parenthesis in the left
hand side is less than $\alpha(p) - \alpha(p-1_x)$, (by taking $s=\max(p,q) >
p$). Plugging this to the left hand side implies that
$$ \alpha(\max(p,q)) + \alpha(\min(p,q)) \leq \alpha(p) + \alpha(q)$$
\end{proof}

\begin{proof}[Of Lemma~\ref{LSM}:]
We prove that $\Lyap(\cdot)$ has the decreasing marginal return property. 
Putting the definition in equivalent form, we need to prove that for
every $p$, $\Lyap(p +1_x)-\Lyap(p) \geq \Lyap(p +1_y + 1_x)
- \Lyap(p+1_y)$.

Note first that $\Lyap(p + 1_x) - \Lyap(p) \leq 1$ for every price function
$p$. Namely, by increasing a
price of an item by $1$, the Lyapunov can increase by at most $1$,
since utilities may just decrease. 

Thus, by the observation above, if $\Lyap(p + 1_x)-\Lyap(p)=1$ there is
nothing to prove. We may assume then (by the integrality assumption) that $\Lyap(p + 1_x)-\Lyap(p) =
-r$ for $r \geq 0$. Consider now Lemma \ref{Structural1}. For a player $v_i$ that has in
$D^*_i(p)$ both a base containing $x$ and one that does not contain
$x$, $u_i$, the utility of $i$ does not change by increasing the
price of $x$. Thus if $r \geq 0$, it
must be that at least for one player  $x$ is in every base. Moreover,
by Lemma \ref{Structural1}, for each such player, the utility decreases
by exactly $1$. Hence there must be exactly $r+1$ players for which every
base contains $x$ w.r.t. $p$. Again, by Lemma \ref{Structural1}, for each
such player, moving from $p$ to $p+1_y$ does not change this fact
(as a result, some new players may have the same situation,  namely
that $x$ is in every base w.r.t. $p+ 1_y$, but certainly the old ones
remain). Hence, moving from $p+ 1_y$ to $p+1_y + 1_x$, the utility
goes down by at least $r+1$, and the Lyapunov goes down by at least
$r$ ending the proof.
\end{proof}

Proof of Observation~\ref{obs:obstacle}
\begin{proof}
 In every allocation, the $i$th player  gets at least $f_{i,p}(S)$
 elements from $S$, since every demand set of it intersects $S$ by
 at least this amount. Hence, as an allocation is an
 assignment of disjoint sets, the total number of elements allocated
 from $S$ is $s= \sum_i f_{i,p}(S)$. Thus if $f(S) > 0$, namely $s> |S|$ there is no allocation.
\end{proof}

Proof of Lemma~\ref{mon}
\begin{proof}
  All players that have a base not containing $j$ have less sets to
  select a minimum from.  For players for which the size of members of
  $D^*(p+1_j)$ went down w.r.t. to $D^*(p)$, $f$ did not change since $j\notin S$.  For
  players which the size of members in $D^*(p+1_j)$ did not change,
  (i.e., have new bases where $D^*()$ is viewed as bases of a Matroid)
  , since these new bases are super-sets of the
  old bases without the element $j$, the $f$'s value did not went down
  too.
\end{proof}

We prove the equivalence (corollary~\ref{cor:equiv}) now
\begin{proof}
  For an auction on gross substitute valuations, that starts with
  $p_0=0$, let $O^*_p, S^*_p$ be the corresponding sets in
  Gul-Stacchetti's and Ausubel's auctions. All we need to prove is
  that at every step $O^*_p = S^*_p$.

  Let $p_w$ be a minimal price supporting a Walrasian
  equilibrium. Assume that for a price vector $p \leq p_w$, $O^*_p
  \neq S^*_p$, then since by the definition of Ausbel auction, $S^*$
  is the set that minimize $\Lyap(p + 1_{S^*})$, either $\Lyap
  (p+1_{S^*}) < \Lyap (p+1_{O^*})$, or equality holds. Denote $p_s=
  p+1_{S^*}$, and $p_o = p+1_{O^*}$. In the first
  case, by the Lyapunov definition, $p(\Omega) + |S^*| +
  \sum_i u_{i,p_s} < p(\Omega) + |O^*| +
  \sum_i u_{i,p_o}.$  By Lemma \ref{Structural2} we have that
  $|S^*| - \sum_i{f_{i,p}(S^*)} < |O^*| - \sum_i{f_{i,p}(O^*)}$. This
  implies that $f_p(O^*) < f_p(S^*)$ in a contradiction to the
  maximality of $f(O^*)$. If, on the other hand, $L(p+1_{S^*}) =
  L(p+1_O^*)$ we will get by the same reasoning a contradiction to the
  minimality of $|O^*_p|$, since if $f(O^*) = f(S^*)$, by the 'minimal
  minimizer' definition and uniqueness of $S^*$, it would imply that
  $|S^*|<|O^*|$.
\end{proof}

\ignore{
\section{Gross Substitute Auction}
We show an ascending algorithm that finds the minimum price vector when the valuations
are Gross-Substitute.\\
\textbf{Auction:}
\begin{itemize}
	\item Start with $p=0$
	\item	While $O^* \neq \emptyset$ increase $p(j)$ by $1$ for some $j\in O^*$
\end{itemize}
In order to prove that this algorithm finds an optimal price vector we will show it never
goes above any optimal price vector and when it stops the dual objective value is at-most
the value of the optimal price dual objective.\\
%
Let $p^*$ be an optimal price vector, $p_t$ the current price vector and $T$ the last step of the algorithm, that is, we need to show $p^* = p_T$. The next proposition, which its proof is in the appendix, shows exactly that.
\begin{proposition}[Upper Bound] \label{upper:bound}
$p_t\leq p^*$ for all values $t$.\footnote{Note that an analog proposition in Gul and Stacchetti was incorrect.}
\end{proposition}
\begin{proof}
First assume it is false and let $t$ be the last time it holds and look on $O^*_t$.
If the price of all elements in $O^*_t$ already reached their maximum price,
then by the monotonicity lemma~[\ref{mon}], $f_{p_t}(O^*_t) \leq f_{p^*}(O^*_t)$ a contradiction.
Else there exists some elements in $O^*_t$ with a smaller price, denote the set of all
this elements as $O^{\neq}$. Denote the set of the other elements $O^{=}$, that is,
$O^*_t$ is a disjoint union of the nonempty sets $O^{\neq}$ and $O^{=}$.

Let $p$ be the last price that is smaller than $p^*$ the optimum. Let $p'$ be $p+1_{O^*}$.
By def. $max(p',p^*) = p^*+1_{O^=}$, $min(p',p^*) = p+1_{O^{\neq}}$
Hence, by the sub-modularity lemma, $\Lyap(p^*+1_{O^=}) + \Lyap(p+1_{O^{\neq}}) \leq \Lyap(p') + \Lyap(p^*)$.
By the optimality of $p^*$, $\Lyap(p^*) \leq \Lyap(p^*+1_{O^=})$, so together we conclude that
$\Lyap(p+1_{O^{\neq}}) \leq \Lyap(p')$. Writing the last inequality by Lyapunov definition is
$p(\Omega) + |O^{\neq}| + \sum_i{u_{p+1_{O^{\neq}}}(i)}\leq p(\Omega) + |O^*| + \sum_i{u_{p+1_{O^*}}(i)}$.
Now by the Lemma \ref{structural2} we can conclude that
$|O^{\neq}| + \sum_i{f_{i,p}(O^{\neq})}\leq |O^*| + \sum_i{f_{i,p}(O^*)}$, hence $f_p(O^*) \leq f_p(O^{\neq})$
which is a contradiction either to the minimality of $O^*$ or to the maximality of $f_p(O^*)$.
\end{proof}
\begin{proposition} [Optimality]
$Lyapunov(p_T) \leq Lyapunov(p^*)$.
\end{proposition}
\begin{proof}
By the upper bound proposition $p_T\leq p^*$. Let $j$ be an element for which $p_T(j)<p^*(j)$,
we will show that $Lyapunov(p_T + 1_j) \geq Lyapunov(p_T)$, hence $Lyapunov(p_T) \leq Lyapunov(p^*)$.
Since $T$ is the last step $f_T(\{j\}) \leq 0$, therefore there is at-most one player for
which $j$ is in every base. The utility of a player that $j$ is in each of its bases went
down by exactly $1$. All other players have the same utility, hence the Lyapunov did
not decreased.
\end{proof}
}
Proof that the price never goes too high (proposition~\ref{upper:bound})
\begin{proof}
Assume that the statement is false and let $t$ be the last time for which
$p=p_t \leq p^*$.  Let $O^*_t$ be the corresponding obstacle.
If the price of all elements in $O^*_t$ already reached their maximum price,
then by Lemma~[\ref{mon}], $f_{p}(O^*_t) \leq
f_{p^*}(O^*_t)$ a contradiction to the fact that there is an envy-free
allocation w.r.t. $p^*$ (recall Observation \ref{obs:obstacle}).

Hence we may assume that there exists some elements in $O^*_t$ for
which $p$ assigns a smaller price than $p^*$. Denote the set of all
these elements as $O^{\neq}$. Denote the set of the other elements $O^{=}$, that is,
$O^*_t$ is a disjoint union of the  sets $O^{\neq}$ and $O^{=}$.

 Let $p'$ be $p+1_{O^*}$, then $\min(p',p^*) = p+1_{O^{\neq}}$ and
 $\max(p',p^*) = p^*+1_{O^=}$.
 Lemma \ref{LSM} (sub-modularity) implies that $\Lyap (p^*+1_{O^=}) +
\Lyap (p+1_{O^{\neq}}) \leq \Lyap (p') + \Lyap (p^*)$.
By the optimality of $p^*$, $\Lyap (p^*) \leq \Lyap
(p^*+1_{O^=})$ which implies that
$\Lyap (p+1_{O^{\neq}}) \leq \Lyap (p')$. Expanding the definition of
Lyapunov in the last inequality we get, 
$p(\Omega) + |O^{\neq}| + \sum_i u_{i,p+1_{O^{\neq}}} \leq p(\Omega)
  + |O^*| + \sum_i u_{i,p+1_{O^*}}$.

Now by the Lemma \ref{Structural2} we conclude that
$|O^{\neq}| - \sum_i{f_{i,p}(O^{\neq})}\leq |O^*| - \sum_i{f_{i,p}(O^*)}$, hence $f_p(O^*) \leq f_p(O^{\neq})$
which is a contradiction either to the minimality of $O^*$ or to the maximality of $f_p(O^*)$.
\end{proof}

\subsection{$GGS(2,M)$ is not gross substitute}\label{appen:GS2}
For a concrete small example showing that $GGS(2,M)$ is not contained in gross
substitute, consider the following valuation which is in $GGS(2,4)$. There
are three items $a,b,c$ where $v(a)=v(b)=2$, $v(c)=4$. Consider now
the price vectors $p(a,b,c)=(0,1,2)$ and $q(a,b,c)=(2,1,2)$. The set
$\{a,b\}\in D(p)$ since its utility is $3$ which is maximum, but in
$D(q)$ there is no set containing the element $b$ contrary to the
definition of gross substitute valuations (see Definition
\ref{def:gs}).

\subsection{Last case for the proof of Lemma \ref{lem:ev2}}

We present here the proof of the Lemma for the case where
$MIN=\{x\}$, Namely $|MIN|=1$. In this case, let $M$ be a maximum
matching in the graph $G$. We may assume that $M$ is not perfect
w.r.t. $P \cup S$ as otherwise, it would define an envy free
allocation.

By Fact \ref{fact1} there is a cover of the edges of $G$ $C_1 \cup
C_2$, where $C_1 \subseteq (S \cup P), ~ C_2 \subseteq (A \cup
\{x\})$, and $|C_1| + |C_2| =|M|$. Also, since the players in $C_1$
are matched into $(A \cup \{x\} \setminus C_2$, it follows that 
\begin{equation}
  \label{eq:137}
 |C_1| \leq |(A \cup \{x\} \setminus C_2|
\end{equation}

 Moreover, since we assume that $|M|
< |S \cup P|$, it follows that 
\begin{equation}\label{eq:e311}
|C_2| < |(S \cup P) \setminus C_1|
\end{equation}
Assume first that $x \in C_2$. In that case, increasing prices of
items in $C_2$ by small enough $\epsilon$ increases the total price by
$\epsilon |C_2|$ while the utility of every player in $(S \cup P)
\setminus C_1$ decreases by $\epsilon$ (we use here the fact that
utility of pairs decreases by at least $\epsilon$ as $x \in C_2$). By
Equation (\ref{eq:e311}) we conclude that the Lyapunov has decreased. 

Consider now the case where $x \notin C_2$. In this case either $m \leq
n$ (where $m = |\Omega|$), or $m > n$. In the first case the
increase of the price in every item in $A$ by $\epsilon$ obviously decreases
the Lyapunov (since the utility from of every player decreases by
exactly $\epsilon$). In the later case ($m > n$), decreasing the price of
every item in $A$ by $\epsilon$ decreases total price by $\epsilon
(m-1)$, while every player's utility increases by exactly
$\epsilon$. Hence the total Lyapunov does not increase. 
\end{document}